\documentclass[11pt,letterpaper]{article}
\usepackage{amsmath,amsfonts,amsthm, mathrsfs,mathtools}
\usepackage{dsfont}
\usepackage{geometry}
\numberwithin{equation}{section}   
\usepackage{booktabs} 
\usepackage[ruled,noend]{algorithm2e} 
\usepackage{multirow}
\usepackage{natbib}
\usepackage{tikz}
\usepackage{pgfplots}
\pgfplotsset{compat=1.18}

\SetAlFnt{\small}
\SetAlCapFnt{\small}
\SetAlCapNameFnt{\small}
\SetAlCapHSkip{0pt}
\IncMargin{-\parindent}

\usepackage[thinc]{esdiff}

\numberwithin{equation}{section}
\usepackage{packages/macros-common}
\usepackage{packages/macros-project-specific}
\usepackage{packages/mytheorems}

\usepackage[suppress]{color-edits}
\addauthor{ss}{magenta} 
\addauthor{gp}{blue}

\renewcommand\thefootnote{\textcolor{blue}{\arabic{footnote}}}

\allowdisplaybreaks
\title{Single-Sample Bilateral Trade with a Broker}

\newcommand{\country}[1]{#1.}
\newcommand{\city}[1]{#1}
\newcommand{\institution}[1]{#1}
\newcommand{\email}[1]{Email: \texttt{#1}}
\newcommand{\affiliation}{\thanks}

\newcommand\extrafootertext[1]{%
    \bgroup
    \renewcommand\thefootnote{\fnsymbol{footnote}}%
    \renewcommand\thempfootnote{\fnsymbol{mpfootnote}}%
    \footnotetext[0]{#1}%
    \egroup
}

\author{
 {MohammadTaghi Hajiaghayi
 \affiliation{
   \institution{University of Maryland}
   \city{College Park}
   \country{USA}
 \email{hajiagha@umd.edu}
 }}
 \and
 {Gary Peng
 \affiliation{
   \institution{University of Maryland}
   \city{College Park}
   \country{USA}
 \email{gpeng1@terpmail.umd.edu }
 }}
 \and
 {Suho Shin
 \affiliation{
   \institution{University of Maryland}
   \city{College Park}
   \country{USA}
 \email{suhoshin@umd.edu}
 }}
}

\date{}
\begin{document}

\maketitle

\begin{abstract}

We initiate the study of single-sample bilateral trade with a broker, drawing an analogy to the setting of single-sample bilateral trade \textit{without} a broker considered in~\cite{babaioff2020bulow} and~\cite{cai2023fixed}. Our model captures the three-sided interaction in which a broker mediates trade between a buyer and seller, each described by a valuation distribution from which a single sample can be drawn.

We consider two settings in particular: one where the valuation distributions of the buyer and seller are identical and one where the valuation distributions are stochastically ordered. We analyze simple mechanisms that rely only on a single sample from each agent's distribution and show that these mechanisms achieve constant-factor approximations to the first-best gains-from-trade (GFT), first-best social welfare (SW), and optimal profit under the standard monotone-hazard-rate assumption. We then complement these results with matching or nearly matching upper bounds on the GFT and SW of our mechanisms. Notably, in both settings, we observe fairly small losses in the approximation factors to the first-best GFT and first-best SW due to the existence of the broker (benchmarked against the corresponding approximation factors in the setting without a broker).
Furthermore, our results stand in stark contrast to those of~\cite{hajiaghayi2025bilateral}, who show inapproximability results under a strategic broker with full distributional knowledge.

Our results provide insight into the design of data-efficient brokerage mechanisms for online marketplaces and decentralized trading platforms, where intermediaries must facilitate trade under severe informational constraints. They highlight how even minimal data can enable robust and incentive-compatible brokerage in uncertain markets for both the broker and the market participants.

\extrafootertext{A part of this work has appeared in WWW'26.}
\end{abstract}


\section{Introduction}
Bilateral trade, introduced in the seminal work of~\citet{myerson1983efficient}, is a fundamental problem in mechanism design in which a seller wishes to sell a single item to a buyer. 
Both agents hold private valuations on the item and aim to maximize their quasi-linear utility.
A mechanism designer's objective is to maximize the social welfare induced by the trade, \ie design a mechanism that transfers the item whenever the buyer's valuation is at least the seller's valuation. Such a mechanism is called \textit{ex-post efficient}.
At the same time, a mechanism should satisfy reasonable desiderata to (i) incentivize the buyer and seller to participate (individual rationality) (ii) incentivize the buyer and seller to report their valuations truthfully (incentive compatibility) and (iii) neither subsidize nor profit from the market (budget-balancedness).

Unfortunately,~\citet{myerson1983efficient} prove an impossibility result barring the existence of ex-post efficient mechanisms that are simultaneously incentive-compatible (IC), individually-rational (IR), and budget-balanced (BB).
This pessimistic result holds even when IC is loosened to hold in an ex-ante manner, known as Bayes-Nash incentive compatibility (BNIC), and BB is relaxed to weak budget-balancedness (WBB), \ie we allow the mechanism to profit from the market.
As such, there has been a long line of work studying the \textit{approximability} of the first-best efficiency by such mechanisms, \eg\cite{mcafee2008gains},~\cite{blumrosen2016almost},~\cite{blumrosen2016approximating},~\cite{kang2022fixed},~\cite{deng2022approximately}, and~\cite{cai2023fixed}.

\begin{table*}[t]
    \begin{minipage}{\textwidth}
    \renewcommand{\thempfootnote}{\fnsymbol{mpfootnote}}
    \begin{center}
        \begin{tabular}{|c|c|c|c||c|c|}
             \hline
             \multirow{2}{*} & \multicolumn{3}{c||}{With Broker\footnote{with respect to the mechanism that draws $p, q \sim F$ and posts $\max(p, q)$ to the buyer and $\min(p, q)$ to the seller}} & \multicolumn{2}{c|}{Without Broker\footnote{with respect to the mechanism that draws $p \sim F$ and posts $p$ to both buyer and seller}}\\
             \cline{2-6}
              & GFT & SW & Profit & GFT & SW\\
             \hline
             Lower Bound & $7 / 24$ & $2 / 3$ & $2 / 55$\footnote{if $F$ has a doubly monotone hazard rate} & $1 / 2$ (\cite{babaioff2020bulow}) & $3 / 4$ (\cite{kang2022fixed})\\
             \hline
             Upper Bound & $7 / 24$ & $2 / 3$ & --- & $1 / 2$ (\cite{babaioff2020bulow}) & $3 / 4$ (\cite{kang2022fixed})\\
             \hline
        \end{tabular}
    \end{center}
    \end{minipage}
    \caption{Our results for the symmetric setting.}
    \label{tab:sym-results}
\end{table*}

On the other hand, numerous real-world trades, particularly those in online marketplaces and decentralized trading platforms, involve two parties exchanging an item through a \emph{broker}.
Importantly, the broker may have objectives misaligned with those of society, \eg maximizing her own profit.
In spite of numerous works studying approximate efficiency in bilateral trade \textit{without} a broker (or equivalently, with a broker who seeks to maximize social welfare), there had been few works studying approximate efficiency in bilateral trade with a \textit{strategic} broker until~\citet{hajiaghayi2025bilateral} studied bilateral trade under a profit-maximizing broker who knows the valuation distributions of the buyer and seller.

In practice, however, the broker typically does not know the agents' distributions \textit{a priori}, but may instead only possess partial information based on previous trades, \eg a series of bid samples each agent has submitted.
Motivated by such scenarios, we study approximate efficiency in bilateral trade where the broker only has access to a single sample from each agent's distribution, a well-studied model in the literature on mechanism design under limited information. We study two settings in particular: (i) the \emph{symmetric} setting where the distributions of the buyer and seller are identical\footnote{The symmetric setting is well-motivated in practice, \eg stock markets, treasury markets, and dark pools. It is also well-studied in the literature, \eg~\cite{kang2019fixed},~\cite{babaioff2020bulow}, and~\cite{kang2022fixed}, particularly when the mechanism designer only has sample access to the distributions.} and (ii) the setting where the buyer's distribution \textit{stochastically dominates} the seller's distribution.\footnote{The stochastic dominance setting is widely adopted in the literature and is in fact one of the most standard assumptions for asymmetric distributions when the mechanism designer has limited information, \eg~\cite{babaioff2020bulow},~\cite{maskin2000asymmetric}, and~\cite{kirkegaard2012mechanism}.} In each setting, we show that a simple posted-pricing mechanism achieves a constant approximation to both the first-best gains-from-trade and first-best social welfare. Furthermore, when the distributions of the agents have monotone hazard rates,\footnote{MHR distributions are widely adopted in single-sample mechanism design, \eg~\cite{dhangwatnotai2010revenue}, and more broadly in the mechanism design literature, \eg~\cite{myerson1981optimal},~\cite{hartline2009simple}, and~\cite{blumrosen2016approximating}.
In particular,~\citet{myerson1983efficient} characterize the profit-maximizing BNIC and IR mechanism for bilateral trade under regular (and hence MHR) problem instances, when the broker knows the distributions of the buyer and seller.} we show that our mechanisms achieve a constant approximation to the optimal profit attained by any BNIC and IR mechanism with full distributional knowledge (henceforth referred to simply as the optimal profit). A summary of our results can be found in Tables~\ref{tab:sym-results} and~\ref{tab:asym-results}.

Our positive results for the first-best gains-from-trade and first-best social welfare stand in stark contrast to those of~\cite{hajiaghayi2025bilateral}, who show pessimistic inapproximability results under a profit-maximizing broker with full distributional knowledge. 
Moreover, our results are practically appealing in the sense that the broker can forgo complex data collection and still achieve a provable share of several optimal economic outcomes, benefiting both herself and society. This could inform the design of online marketplaces and decentralized trading platforms, as platforms can still facilitate efficient trade between users without requiring invasive data practices.

In what follows, we first review related work in Section~\ref{sec:rel}, and describe our model in Section~\ref{sec:model}. We present our results in Section~\ref{sec:overview}, deferring some proofs to the appendix. Due to space constraint, some of the proofs in Appendix A are deferred to the full version.

\section{Related Work}\label{sec:rel}
As mentioned above, the bilateral trade problem was introduced in the seminal work of~\citet{myerson1983efficient}, who show that no incentive-compatible, individually-rational, and budget-balanced mechanism can achieve ex-post efficiency. As such, later work has focused on the \textit{approximability} of the first-best efficiency by such mechanisms. In fact, due to their simplicity, much of the literature has studied the approximability of the first-best efficiency via \textit{posted-pricing mechanisms} (\cite{mcafee2008gains},~\cite{blumrosen2016approximating},~\cite{blumrosen2016almost},~\cite{deng2022approximately},~\cite{kang2022fixed},~\cite{cai2023fixed},~\cite{liu2023improved}), including works studying such approximability in the context of general two-sided markets (\cite{brustle2017approximating},~\cite{colini2017fixed}).

Notably, all the literature above assumes that the trade is governed by a benevolent principal. However, in the same work where they introduce bilateral trade,~\citet{myerson1983efficient} introduce a variation of bilateral trade where a \textit{profit-maximizing broker} governs the trade. In spite of this natural extension, there had not been much work on this model until~\citet{hajiaghayi2025bilateral} initiated the study of the approximability of the first-best gains-from-trade in bilateral trade with a strategic broker.

Orthogonally, other works have considered markets under limited information. For example,~\citet{dhangwatnotai2010revenue} initiated the study of revenue maximization in one-sided markets when the mechanism only has access to a single sample from the buyer's distribution, with follow-up works by~\citet{huang2015making},~\citet{fu2015random}, and~\citet{goldner2016prior}. Analogously,~\citet{babaioff2020bulow},~\cite{dutting2021efficient},~\citet{kang2022fixed}, ~\citet{cai2023fixed}, and~\citet{liu2023improved} study the approximability of the first-best efficiency in two-sided markets when the mechanism only has access to a single sample from the distribution of each agent. These latter works have essentially resolved the optimal approximation ratios to the first-best gains-from-trade and first-best social welfare in the symmetric and stochastic dominance settings for single-sample bilateral trade without a broker, and this paper draws an analogy to this line of work in single-sample bilateral trade \emph{with} a broker.

\begin{table*}[t]
    \begin{minipage}{\textwidth}
    \renewcommand{\thempfootnote}{\fnsymbol{mpfootnote}}
    \begin{center}
        \begin{tabular}{|c|c|c|c||c|c|}
            \hline
            \multirow{2}{*} & \multicolumn{3}{c||}{With Broker\footnote{with respect to the mechanism that draws $p \sim F, q \sim G$ and posts $p$ to the buyer and $q$ to the seller}} & \multicolumn{2}{c|}{Without Broker}\\
            \cline{2-6}
             & GFT & SW & Profit & GFT\footnote{with respect to the mechanism that draws $p \sim F$ and posts $p$ to both buyer and seller} & SW\footnote{with respect to the mechanism that draws $q \sim G$ and posts $q$ to both buyer and seller}\\ 
            \hline
             Lower Bound & $0.1254$ & $0.1321$ & $1 / 180$\footnote{if $F$ and $G$ have monotone hazard rates} & $1 / 4$ (\cite{babaioff2020bulow}) & $1 / 2$ (\cite{cai2023fixed})\footnote{This lower bound holds for general problem instances, even without the stochastic-dominance assumption.}
             \\
             \hline
             Upper Bound & $7 / 48$ & $1 / 6$ & --- & $7 / 16$ (\cite{babaioff2020bulow}) & --- \\
             \hline
        \end{tabular}
    \end{center}
    \end{minipage}
    \caption{Our results for the stochastic dominance setting.~\citet{cai2023fixed} show an upper bound on the approximability of the first-best social welfare by their mechanism under general problem instances, but to the best of our knowledge, it remains open whether this upper bound holds in the special case of stochastic dominance.
    }
    \label{tab:asym-results}
\end{table*}

\section{Model}\label{sec:model}
\paragraph{Bilateral trade with a broker}
In bilateral trade, a seller wishes to sell a single item to a buyer. The buyer's valuation $v$ and the seller's valuation $c$ are drawn independently from distributions $F$ and $G,$ respectively, both of which are bounded, absolutely continuous distributions supported on nonnegative real values.\footnote{By Remark 2 in~\citet{deng2022approximately}, we can make this assumption without loss of generality.} The buyer and seller report valuations $\tilde{v}$ and $\tilde{c},$ respectively, to the broker, who then determines whether the item is transferred, and if so, how much each agent pays or receives.\footnote{Although a general mechanism may allow the buyer and seller to interact with the broker in other ways, we are only interested in incentive-compatible mechanisms, so by the revelation principle (\cite{myerson1981optimal}), we can consider \textit{direct mechanisms} without any loss of generality.}  Importantly, the broker only has access to a single sample from each agent's distribution on which to base her decisions. Formally, the broker runs a mechanism $\Mec = (x, p_b, p_s),$ where $x(\tilde{v}, \tilde{c}), p_b(\tilde{v}, \tilde{c}),$ and $p_s(\tilde{v}, \tilde{c})$ are the probability that the item is transferred, the expected payment of the buyer to the broker, and the expected payment of the broker to the seller, respectively, given the agents' reported valuations $\tilde{v}$ and $\tilde{c}.$

\paragraph{Desiderata} Within the class of all possible mechanisms, we are especially interested in those that satisfy several nice properties. First, we say that a mechanism $\Mec = (x, p_b, p_s)$ is \textit{Bayes-Nash incentive compatible} (BNIC) if it is in each agent's best interest to report his valuation truthfully, assuming truthful reporting by the other agent, \ie
\begin{align*}
    \Exu{c \sim G}{vx(v, c) - p_b(v, c)} &\ge \Exu{c \sim G}{vx(\tilde{v}, c) - p_b(\tilde{v}, c)} \text{ and}\\
    \Exu{v \sim F}{p_s(v, c) - cx(v, c)} &\ge \Exu{v \sim F}{p_s(v, \tilde{c}) - cx(v, \tilde{c})}
\end{align*}
for all $v, \tilde{v}, c,$ and $\tilde{c}.$ We further say that a mechanism is \textit{dominant-strategy incentive compatible} (DSIC) if it is in each agent's best interest to report his valuation truthfully, \textit{regardless of what the other agent does}, \ie
\begin{align*}
    vx(v, c) - p_b(v, c) &\ge vx(\tilde{v}, c) - p_b(\tilde{v}, c) \text{ and}\\
    p_s(v, c) - cx(v, c) &\ge p_s(v, \tilde{c}) - cx(v, \tilde{c})
\end{align*}
for all $v, \tilde{v}, c,$ and $\tilde{c}.$ Since we are only interested in BNIC mechanisms, we henceforth write $v$ and $c$ instead of $\tilde{v}$ and $\tilde{c},$ respectively, for the reported valuations of the agents.

To incentivize the agents to participate in a mechanism, their marginal utilities from participating should be nonnegative. To this end, we say that a mechanism is \textit{individually-rational} (IR) if 
\begin{align*}
    \Exu{v \sim F, c \sim G}{vx(v, c) - p_b(v, c)} &\ge 0 \text{ and}\\
    \Exu{v \sim F, c \sim G}{p_s(v, c) - cx(v, c)} &\ge 0.
\end{align*}
We further say that a mechanism is \textit{ex-post individually-rational} if the inequalities above hold pointwise, \ie for all $v$ and $c$:
\begin{align*}
    vx(v, c) - p_b(v, c) &\ge 0 \text{ and}
    \\
    p_s(v, c) - cx(v, c) &\ge 0
\end{align*}

In light of the desiderata above, we are especially interested in \textit{posted-pricing mechanisms} in which the broker offers take-it-or-leave-it prices $p$ and $q$ to the buyer and seller, respectively, and the item is exchanged if and only if $v \ge p$ and $c \le q.$ Note that posted-pricing mechanisms are DSIC and ex-post IR. 

\paragraph{Metrics and benchmarks}
We are interested in three metrics and their corresponding benchmarks for a mechanism $\Mec = (x, p_b, p_s).$ First, we define the \textit{gains-from-trade} (\GFT) of $\Mec$ as the expected marginal increase in the welfare of society from the trade, \ie
\begin{align*}
    \GFT &= \Exu{v \sim F, c \sim G}{x(v, c)(v - c)},
\end{align*}
and the \textit{social welfare} (\SW) of $\Mec$ as the expected welfare of society from the trade, \ie 
\begin{align*}
    \SW &= \Exu{c \sim G}{c} + \GFT.
\end{align*}
Our benchmarks for these two measures of efficiency will be the gains-from-trade (resp. social welfare) achieved by any \textit{ex-post efficient} mechanism, \ie a mechanism that enforces trade whenever the buyer's valuation $v$ weakly exceeds the seller's valuation $c.$ To this end, we define the first-best gains-from-trade as
\begin{align*}
    \FB &= \Exu{v \sim F, c \sim G}{(v - c)\mathbf{1}\{v \ge c\}}
\end{align*}
and the first-best social welfare as
\begin{align*}
    \FBW &= \Exu{c \sim G}{c} + \FB.
\end{align*}
Finally, we define the profit of $\Mec$ to be
\begin{align*}
    \Pro &= \Exu{v \sim F, c \sim G}{p_b(v, c) - p_s(v, c)}.
\end{align*}
We will be comparing the profit of our mechanisms to the optimal profit $\Pro^*$ achieved by any BNIC and IR mechanism with full distributional knowledge.\footnote{In fact, our lower bounds for the profit of our mechanisms hold even when we replace $\Pro^*$ with the stronger benchmark of $\FB.$ Indeed, although~\cite{myerson1983efficient} derive an integral expression for $\Pro^*$ when $F$ and $G$ are regular, it seems technically challenging to compare the profits of our mechanisms to this integral directly.} 

As one final piece of notation, we use $\GFT(p, q)$ to denote the gains-from-trade induced by offering price $p$ to the buyer and price $q$ to the seller, where we define $\GFT(p, q) = 0$ if $p < q.$ We use $\Pro(p, q)$ analogously.

\paragraph{Distribution types}
Lastly, we introduce some relevant definitions related to the agents' distributions $F$ and $G.$ We say that the buyer and seller are \textit{symmetric} if $F = G,$ and we say that the buyer's distribution $F$ \emph{stochastically dominates} the seller's distribution $G$ if $G(x) \ge F(x)$ for all $x \ge 0.$ The buyer's distribution satisfies the \textit{monotone hazard rate} (MHR) property if $(1 - F(x)) / F'(x)$ is nonincreasing, and the seller's distribution satisfies the MHR property if $G(x) / G'(x)$ is nondecreasing. Finally, in the symmetric setting, we say that $F$ has a \textit{doubly monotone hazard rate} if $F$ satisfies the MHR property both as the buyer's distribution and as the seller's distribution.

\section{Main Results}\label{sec:overview}
We now present our main results.
We consider the natural mechanism that draws samples $p \sim F, q \sim G$ and offers price $p$ to the buyer and price $q$ to the seller,\footnote{If the agents are symmetric, our mechanism offers price $\max(p, q)$ to the buyer and $\min(p, q)$ to the seller. Otherwise, we assume that no trade occurs whenever $p < q.$} and we are interested in the extent to which this mechanism approximates the first-best gains-from-trade, first-best social welfare, and optimal profit. 
Note that by randomizing over our mechanism and mechanisms for bilateral trade without a broker, we can interpolate between the lower bounds for gains-from-trade/social welfare and profit in the settings with and without a broker.

As a fundamental preliminary result, we first characterize the profit of our mechanism:
\begin{lemma}
\label{lem:sample-pft}
    The mechanism that draws samples $p \sim F, q \sim G$ and offers price $p$ to the buyer and price $q$ to the seller achieves
    \begin{align*}
        \Exu{p \sim F, q \sim G}{\Pro(p, q)} = \frac{1}{4}\int_0^\infty G(x)^2(1 - F(x))^2 \, dx.
    \end{align*}
\end{lemma}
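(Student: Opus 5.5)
Write the profit of posting prices $(p,q)$ with $p\ge q$ as $\Pro(p,q) = (p-q)(1-F(p))G(q)$, since trade happens exactly when $v\ge p$ and $c\le q$. By the convention for $p<q$, $\Pro(p,q)=0$. Hence
\begin{align*}
    \Exu{p \sim F, q \sim G}{\Pro(p, q)} = \int_0^\infty\int_0^p (p-q)(1-F(p))G(q)\, dG(q)\, dF(p).
\end{align*}
The plan is to turn the factor $(p-q)$ into an integral of an indicator and then swap the order of integration, so that everything collapses to a single integral in a dummy variable $x$.

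Concretely, write $p-q=\int_0^\infty \mathbf{1}\{q\le x< p\}\,dx$ and apply Fubini/Tonelli; the integrand is nonnegative and the distributions are bounded, so the exchange is justified. This gives
\begin{align*}
    \int_0^\infty \left(\int_0^x G(q)\,dG(q)\right)\left(\int_x^\infty (1-F(p))\,dF(p)\right) dx .
\end{align*}
The two inner integrals are elementary because $G$ and $F$ are absolutely continuous. First, $\int_0^x G\,dG = G(x)^2/2$. Second, $\int_x^\infty (1-F)\,dF = (1-F(x))^2/2$, which follows from the substitution $u=F(p)$. Multiplying the two inner results yields $\frac14\int_0^\infty G(x)^2(1-F(x))^2\,dx$, as claimed.

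The only delicate points are bookkeeping rather than real obstacles. One must verify that the region $\{q\le x<p\}$ correctly encodes $p\ge q$, so that the zero-profit convention for $p<q$ is automatically respected. One must also justify Tonelli and the change of variables. These are valid because $F$ and $G$ are atomless, which makes $G(q)\,dG(q)=d(G^2/2)$ legitimate.

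For the symmetric variant mentioned in the footnote, with prices $\max(p,q)$ and $\min(p,q)$, the same computation applies to the ordered pair. The two orderings each contribute the same expression, and the factor $2$ is absorbed. That case is not needed here, however.
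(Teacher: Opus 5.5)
Your proof is correct, but it reaches the formula by a different route than the paper.

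The paper keeps the iterated integral $\int_0^\infty\int_q^\infty (p-q)(1-F(p))G(q)\,dF(p)\,dG(q)$ and integrates by parts twice. First it works in $p$, pairing $(p-q)$ with $(1-F(p))F'(p)$, to get $\frac12\int_0^\infty G(q)\int_q^\infty (1-F(p))^2\,dp\,dG(q)$. Then it works in $q$, pairing $G(q)G'(q)$ with $\int_q^\infty(1-F(p))^2\,dp$.

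You instead use the identity $p-q=\int_0^\infty \mathbf{1}\{q\le x<p\}\,dx$ together with Tonelli. For each fixed $x$ the integrand then factors into a $q$-part and a $p$-part, and both one-dimensional integrals are evaluated at once. The two computations are equivalent, since integration by parts here is Fubini in disguise. Yours is somewhat cleaner, for three reasons:
\begin{itemize}
\item The indicator enforces $p\ge q$ automatically.
\item There are no boundary terms to check. The paper must verify that $(p-q)(1-F(p))^2$ and $G(q)^2\int_q^\infty(1-F(p))^2\,dp$ vanish at the endpoints, which it gets from bounded support.
\item You only use that $F$ and $G$ are atomless, not that they have densities $F'$ and $G'$.
\end{itemize}
Tonelli needs only nonnegativity, so your appeal to bounded distributions in justifying the exchange is unnecessary.

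What the paper's route buys is the intermediate expression and an integration-by-parts template that it reuses in the gains-from-trade computations (Lemmas~\ref{lem:sample-gft} and~\ref{lem:sample-sym-gft}).

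Your closing remark on the symmetric mechanism is not needed. There the two orderings simply double the expression, giving $\frac12\int_0^\infty F(x)^2(1-F(x))^2\,dx$.
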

\begin{proof}
    We have that
    \begin{align*}
        \Exu{p \sim F, q \sim G}{\Pro(p, q)} &= \int_0^{\infty} \int_q^{\infty} (p - q)(1 - F(p))G(q) \, dF(p) \, dG(q)
        \\
        &= \int_0^{\infty} G(q)\parans{\int_q^{\infty} (p - q)(1 - F(p)) \, dF(p)} dG(q)
        \\
        &= \int_0^{\infty} G(q)\parans{\int_q^{\infty} (p - q)(1 - F(p))F'(p) \, dp} dG(q).
    \end{align*}
    Integrating by parts over $(1-F(p))F'(p)$ and $(p-q)$, we have
    \begin{align*}
        \Exu{p \sim F, q \sim G}{\Pro(p, q)} &= \int_0^{\infty} G(q)\left(-\frac{(1 - F(p))^2}{2}(p - q) \Bigg|_q^\infty\right. + \left.\frac{1}{2}\int_q^\infty (1 - F(p))^2 \, dp\right) dG(q)
        \\
        &= \frac{1}{2}\int_0^{\infty} G(q)\parans{\int_q^\infty (1 - F(p))^2 \, dp} dG(q)
        \\
        &= \frac{1}{2}\int_0^{\infty} G(q)\parans{\int_q^\infty (1 - F(p))^2 \, dp} G'(q) \, dq.
    \end{align*}
    Integrating by parts again over $G(q)G'(q)$ and $\displaystyle \int_q^\infty (1-F(p))^2 \, dp$, we finally obtain that
    \begin{align*}
        \Exu{p \sim F, q \sim G}{\Pro(p, q)} &= \frac{1}{2}\left(\frac{G(q)^2}{2}\int_q^\infty (1 - F(p))^2 \, dp \Bigg|_0^\infty\right. + \left.\frac{1}{2}\int_0^\infty G(q)^2(1 - F(q))^2 \, dq\right)\\
        &= \frac{1}{4}\int_0^\infty G(x)^2(1 - F(x))^2 \, dx.
    \end{align*}
\end{proof}
Next, we characterize the gains-from-trade of our mechanism, generalizing the characterization of~\citet{kang2022fixed} for the gains-from-trade obtained by offering price $p \sim F$ to both buyer and seller under symmetric agents in bilateral trade without a broker:
\begin{lemma}
\label{lem:sample-gft}
    The mechanism that draws samples $p \sim F, q \sim G$ and offers price $p$ to the buyer and price $q$ to the seller achieves
    \begin{align*}
         \Exu{p \sim F, q \sim G}{\GFT(p, q)} &= \frac{1}{4}\int_0^\infty G(x)^2(1 - F(x))^2 \, dx + \frac{1}{2}\int_0^\infty G(q)^2 \int_q^{\infty}(1-F(x)) \, dx \, dF(q)\\
         &+ \frac{1}{2}\int_0^\infty (1 - F(p))^2 \int_0^p G(x) \, dx \, dG(p).
    \end{align*}
\end{lemma}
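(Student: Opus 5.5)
The plan is to split the gains from trade at given prices $p \ge q$ into the broker's profit and the two agents' utilities, and then evaluate each piece in expectation over $p \sim F$, $q \sim G$. Trade occurs exactly when $v \ge p$ and $c \le q$, so
\begin{align*}
\GFT(p,q) = \Exu{v,c}{(v-c)\mathbf{1}\{v\ge p, c\le q\}} = \Pro(p,q) + G(q)\Exu{v}{(v-p)^+} + (1-F(p))\Exu{c}{(q-c)^+}.
\end{align*}
This identity follows by writing $v - c = (v-p) + (p-q) + (q-c)$ and using independence of $v$ and $c$. The three terms are the broker's profit, the buyer's utility, and the seller's utility. Taking expectations, Lemma~\ref{lem:sample-pft} immediately identifies the first term as $\frac14\int_0^\infty G(x)^2(1-F(x))^2\,dx$. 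It remains to evaluate the two agent-utility terms.

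For the buyer term, I first use the standard identity $\Exu{v}{(v-p)^+} = \int_p^\infty (1-F(x))\,dx$. The expected buyer utility is then
\begin{align*}
\int_0^\infty \int_0^p G(q)\,dG(q)\, \int_p^\infty (1-F(x))\,dx\,dF(p) = \frac12\int_0^\infty G(p)^2\int_p^\infty(1-F(x))\,dx\,dF(p).
\end{align*}
The inner integral over $q$ runs only up to $p$ because trade requires $q \le p$, and it evaluates to $\int_0^p G(q)\,dG(q) = G(p)^2/2$. After renaming the variable, this is exactly the second term in the statement.

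The seller term is handled symmetrically. Using $\Exu{c}{(q-c)^+} = \int_0^q G(x)\,dx$ and integrating over $p \ge q$ gives
\begin{align*}
\int_q^\infty (1-F(p))\,dF(p) = \frac{(1-F(q))^2}{2},
\end{align*}
which produces $\frac12\int_0^\infty (1-F(q))^2\int_0^q G(x)\,dx\,dG(q)$. After renaming $q$ to $p$, this matches the third term.

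The only delicate point is bookkeeping, in two places. First, the integration region must be $\{p \ge q\}$, consistent with the convention $\GFT(p,q) = 0$ for $p<q$. Second, exchanging the order of integration requires Fubini, which is justified because all integrands are nonnegative and the distributions are bounded. There are no boundary terms to worry about, since the $1/2$ factors come directly from $\int G\,dG = G^2/2$ with $G(0)=0$ and $1-F(\infty)=0$, using absolute continuity.
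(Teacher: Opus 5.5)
Your proof is correct, and it has the same top-level structure as the paper's. Both split $\GFT(p,q)$ for $p\ge q$ into the broker's profit, the buyer's utility $G(q)\int_p^\infty(1-F(x))\,dx$, and the seller's utility $(1-F(p))\int_0^q G(x)\,dx$, and both use Lemma~\ref{lem:sample-pft} for the profit term. The routes differ in two ways.

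First, you derive the split yourself in one line from $v-c=(v-p)+(p-q)+(q-c)$ and independence. The paper instead imports it from prior work on brokered bilateral trade.

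Second, and more substantively, you evaluate the two utility terms by integrating out, over $\{q\le p\}$, the price that enters only through a single CDF factor:
\begin{itemize}
\item $q$ appears in the buyer term only via $G(q)$, so $\int_0^p G\,dG=\tfrac12 G(p)^2$;
\item $p$ appears in the seller term only via $1-F(p)$, so $\int_q^\infty(1-F)\,dF=\tfrac12(1-F(q))^2$.
\end{itemize}
The paper integrates each term the other way round, with the price carrying the extra integral innermost. For the seller term it even swaps into that order deliberately. It then needs several rounds of integration by parts, with boundary terms to check, to reach the same expressions.

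Your route is shorter and has no boundary terms. It needs only Tonelli's theorem (nonnegative integrands) plus $G(0)=0$ and $F(\infty)=1$, which follow from absolute continuity on $[0,\infty)$. Nothing later in the paper uses the intermediate expressions from the paper's longer computation, so your version loses nothing.
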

The proof follows from decomposing the gains-from-trade into the profit and some additional terms involving integrals over $1 - F(x)$ and $G(x),$ followed by some algebraic manipulation.

\subsection{Symmetric Setting}
With these preliminary lemmas in hand, we now turn our attention to the symmetric setting. To begin with, we use Lemma~\ref{lem:sample-gft} to characterize the gains-from-trade of our single-sample mechanism in the symmetric setting:
\begin{lemma}
\label{lem:sample-sym-gft}
    In the symmetric setting, the mechanism that draws samples $p, q \sim F$ and offers price $\max(p, q)$ to the buyer and price $\min(p, q)$ to the seller achieves
    \begin{align*}
         \GFT &= \frac{1}{3}\int_0^\infty F(x)(1 - F(x)) \, dx - \frac{1}{6}\int_0^\infty F(x)^2(1 - F(x))^2 \, dx.
    \end{align*}
\end{lemma}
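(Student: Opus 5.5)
The plan is to reduce the statement to Lemma~\ref{lem:sample-gft} with $G = F$ and then simplify the two double integrals by one integration by parts each.

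\textbf{Step 1: symmetrization.} Let $p, q \sim F$ be independent. Since $\GFT(p,q) = 0$ whenever $p < q$, and $(p,q)$ is exchangeable, we have
\begin{align*}
\Exu{p,q\sim F}{\GFT(\max(p,q),\min(p,q))} = 2\,\Exu{p,q \sim F}{\GFT(p,q)\mathbf{1}\{p \ge q\}} = 2\,\Exu{p, q \sim F}{\GFT(p,q)}.
\end{align*}
Ties occur with probability zero because $F$ is absolutely continuous. Applying Lemma~\ref{lem:sample-gft} with $G = F$ then gives
\begin{align*}
\GFT = \frac{1}{2}\int_0^\infty F^2(1-F)^2\,dx + \int_0^\infty F(q)^2 T(q)\,dF(q) + \int_0^\infty (1-F(p))^2 S(p)\,dF(p),
\end{align*}
where $T(q) = \int_q^\infty (1-F(x))\,dx$ and $S(p) = \int_0^p F(x)\,dx$.

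\textbf{Step 2: integrate by parts.} In the first double integral, write $F^2\,dF = d(F^3/3)$ and use $T'(q) = -(1-F(q))$. The boundary terms vanish: at $0$ because $F(0)=0$, and at the top of the bounded support because $T = 0$ there. This leaves $\frac13\int_0^\infty F^3(1-F)\,dx$.

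In the second double integral, write $(1-F)^2\,dF = d\bigl(-(1-F)^3/3\bigr)$ and use $S'(p) = F(p)$. The boundary terms again vanish: $S(0) = 0$, and $1 - F = 0$ at the top of the support. This leaves $\frac13\int_0^\infty (1-F)^3F\,dx$.

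\textbf{Step 3: combine.} The two terms from Step 2 sum to
\begin{align*}
\frac13\int_0^\infty F(1-F)\bigl(F^2 + (1-F)^2\bigr)\,dx.
\end{align*}
Using $F^2 + (1-F)^2 = 1 - 2F(1-F)$, this equals $\frac13\int_0^\infty F(1-F)\,dx - \frac23\int_0^\infty F^2(1-F)^2\,dx$. Adding the profit term $\frac12\int_0^\infty F^2(1-F)^2\,dx$ from Step 1 gives a net coefficient of $\frac12 - \frac23 = -\frac16$ on $\int_0^\infty F^2(1-F)^2\,dx$, which is exactly the claimed formula.

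The only delicate point is justifying the vanishing boundary terms in Step 2. These follow from the bounded-support assumption, so I expect no real obstacle beyond careful bookkeeping of the constants.
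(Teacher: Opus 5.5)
Your proposal is correct and follows the paper's proof essentially step for step. You symmetrize to $2\,\Exu{p,q\sim F}{\GFT(p,q)}$, apply Lemma~\ref{lem:sample-gft} with $G=F$, integrate each double integral by parts with the same choices of antiderivative, and combine; your use of $F^2+(1-F)^2 = 1-2F(1-F)$ simply makes explicit the final algebra that the paper leaves implicit.
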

\begin{proof}
We have that
\begin{align*}
    \GFT &= \Exu{p, q \sim F}{\GFT(p, q)\mathbf{1}\{p \ge q\}} + \Exu{p, q \sim F}{\GFT(q, p)\mathbf{1}\{q \ge p\}}\\
    &= 2\Exu{p, q \sim F}{\GFT(p, q)\mathbf{1}\{p \ge q\}}\\
    &= 2\Exu{p, q \sim F}{\GFT(p, q)}\\
    &= \frac{1}{2}\int_0^\infty F(x)^2(1 - F(x))^2 \, dx + \int_0^\infty F(q)^2 \int_q^{\infty}(1-F(x)) \, dx \, dF(q)\\
    &+ \int_0^\infty (1 - F(p))^2 \int_0^p F(x) \, dx \, dF(p),
\end{align*}
where the last equality holds by Lemma~\ref{lem:sample-gft}. Evaluating the first integral, we obtain that
\begin{align*}
    \int_0^\infty F(q)^2 \int_q^{\infty}(1-F(x)) \, dx \, dF(q) &= \int_0^\infty F(q)^2 \int_q^{\infty}(1-F(x)) \, dx \, F'(q) \, dq
    \\
    &= \frac{F(q)^3}{3}\int_q^{\infty}(1-F(x)) \, dx \Bigg|_0^\infty + \frac{1}{3}\int_0^\infty F(q)^3(1 - F(q)) \, dq\\
    &= \frac{1}{3}\int_0^\infty F(x)^3(1 - F(x)) \, dx, 
\end{align*}
where the second equality holds by integrating by parts over $F(q)^2F'(q)$ and $\displaystyle\int_q^\infty (1 - F(x)) \, dx.$ Evaluating the second integral, we have 
\begin{align*}
    \int_0^\infty (1 - F(p))^2 \int_0^p F(x) \, dx \, dF(p) &= \int_0^\infty (1 - F(p))^2 \int_0^p F(x) \, dx \, F'(p) \, dp
    \\
    &= -\frac{(1 - F(p))^3}{3}\int_0^p F(x) \, dx\Bigg|_0^\infty + \frac{1}{3}\int_0^\infty (1 - F(p))^3F(p) \, dp\\
    &= \frac{1}{3}\int_0^\infty (1 - F(x))^3F(x) \, dx,
\end{align*}
where the second equality holds by integrating by parts over $(1 - F(p))^2F'(p)$ and $\displaystyle\int_0^p F(x) \, dx.$ Thus, we have that
\begin{align*}
    \GFT &= \frac{1}{2}\int_0^\infty F(x)^2(1 - F(x))^2 \, dx + \frac{1}{3}\int_0^\infty F(x)^3(1 - F(x)) \, dx + \frac{1}{3}\int_0^\infty (1 - F(x))^3F(x) \, dx\\
    &= \frac{1}{3}\int_0^\infty F(x)(1 - F(x)) \, dx - \frac{1}{6}\int_0^\infty F(x)^2(1 - F(x))^2 \, dx.
\end{align*}
\end{proof}
We next prove the following lower bound on the approximability of the first-best gains-from-trade achieved by our mechanism:
\begin{theorem}
\label{thm:sample-const-apx-gft}
    In the symmetric setting, the mechanism that draws samples $p, q \sim F$ and offers price $\max(p, q)$ to the buyer and price $\min(p, q)$ to the seller achieves $\GFT \ge \nicefrac{7}{24}\cdot \FB.$
\end{theorem}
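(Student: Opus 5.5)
The plan is to combine Lemma~\ref{lem:sample-sym-gft} with an integral expression for $\FB$ in the symmetric setting, and then bound the negative term pointwise.

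\textbf{Step 1: rewrite $\FB$ as a single integral.} For $v \sim F$ and $c \sim G$ independent, I would use the layer-cake identity
\begin{align*}
(v - c)\mathbf{1}\{v \ge c\} = \int_0^\infty \mathbf{1}\{c \le x < v\} \, dx .
\end{align*}
Taking expectations and applying Fubini gives $\FB = \int_0^\infty G(x)(1 - F(x)) \, dx$. Setting $G = F$ yields
\begin{align*}
\FB = \int_0^\infty F(x)(1 - F(x)) \, dx .
\end{align*}
This integral is exactly the one appearing in the first term of Lemma~\ref{lem:sample-sym-gft}. Hence the lemma reads $\GFT = \frac{1}{3}\FB - \frac{1}{6}\int_0^\infty F(x)^2(1 - F(x))^2 \, dx$.

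\textbf{Step 2: control the subtracted term.} It remains to bound the subtracted integral by a multiple of $\FB$. Since $F(x) \in [0,1]$, we have $F(x)(1 - F(x)) \le \frac{1}{4}$ pointwise. Therefore
\begin{align*}
F(x)^2(1 - F(x))^2 \le \tfrac{1}{4} F(x)(1 - F(x)) \quad \text{for all } x ,
\end{align*}
and integrating gives $\int_0^\infty F^2(1 - F)^2 \, dx \le \frac{1}{4}\FB$.

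\textbf{Step 3: combine.} Substituting into the rewritten lemma gives
\begin{align*}
\GFT \ge \left(\tfrac{1}{3} - \tfrac{1}{24}\right)\FB = \tfrac{7}{24}\FB .
\end{align*}

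The only step that needs any care is Step 1, namely justifying the layer-cake identity and the exchange of integrals. This is routine because $F$ is a bounded, absolutely continuous distribution on $[0,\infty)$, so every integrand is nonnegative and integrable. The rest is elementary.

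The bound is tight exactly when $F(1 - F)$ is close to $\frac{1}{4}$ on most of the region where it is nonzero. An example is a distribution that puts nearly half its mass at each of two points, smoothed so that it is absolutely continuous. This is consistent with the matching $7/24$ upper bound in Table~\ref{tab:sym-results}.
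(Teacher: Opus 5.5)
Your proof is correct and follows the paper's argument exactly. You invoke Lemma~\ref{lem:sample-sym-gft}, identify its first integral with $\FB$, and use $F(x)(1-F(x))\le \frac{1}{4}$ pointwise to bound the subtracted term by $\frac{1}{4}\FB$. The paper uses $\FB=\int_0^\infty F(x)(1-F(x))\,dx$ without comment, so your explicit layer-cake derivation of it is a welcome addition.
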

\begin{proof}
By Lemma~\ref{lem:sample-sym-gft}, we have that
\begin{align*}
    \GFT &= \frac{1}{3}\int_0^\infty F(x)(1 - F(x)) \, dx - \frac{1}{6}\int_0^\infty F(x)^2(1 - F(x))^2 \, dx\\
    &\ge \frac{1}{3}\int_0^\infty F(x)(1 - F(x)) \, dx - \frac{1}{6}\parans{\frac{1}{4}\int_0^\infty F(x)(1 - F(x)) \, dx}\\
    &= \frac{7}{24}\int_0^\infty F(x)(1 - F(x)) \, dx\\
    &= \frac{7}{24}\FB,
\end{align*}
where the inequality holds since $F(x)(1 - F(x)) \le 1 / 4$ for $x \ge 0.$
\end{proof}

We can similarly use Lemma~\ref{lem:sample-sym-gft} to prove the following lower bound on the approximability of the first-best social welfare achieved by our mechanism:
\begin{theorem}
\label{thm:sample-const-apx-sw}
    In the symmetric setting, the mechanism that draws samples $p, q \sim F$ and offers price $\max(p, q)$ to the buyer and price $\min(p, q)$ to the seller achieves $\SW \ge \nicefrac{2}{3} \cdot \FBW.$
\end{theorem}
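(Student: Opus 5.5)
We want $\SW = \mathbb{E}[c] + \GFT \ge \frac23(\mathbb{E}[c] + \FB)$, which is equivalent to $\frac13\mathbb{E}[c] + \GFT \ge \frac23\FB$. The plan is to express everything as integrals of $F$ and compare them pointwise in $x$.

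\textbf{Step 1: rewrite each term as an integral.} Since $F$ is supported on nonnegative reals, $\mathbb{E}_{c\sim F}[c] = \int_0^\infty (1-F(x))\,dx$. In the symmetric setting the standard identity (already used in the last line of the proof of Theorem~\ref{thm:sample-const-apx-gft}) gives $\FB = \int_0^\infty F(x)(1-F(x))\,dx$. Lemma~\ref{lem:sample-sym-gft} supplies the formula for $\GFT$. Substituting these, it suffices to show
\begin{align*}
\frac13\int_0^\infty (1-F) \, dx + \frac13\int_0^\infty F(1-F) \, dx - \frac16\int_0^\infty F^2(1-F)^2 \, dx \ge \frac23\int_0^\infty F(1-F) \, dx .
\end{align*}

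\textbf{Step 2: reduce to a pointwise inequality.} The integrands combine into a single function of $u = F(x)\in[0,1]$. We need
\[
\frac13(1-u) - \frac13 u(1-u) - \frac16 u^2(1-u)^2 \ge 0 .
\]
After dividing by $(1-u)/6 \ge 0$, this becomes $2 - 2u - u^2(1-u) \ge 0$, i.e.
\[
(1-u)(2-u^2) \ge 0,
\]
which holds for all $u\in[0,1]$. Integrating this pointwise bound over $x\in[0,\infty)$ gives the claim.

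\textbf{Main obstacle.} The only nontrivial part is the algebra in Step 2: factoring out $(1-u)$ correctly and checking the sign on $[0,1]$. The term $-\frac16u^2(1-u)^2$ is harmless because it is at most $\frac1{96}$, while $(1-u)$ dominates. Tightness of the constant $\frac23$ should come from instances where $F$ is near $0$ or $1$ on most of its support, but that matters only for the upper bound, not for this statement.
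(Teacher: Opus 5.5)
Your proof is correct and is essentially the paper's argument. Both substitute Lemma~\ref{lem:sample-sym-gft}, $\mathbb{E}_{c\sim F}[c]=\int_0^\infty(1-F(x))\,dx$ and $\FB=\int_0^\infty F(x)(1-F(x))\,dx$, and reduce the claim to a pointwise inequality in $u=F(x)$; the paper does this by bounding the factors $F^2-2F+4\ge 3$ and $1+F\le 2$, while you compute the exact slack $\SW-\tfrac23\FBW=\tfrac16\int_0^\infty(1-F(x))^2(2-F(x)^2)\,dx$.

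One small correction to your closing aside: this slack shows that near-tightness requires $F\approx 1$ on most of the support, as in the paper's example $F(x)=x^\delta$. When $F\approx 0$, the pointwise ratio of the $\SW$ and $\FBW$ integrands tends to $1$, so that case is not near-tight.
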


On the other hand, our proof of a lower bound on approximability of the optimal profit achieved by our mechanism uses the following result from~\citet{hajiaghayi2025bilateral}:
\begin{lemma}{\cite{hajiaghayi2025bilateral}}\label{lem:hajia}
    Suppose $F$ is doubly MHR. Then, if $0 < \beta < \alpha < 1$ and $C \ge 1$ satisfy
    \begin{align*}
        \frac{C - 1}{2}\min\left(\frac{\alpha^2 - \beta^2}{2}, \alpha - \beta - \frac{\alpha^2 - \beta^2}{2}\right) \ge \max\left(\beta - \frac{\beta^2}{2}, \frac{1 - \alpha^2}{2}\right),
    \end{align*}
    we have that
    \begin{align*}
        \int_{F^{-1}(\beta)}^{F^{-1}(\alpha)} F(x)(1 - F(x)) \, dx \ge \frac{1}{C}\FB.
    \end{align*}
\end{lemma}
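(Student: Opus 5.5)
The plan is to pass to quantile space, where the two MHR conditions turn into monotonicity statements. Then each tail of $\FB$ can be compared with the middle portion $\int_{F^{-1}(\beta)}^{F^{-1}(\alpha)} F(x)(1-F(x))\,dx$. In the symmetric setting $\FB = \int_0^\infty F(x)(1-F(x))\,dx$. So it suffices to show that the lower tail $\int_0^{F^{-1}(\beta)}$ and the upper tail $\int_{F^{-1}(\alpha)}^\infty$ of $F(1-F)$ are each at most $\frac{C-1}{2}$ times the middle integral. Summing then gives $\FB \le C$ times the middle integral.

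First, substitute $x = F^{-1}(u)$, so that $dx = h(u)\,du$ with $h(u) := 1/F'(F^{-1}(u))$. This gives
\[
\int_{F^{-1}(a)}^{F^{-1}(b)} F(x)(1-F(x))\,dx = \int_a^b u(1-u)h(u)\,du .
\]
Doubly MHR then translates into two facts: $(1-u)h(u) = (1-F)/F'$ is nonincreasing in $u$ (buyer MHR), and $u h(u) = F/F'$ is nondecreasing in $u$ (seller MHR). I will take $F'>0$ on the interior of the support; boundary issues should be handled by the absolute-continuity assumption or by a limiting argument.

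\emph{Lower tail.} Write the integrand as $(u h(u))(1-u)$. For $u\le\beta$ we have $u h(u)\le \beta h(\beta)$, so
\[
\int_0^\beta u(1-u)h(u)\,du \le \beta h(\beta)\left(\beta-\tfrac{\beta^2}{2}\right).
\]
For $u\in[\beta,\alpha]$ we have $u h(u)\ge \beta h(\beta)$, so the middle integral is at least
\[
\beta h(\beta)\left(\alpha-\beta-\tfrac{\alpha^2-\beta^2}{2}\right).
\]

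\emph{Upper tail.} Write the integrand as $((1-u)h(u))\,u$. For $u\ge\alpha$ we have $(1-u)h(u)\le(1-\alpha)h(\alpha)$, so
\[
\int_\alpha^1 u(1-u)h(u)\,du \le (1-\alpha)h(\alpha)\,\tfrac{1-\alpha^2}{2}.
\]
For $u\in[\beta,\alpha]$ we have $(1-u)h(u)\ge(1-\alpha)h(\alpha)$, so the middle integral is at least
\[
(1-\alpha)h(\alpha)\,\tfrac{\alpha^2-\beta^2}{2}.
\]

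In each comparison the unknown factor ($\beta h(\beta)$, respectively $(1-\alpha)h(\alpha)$) cancels. The hypothesis
\[
\frac{C-1}{2}\min\left(\frac{\alpha^2-\beta^2}{2},\ \alpha-\beta-\frac{\alpha^2-\beta^2}{2}\right)\ge\max\left(\beta-\frac{\beta^2}{2},\ \frac{1-\alpha^2}{2}\right)
\]
then yields exactly that each tail is at most $\frac{C-1}{2}$ times the middle integral. Adding the two tails and the middle gives $\FB\le C\int_{F^{-1}(\beta)}^{F^{-1}(\alpha)}F(1-F)\,dx$.

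The main obstacle is choosing the right factorization of $u(1-u)h(u)$ for each tail. It must be anchored at the correct endpoint, so that the same monotone quantity bounds the tail from above and the middle from below. A secondary point is justifying the change of variables when $F'$ vanishes or $F^{-1}$ has jumps at the support edges. I would handle this by working with the support interval and noting that the integrals over sets where $F\in\{0,1\}$ vanish.
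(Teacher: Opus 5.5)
Your proof is correct and is essentially the argument behind this imported lemma: the paper does not reprove it, but its proof of the $1/180$ profit bound in the stochastic-dominance setting adapts the same technique. There, the buyer-MHR quantity $(1-F)/F'$ is anchored at $F^{-1}(\alpha)$ and the seller-MHR quantity at the $\beta$-quantile, bounding each tail from above and the middle integral from below; you have rewritten exactly this in quantile coordinates.

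The only difference is in how the bounds are combined. The paper averages the two lower bounds on the middle integral and compares $\frac{C-1}{2}\min(\cdot,\cdot)\,(A+B)$ against $\max(\cdot,\cdot)\,(A+B)$, with $A=\frac{1-\alpha}{F'(F^{-1}(\alpha))}$ and $B=\frac{\beta}{F'(F^{-1}(\beta))}$. You instead pair each tail with its matching lower bound, so you need only $\frac{C-1}{2}\cdot\frac{\alpha^2-\beta^2}{2}\ge\frac{1-\alpha^2}{2}$ and $\frac{C-1}{2}\bigl(\alpha-\beta-\frac{\alpha^2-\beta^2}{2}\bigr)\ge\beta-\frac{\beta^2}{2}$, both of which the stated hypothesis implies.
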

\begin{theorem}
\label{thm:sym-pro-const}
    Suppose $F$ is doubly MHR. Then, the mechanism that draws samples $p, q \sim F$ and offers price $\max(p, q)$ to the buyer and price $\min(p, q)$ to the seller achieves
    \begin{align*}
        \Pro \ge \frac{2}{55}\FB \ge \frac{2}{55}\Pro^*.
    \end{align*}
\end{theorem}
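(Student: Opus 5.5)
The plan is to write the profit in closed form via Lemma~\ref{lem:sample-pft}, restrict the resulting integral to a middle quantile range where $F(1-F)$ is bounded below, and then invoke Lemma~\ref{lem:hajia} with a concrete choice of parameters that produces exactly the constant $2/55$.

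\textbf{Step 1: profit in closed form.} As in the proof of Lemma~\ref{lem:sample-sym-gft}, the profit of the symmetric mechanism splits according to which sample is larger. By symmetry the two parts are equal, so
\begin{align*}
\Pro = 2\,\Exu{p,q\sim F}{\Pro(p,q)} .
\end{align*}
Here we use the convention $\Pro(p,q)=0$ when $p<q$. Applying Lemma~\ref{lem:sample-pft} with $G=F$ gives
\begin{align*}
\Pro = \frac{1}{2}\int_0^\infty F(x)^2(1-F(x))^2\,dx .
\end{align*}

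\textbf{Step 2: restrict to a middle quantile range.} Since the integrand is nonnegative, I drop everything outside $[F^{-1}(\beta),F^{-1}(\alpha)]$. On that interval $F(x)\in[\beta,\alpha]$. Because $t(1-t)$ is concave, it is at least $m:=\min(\beta(1-\beta),\alpha(1-\alpha))$ there, so $F^2(1-F)^2\ge m\,F(1-F)$ pointwise on the interval. This yields
\begin{align*}
\Pro \ge \frac{m}{2}\int_{F^{-1}(\beta)}^{F^{-1}(\alpha)} F(x)(1-F(x))\,dx .
\end{align*}
Lemma~\ref{lem:hajia} then gives $\Pro\ge \frac{m}{2C}\FB$ for any admissible triple $(\alpha,\beta,C)$. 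It is here that the doubly MHR assumption is used.

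\textbf{Step 3: choose the parameters (the main work).} The task is to pick $(\alpha,\beta,C)$ making $m/(2C)=2/55$. I take the symmetric choice $\alpha=1-\beta$. Then $m=\beta(1-\beta)$, and both terms inside the max coincide:
\begin{align*}
\frac{1-\alpha^2}{2}=\beta-\frac{\beta^2}{2}.
\end{align*}
Both terms inside the min also coincide, each equal to $(1-2\beta)/2$. The condition of Lemma~\ref{lem:hajia} therefore reduces to
\begin{align*}
\frac{C-1}{2}\cdot\frac{1-2\beta}{2}\ge \beta-\frac{\beta^2}{2},
\qquad\text{i.e.}\qquad
C=1+\frac{4\beta-2\beta^2}{1-2\beta}.
\end{align*}
Taking $\beta=1/5$ and $\alpha=4/5$ gives $C=1+\frac{0.72}{0.6}=\frac{11}{5}$ and $m=\frac{4}{25}$. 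Hence
\begin{align*}
\frac{m}{2C}=\frac{4/25}{22/5}=\frac{2}{55},
\end{align*}
which proves $\Pro\ge\frac{2}{55}\FB$.

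\textbf{Step 4: compare to the optimal profit.} For the second inequality, any BNIC and IR mechanism satisfies
\begin{align*}
\Pro = \GFT-\text{(buyer's expected utility)}-\text{(seller's expected utility)}.
\end{align*}
Both utilities are nonnegative by IR, so $\Pro\le\GFT$. Moreover $\GFT\le\FB$, since $\FB$ trades exactly when $v\ge c$. Taking the supremum over such mechanisms gives $\Pro^*\le\FB$, which completes the proof.
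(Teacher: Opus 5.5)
Your proposal is correct and follows the paper's proof essentially step for step. It uses the same closed form $\Pro=\frac{1}{2}\int_0^\infty F(x)^2(1-F(x))^2\,dx$ from Lemma~\ref{lem:sample-pft}, the same restriction to $[F^{-1}(\beta),F^{-1}(\alpha)]$ with the factor $\min(\alpha(1-\alpha),\beta(1-\beta))$, Lemma~\ref{lem:hajia}, and the same parameters $\alpha=4/5$, $\beta=1/5$, $C=11/5$. Beyond the paper, your symmetric choice $\alpha=1-\beta$ explains where its ``numerical'' constants come from, and your IR argument proves $\Pro^*\le\FB$, which the paper only asserts.
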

\begin{proof}
    Suppose we have constants $0 < \beta < \alpha < 1$ and $C \ge 1$ such that
    \begin{align*}
        \frac{C - 1}{2}\min\left(\frac{\alpha^2 - \beta^2}{2}, \alpha - \beta - \frac{\alpha^2 - \beta^2}{2}\right) \ge \max\left(\beta - \frac{\beta^2}{2}, \frac{1 - \alpha^2}{2}\right).
    \end{align*}
    Then, since $F$ is doubly-MHR, by Lemma~\ref{lem:hajia}, we have that
    \begin{align*}
        \int_{F^{-1}(\beta)}^{F^{-1}(\alpha)} F(x)(1 - F(x)) \, dx \ge \frac{1}{C}\FB,
    \end{align*}
    so
    \begin{align*}
        \Pro &= \Exu{p, q \sim F}{\Pro(p, q)\mathbf{1}\{p \ge q\}} +  \Exu{p, q \sim F}{\Pro(q, p)\mathbf{1}\{q \ge p\}}\\
        &= 2\Exu{p, q \sim F}{\Pro(p, q)\mathbf{1}\{p \ge q\}}\\
        &= 2\Exu{p, q \sim F}{\Pro(p, q)}\\
        &\ge \frac{1}{2}\int_{F^{-1}(\beta)}^{F^{-1}(\alpha)} F(x)^2(1 - F(x))^2 \, dx\\
        &\ge \frac{\min(\alpha(1 - \alpha), \beta(1 - \beta))}{2}\int_{F^{-1}(\beta)}^{F^{-1}(\alpha)} F(x)(1 - F(x)) \, dx\\
        &\ge \frac{\min(\alpha(1 - \alpha), \beta(1 - \beta))}{2C}\FB,
    \end{align*}
    where the first inequality holds by Lemma~\ref{lem:sample-pft}.
    By (roughly) numerically optimizing the parameters $\alpha, \beta,$ and $C,$ we take $\alpha = 4 / 5, \beta = 1 / 5,$ and $C = 11 / 5$ to obtain the desired result.
\end{proof}

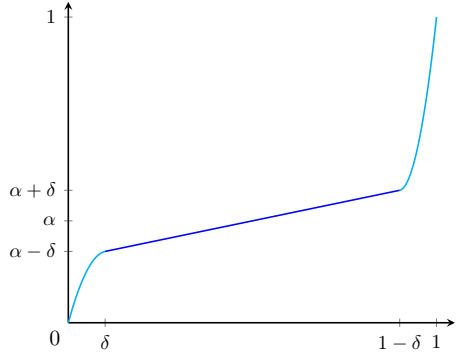
\begin{figure} 
\centering
\begin{tikzpicture}[scale=0.75]
\begin{axis}[
    axis lines = center,
    xmin=0, xmax=1.05,
    ymin=0, ymax=1.05,
    samples=200,
    domain=0:1,
    xtick={1/10, 9/10, 1},
    ytick={7/30, 1/3, 13/30, 1},
    xticklabels={$\delta$, $1 - \delta$, $1$},
    yticklabels={$\alpha - \delta$, $\alpha$, $\alpha + \delta$, $1$},
    ticklabel style={font=\small},
    thick,
    clip=false 
]

\addplot[domain=0:0.1, cyan] {7 / 3 * x - 125 / 6 * x * (x - 1 / 10)};
\addplot[domain=0.1:0.9, blue] {1 / 3 + 1 / 4 * (x - 1 / 2)};
\addplot[domain=0.9:1, cyan] {1 - 17 / 3 * (1 - x) + 325 / 6 * (1 - x) * (9 / 10 - x)};

\node[anchor=north east] at (axis cs:0,0) {$0$};

\end{axis}
\end{tikzpicture}
\caption{Graph of $H_{\alpha, \delta}(x)$}
\label{fig:dist}
\end{figure}

Now, taking a closer look at our proof of Theorem~\ref{thm:sample-const-apx-gft}, we observe that our lower bound can be made arbitrarily tight by taking $F(x) \approx 1 / 2.$ Following this intuition, we use polynomial interpolation to define a class of distributions $H_{\alpha, \delta}$ where $H_{\alpha, \delta} \approx \alpha$ (see Figure~\ref{fig:dist}):
\begin{definition}
    Let $0 < \alpha < 1$ and $\displaystyle 0 < \delta < \frac{\alpha}{2\alpha + 1}.$ Define $H_{\alpha, \delta, \QUAD}: [0, \delta] \rightarrow \mathbb{R}$ as 
    \begin{align*}
        H_{\alpha, \delta, \QUAD}(x) &= \frac{\alpha - \delta}{\delta}x + \frac{(2\alpha + 1)\delta - \alpha}{\delta^2(1 - 2\delta)}x(x - \delta)
    \end{align*}
    and 
    $H_{\alpha, \delta, \LIN}: [\delta, 1 - \delta] \rightarrow \mathbb{R}$ as 
    \begin{align*}
        H_{\alpha, \delta, \LIN}(x) &= \alpha + \frac{2\delta}{1 - 2\delta}\left(x - \frac{1}{2}\right).
    \end{align*}
\end{definition}
\begin{definition}
\label{def:dist}
     Let $0 < \alpha < 1$ and $\displaystyle 0 < \delta < \min\left(\frac{\alpha}{2\alpha + 1}, \frac{1 - \alpha}{3 - 2\alpha}\right).$ We define $H_{\alpha, \delta}: [0, 1] \rightarrow \mathbb{R}$ as
    \begin{align*}
        H_{\alpha, \delta}(x) &=
        \begin{cases}
            H_{\alpha, \delta, \QUAD}(x) & \text{ if $0 \le x \le \delta$}\\
            H_{\alpha, \delta, \LIN}(x) & \text{ if $\delta \le x \le 1 - \delta$}\\
            1 - H_{1 - \alpha, \delta, \QUAD}(1 - x) & \text{ if $1 - \delta \le x \le 1$}.
        \end{cases}
    \end{align*}
\end{definition}
We then use this class of distributions to obtain the following matching upper bound on the approximability of the first-best gains-from-trade achieved by our mechanism:
\begin{theorem}
\label{thm:sym-gft-ub}
    For any $\epsilon > 0$, there exists a symmetric instance of the bilateral trade problem such that the mechanism that draws samples $p, q \sim F$ and offers price $\max(p, q)$ to the buyer and price $\min(p, q)$ to the seller achieves $\GFT < \left(\nicefrac{7}{24} + \epsilon\right)\cdot \FB.$
\end{theorem}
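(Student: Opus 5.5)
We will take $F = H_{1/2,\delta}$ from Definition~\ref{def:dist} with $\alpha = 1/2$, and let $\delta \to 0$. The intuition comes from the proof of Theorem~\ref{thm:sample-const-apx-gft}. There the only slack is the inequality $F(x)^2(1-F(x))^2 \le \frac14 F(x)(1-F(x))$, which is tight exactly where $F(x) = 1/2$. The distribution $H_{1/2,\delta}$ sits essentially at the value $1/2$ on the whole interval $[\delta, 1-\delta]$, and it leaves that level only on two short boundary pieces of length $\delta$.

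The first step is to check that $H_{1/2,\delta}$ is a valid CDF on $[0,1]$. We need $H(0)=0$ and $H(1)=1$, continuity at $x=\delta$ and $x=1-\delta$, and monotonicity on each piece. Monotonicity of the quadratic pieces is presumably what the constraint $\delta < \min(\alpha/(2\alpha+1), (1-\alpha)/(3-2\alpha))$ guarantees; it should follow from checking the sign of the derivative at the two endpoints of each quadratic piece. The distribution is bounded and absolutely continuous, so it is a legitimate symmetric instance.

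The second step is to compute the two sides. Lemma~\ref{lem:sample-sym-gft} gives $\GFT = \frac13 I_1 - \frac16 I_2$, where $I_1 = \int_0^1 F(1-F)\,dx$ and $I_2 = \int_0^1 F^2(1-F)^2\,dx$. We also use $\FB = I_1$ in the symmetric case, as in the last line of the proof of Theorem~\ref{thm:sample-const-apx-gft}. We split each integral over $[0,\delta]$, $[\delta,1-\delta]$ and $[1-\delta,1]$:
\begin{itemize}
\item On the two boundary pieces the integrands are bounded by $1/4$, so each piece contributes $O(\delta)$.
\item On the middle piece, $F(x) = \frac12 + O(\delta)$ uniformly. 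Hence $F(1-F) = \frac14 - O(\delta^2)$ and $F^2(1-F)^2 = \frac1{16} - O(\delta^2)$.
\end{itemize}
It follows that $I_1 = \frac14 + O(\delta)$ and $I_2 = \frac1{16} + O(\delta)$. Therefore
\[
\frac{\GFT}{\FB} = \frac13 - \frac16\cdot\frac{I_2}{I_1} = \frac13 - \frac16\cdot\frac14 + O(\delta) = \frac{7}{24} + O(\delta).
\]

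The final step is to choose $\delta$ small enough that the $O(\delta)$ error is below $\epsilon$; this gives strict inequality. Exact strictness should hold automatically, because $F^2(1-F)^2 < \frac14 F(1-F)$ on a set of positive measure near the boundaries, which forces $\GFT > \frac7{24}\FB$ exactly. The $O(\delta)$ bound still falls below $\frac7{24} + \epsilon$.

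The main obstacle is the first step: verifying that the quadratic pieces of Definition~\ref{def:dist} are monotone and stay in $[0,1]$. If that check becomes awkward, I would instead use a simpler piecewise-linear CDF: rise linearly from $0$ to $1/2$ on $[0,\delta]$, stay near $1/2$, then rise from $1/2$ to $1$ on $[1-\delta,1]$. The same $O(\delta)$ estimate goes through unchanged, because it only uses that $F \approx 1/2$ off a set of measure $O(\delta)$ and that $I_1$ is bounded below by a constant.
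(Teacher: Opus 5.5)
Your proposal is correct and takes essentially the same route as the paper. Both use the instance $F = H_{1/2,\delta}$, apply Lemma~\ref{lem:sample-sym-gft} with the first best equal to $\int_0^1 F(1-F)\,dx$, and let $\delta \to 0$, relying on $F$ staying within $\delta$ of $1/2$ outside a set of measure $2\delta$. The only difference is cosmetic: the paper writes the gains-from-trade as $\frac{7}{24}$ times the first best plus the nonnegative slack $\frac16\int_0^1 F(1-F)\bigl(\frac14 - F(1-F)\bigr)\,dx = O(\delta)$, and lower-bounds the first best by $(1-2\delta)(\frac12-\delta)^2$ via the event $\{v \ge 1-\delta,\ c \le \delta\}$, whereas you estimate $I_1$ and $I_2$ separately.
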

\begin{proof}
Let $F = H_{1 / 2, \delta}$ for $0 < \delta < 1 / 4.$ Then, by Lemma~\ref{lem:sample-sym-gft}, we have that 
\begin{align}
    \GFT &= \frac{1}{3}\int_0^1 F(x)(1 - F(x)) \, dx - \frac{1}{6}\int_0^1 F(x)^2(1 - F(x))^2 \, dx \nonumber\\
    &=\frac{1}{3}\int_0^1 F(x)(1 - F(x)) \, dx - \frac{1}{6}\parans{\frac{1}{4}\int_0^1 F(x)(1 - F(x)) \, dx} \nonumber\\
    & +\frac{1}{6} \int_0^1 F(x)(1 - F(x))\left(\frac{1}{4} - F(x)(1 - F(x))\right) \, dx \nonumber\\
    &=\frac{7}{24}\int_0^1 F(x)(1 - F(x)) \, dx + \frac{1}{6} \int_0^1 F(x)(1 - F(x))\left(\frac{1}{4} - F(x)(1 - F(x))\right) \, dx\label{eq:12200810}.
\end{align}
We now analyze the second integral in~\eqref{eq:12200810}. Since $F(x) = 1 - F(1 - x)$ for all $1 / 2 \le x \le 1,$ by symmetry, we have that
\begin{align*}
    &\int_0^1 F(x)(1 - F(x))\left(\frac{1}{4} - F(x)(1 - F(x))\right) \, dx\\
    &= 2\int_0^{1 / 2} F(x)(1 - F(x))\left(\frac{1}{4} - F(x)(1 - F(x))\right) \, dx\\
    &\le \frac{1}{2}\int_0^{1 / 2} \left(\frac{1}{4} - F(x)(1 - F(x))\right) \, dx\\
    &= \frac{1}{2}\left(\int_0^\delta \left(\frac{1}{4} - F(x)(1 - F(x))\right) \, dx + \int_\delta^{1 / 2} \left(\frac{1}{4} - F(x)(1 - F(x))\right) \, dx\right)\\
    &\le \frac{1}{2}\left(\frac{1}{4}\delta + \int_\delta^{1 / 2} \left(\frac{1}{4} - F(x)(1 - F(x))\right) \, dx\right).
\end{align*}
Now, using the fact that $1 / 2 - \delta \le F(x) \le 1 / 2$ on $[\delta, 1 / 2],$ we upper-bound the integral above to obtain that
\begin{align*}
    \int_\delta^{1 / 2} \left(\frac{1}{4} - F(x)(1 - F(x))\right) \, dx &\le \int_\delta^{1 / 2} \left(\frac{1}{4} -\frac{1}{2}\left(\frac{1}{2} - \delta\right)\right) \, dx = \frac{1}{2}\delta\left(\frac{1}{2} - \delta\right).
\end{align*}
Thus,
\begin{align*}
    \int_0^1 F(x)(1 - F(x))\left(\frac{1}{4} - F(x)(1 - F(x))\right) \, dx &\le \frac{1}{2}\left(\frac{1}{4}\delta + \frac{1}{2}\delta\left(\frac{1}{2} - \delta\right)\right) \xrightarrow{\delta \rightarrow 0} 0, 
\end{align*}
so
\begin{align}
    \int_0^1 F(x)(1 - F(x))\left(\frac{1}{4} - F(x)(1 - F(x))\right) \, dx \xrightarrow{\delta \rightarrow 0} 0.\label{eq:12200808}
\end{align}
Finally, we have that
\begin{align}
    \FB &\ge \Exu{v, c \sim F}{(v - c)\mathbf{1}\left\{v \ge 1 - \delta, c \le \delta\right\}} \nonumber\\
    &\ge \Exu{v, c \sim F}{(1 - 2\delta)\mathbf{1}\left\{v \ge 1 - \delta, c \le \delta\right\}} \nonumber\\
    &\ge (1 - 2\delta)\Pru{v \sim F}{v \ge 1 - \delta}\Pru{c \sim G}{c \le \delta} \nonumber\\
    &= (1 - 2\delta)\left(\frac{1}{2} - \delta\right)^2 
    \ge \frac{1}{2} \cdot \left(\frac{1}{4}\right)^2\label{eq:12200809}
\end{align}
where the last inequality holds since $\delta < 1 / 4.$ In particular, $\FB$ is bounded away from $0.$  
Hence, by~\eqref{eq:12200810},~\eqref{eq:12200808}, and~\eqref{eq:12200809}, we have that $\displaystyle \frac{\GFT}{\FB} \xrightarrow{\delta \to 0} \frac{7}{24}$.
\end{proof}
A similar observation holds for our proof of Theorem~\ref{thm:sample-const-apx-sw}, except that we want to take $F(x) \approx 1$ instead: 
\begin{theorem}
\label{thm:sym-sw-ub}
    For any $\epsilon >0$, there exists a symmetric instance of the bilateral trade problem such that the mechanism that draws samples $p, q \sim F$ and offers price $\max(p, q)$ to the buyer and price $\min(p, q)$ to the seller achieves $\SW < \left(\nicefrac{2}{3} + \epsilon\right)\cdot \FBW.$
\end{theorem}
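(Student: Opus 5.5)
The idea is to mirror the proof of Theorem~\ref{thm:sym-gft-ub}, now choosing $F = H_{\alpha,\delta}$ with $\alpha$ close to $1$. Then $F(x)\approx 1$ on most of $[0,1]$, the seller's mean is small, and the first-order terms dominate.

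\textbf{Step 1: an exact expression for the welfare.} Write $\SW = \Exu{c\sim F}{c} + \GFT$ and $\FBW = \Exu{c\sim F}{c} + \FB$. On $[0,1]$ we have
\begin{align*}
\Exu{c\sim F}{c} = \int_0^1 (1-F(x))\,dx, \qquad \FB = \int_0^1 F(x)(1-F(x))\,dx .
\end{align*}
The second identity is the standard symmetric formula already used in the proof of Theorem~\ref{thm:sample-const-apx-gft}. Using Lemma~\ref{lem:sample-sym-gft} for $\GFT$, and setting $u(x) = 1-F(x)$, this gives
\begin{align*}
\SW = \int_0^1 \Big(u + \tfrac13 u(1-u) - \tfrac16 u^2(1-u)^2\Big)dx, \qquad \FBW = \int_0^1 \big(u + u(1-u)\big)\,dx .
\end{align*}
For small $u$ the integrands are $\tfrac43 u + O(u^2)$ and $2u + O(u^2)$. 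Their ratio tends to $\tfrac{4/3}{2} = \tfrac23$. This matches the lower bound of Theorem~\ref{thm:sample-const-apx-sw}, since the pointwise ratio of the integrands is minimized as $u\to 0$.

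\textbf{Step 2: the construction and the error terms.} Take $F = H_{\alpha,\delta}$ with $\alpha = 1-\eta$, and let $\delta\to 0$ first, then $\eta\to 0$. We need $\delta < \tfrac{1-\alpha}{3-2\alpha}$, which is roughly $\eta$, so we may choose $\delta = \eta^2$. We split $[0,1]$ into three pieces.

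On the middle interval $[\delta, 1-\delta]$, we have $u = 1 - H_{\alpha,\delta,\LIN}(x) \in [\eta-\delta,\ \eta+\delta]$. So each integrand equals its linear part times $(1+O(\eta))$.

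On the two end intervals, which have total length $2\delta$, each integrand is bounded by a constant. Their contribution is therefore $O(\delta)$.

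Collecting these, and using $u \approx \eta$ on the middle interval,
\begin{align*}
\SW = \tfrac43\eta + O(\eta^2) + O(\delta), \qquad \FBW = 2\eta + O(\eta^2) + O(\delta).
\end{align*}
With $\delta = \eta^2$, the ratio $\SW/\FBW$ is $\tfrac23 + O(\eta)$. Choosing $\eta$ small enough makes this less than $\tfrac23 + \epsilon$.

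\textbf{Main obstacle.} The delicate point is the order of the parameters. Both $\SW$ and $\FBW$ are of order $\eta$, so the $O(\delta)$ contribution from the end intervals must be $o(\eta)$. This is why $\delta$ must be taken much smaller than $\eta$, and it is compatible with the constraint $\delta < \tfrac{\eta}{1+2\eta}$ in Definition~\ref{def:dist}.

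We should also check that $H_{\alpha,\delta}$ is a valid distribution function on $[0,1]$: nondecreasing, with $H(0)=0$ and $H(1)=1$. This is presumably what the constraints on $\delta$ in the definition guarantee, and we will take it as given, as was implicitly done in Theorem~\ref{thm:sym-gft-ub}.
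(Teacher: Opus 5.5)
Your proof is correct, but it takes a different route from the paper's.

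The paper uses the one-parameter family $F(x)=x^\delta$ on $[0,1]$, where everything has a closed form. One gets $\FBW=\int_0^1(1-F(x)^2)\,dx=\frac{2\delta}{2\delta+1}$, and $\SW$ is an explicit combination of the terms $\frac{1}{k\delta+1}$, $k=1,\dots,4$. A first-order expansion then gives $\SW=\frac{4}{3}\delta+O(\delta^2)$, hence $\SW/\FBW\to\frac{2}{3}$. There are no error terms to control, and the appendix lemmas showing that $H_{\alpha,\delta}$ is a CDF are not needed.

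You instead reuse the piecewise family $H_{1-\eta,\delta}$ from Theorem~\ref{thm:sym-gft-ub}, which forces a two-parameter limit. You correctly isolate the one delicate point: both $\SW$ and $\FBW$ are of order $\eta$, so the $O(\delta)$ end-interval contributions must be $o(\eta)$. Taking $\delta=\eta^2$ handles this and respects $\delta<\frac{\eta}{1+2\eta}$. The CDF property you take as given is in fact proved in the paper's appendix, so nothing is missing.

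What your formulation buys is a transparent explanation of the constant. With $u=1-F$, the pointwise ratio of the integrands is $\frac{8-3u+2u^2-u^3}{6(2-u)}=\frac{2}{3}+\frac{u(1+2u-u^2)}{6(2-u)}$, which tends to $\frac{2}{3}$ only as $u\to 0$. So any family in which $\int(1-F)$ is asymptotically carried where $1-F$ is small is extremal; $x^\delta$ and $H_{1-\eta,\eta^2}$ are two such instances. The paper's choice is shorter and fully explicit.
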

\begin{proof}
Let $F(x) = x^\delta$ for $\delta > 0.$ Then, 
\begin{align*}
    \FBW &= \Exu{c \sim F}{c} + \FB\\
    &= \int_0^1 (1 - F(x)) \, dx + \int_0^1 F(x)(1 - F(x)) \, dx\\
    &= \int_0^1 (1 - F(x)^2) \, dx\\
    &= \int_0^1 (1 - x^{2\delta}) \, dx\\
    &= 1 - \frac{1}{2\delta + 1} = \frac{2\delta}{2\delta + 1}.
\end{align*}
In addition, as in the proof of Theorem~\ref{thm:sample-const-apx-sw}, we have that
\begin{align*}
    \SW &= \frac{1}{6}\left(\int_0^1 (1 - F(x)^2)(F(x)^2 - 2F(x) + 4) \, dx + 2\int_0^1 (1 - F(x)) \, dx\right)\\
    &= \frac{1}{6}\left(\int_0^1 (4 - 2x^\delta - 3x^{2\delta} + 2x^{3\delta} - x^{4\delta}) \, dx + 2\int_0^1 (1 - x^\delta) \, dx\right)\\
    &= \frac{1}{6}\left(\left(4 - \frac{2}{\delta + 1} - \frac{3}{2\delta + 1} + \frac{2}{3\delta + 1} - \frac{1}{4\delta + 1}\right) + 2\left(1 - \frac{1}{\delta + 1}\right)\right)\\
    &= 1 - \frac{2 / 3}{\delta + 1} - \frac{1 / 2}{2\delta + 1} + \frac{1 / 3}{3\delta + 1} - \frac{1 / 6}{4\delta + 1}.
\end{align*}
Now, letting $h(\delta) = (\delta + 1)(2\delta + 1)(3\delta + 1)(4\delta + 1),$ we obtain that
\begin{align*}
    \SW &= \frac{1 + 10\delta + O(\delta^2)}{h(\delta)} - \frac{2(1 + 9\delta + O(\delta^2))}{3h(\delta)} - \frac{1 + 8\delta + O(\delta^2)}{2h(\delta)}\\
    &+ \frac{1 + 7\delta + O(\delta^2)}{3h(\delta)} - \frac{1 + 6\delta + O(\delta^2)}{6h(\delta)} = \frac{4\delta + O(\delta^2)}{3h(\delta)},
\end{align*}
where the big-O notation hides dominated terms with respect to $\delta$ as $\delta \rightarrow 0.$ Thus,
\begin{align*}
    \frac{\SW}{\FBW} &= \frac{2 + O(\delta)}{3(\delta + 1)(3\delta + 1)(4\delta + 1)} \xrightarrow{\delta \rightarrow 0} \frac{2}{3}.
\end{align*}
\end{proof}

\paragraph{Stochastic dominance setting}
In addition to the symmetric setting, we study the setting where the buyer's distribution stochastically dominates the seller's distribution. We first use Lemma~\ref{lem:sample-gft} to lower-bound the gains-from-trade of our single-sample mechanism in the stochastic dominance setting:
\begin{lemma}
\label{lem:sample-stoc-dom-gft}
    Suppose $F$ stochastically dominates $G.$ Then, the mechanism that draws samples $p \sim F, q \sim G$ and offers price $p$ to the buyer and price $q$ to the seller achieves
    \begin{align*}
         \GFT &\ge \frac{1}{12}\int_0^\infty (3G(x)^2(1 - F(x))^2 + 2F(x)^3(1 - F(x)) + 2(1 - G(x))^3G(x)) \, dx.
    \end{align*}
\end{lemma}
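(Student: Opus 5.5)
Starting from Lemma~\ref{lem:sample-gft}, the first term $\frac{1}{4}\int_0^\infty G(x)^2(1-F(x))^2\,dx$ already matches the term $\frac{3}{12}\int_0^\infty G(x)^2(1-F(x))^2\,dx$ in the claimed bound. So it suffices to show that the other two terms are bounded below by $\frac{1}{6}\int_0^\infty F(x)^3(1-F(x))\,dx$ and $\frac{1}{6}\int_0^\infty (1-G(x))^3G(x)\,dx$, respectively. We use stochastic dominance, $G(x)\ge F(x)$ for all $x$, to replace the mismatched distribution by the other one, and then repeat the integration-by-parts computation from the proof of Lemma~\ref{lem:sample-sym-gft}.

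For the second term of Lemma~\ref{lem:sample-gft}, $G(q)\ge F(q)\ge 0$ gives $G(q)^2\ge F(q)^2$, and the inner integral $\int_q^\infty(1-F(x))\,dx$ is nonnegative. Hence
\begin{align*}
\frac{1}{2}\int_0^\infty G(q)^2\int_q^\infty (1-F(x))\,dx\,dF(q) \ge \frac{1}{2}\int_0^\infty F(q)^2\int_q^\infty (1-F(x))\,dx\,dF(q).
\end{align*}
The right-hand side is exactly the quantity computed in the proof of Lemma~\ref{lem:sample-sym-gft}. Integrating by parts over $F(q)^2F'(q)$ and $\int_q^\infty(1-F(x))\,dx$ gives $\frac{1}{3}\int_0^\infty F(x)^3(1-F(x))\,dx$, so this term contributes $\frac{1}{6}\int_0^\infty F^3(1-F)\,dx$.

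For the third term, $1-F(p)\ge 1-G(p)\ge 0$, so $(1-F(p))^2\ge(1-G(p))^2$. Since $\int_0^p G(x)\,dx\ge 0$ and $dG$ is a nonnegative measure,
\begin{align*}
\frac{1}{2}\int_0^\infty (1-F(p))^2\int_0^p G(x)\,dx\,dG(p) \ge \frac{1}{2}\int_0^\infty (1-G(p))^2\int_0^p G(x)\,dx\,dG(p).
\end{align*}
Integrating by parts over $(1-G(p))^2G'(p)$ and $\int_0^p G(x)\,dx$, exactly as in the symmetric case, yields $\frac{1}{6}\int_0^\infty (1-G(x))^3G(x)\,dx$. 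Adding the three pieces gives the claimed bound.

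The only real obstacle is arranging the comparisons so that each integrand ends up depending on a single distribution, which is what makes the integration-by-parts step possible. This works because we replace $G$ by $F$ in the second term, where the measure is $dF$, and $F$ by $G$ in the third, where the measure is $dG$. It remains to check that the boundary terms vanish, which holds because the distributions are bounded: $\int_q^\infty(1-F)\to 0$ at the upper end, $F(0)$ multiplies the boundary value at $0$, and $(1-G(p))^3\to 0$ at the top of the support.
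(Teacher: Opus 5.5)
Your proposal is correct and follows the paper's proof essentially step for step. You start from Lemma~\ref{lem:sample-gft}, use $G \ge F$ to replace $G(q)^2$ by $F(q)^2$ in the second term and $(1-F(p))^2$ by $(1-G(p))^2$ in the third, and then reuse the integration by parts from Lemma~\ref{lem:sample-sym-gft} to obtain $\frac{1}{3}\int_0^\infty F^3(1-F)\,dx$ and $\frac{1}{3}\int_0^\infty (1-G)^3G\,dx$.
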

\begin{proof}
By Lemma~\ref{lem:sample-gft}, we have that
\begin{align*}
    \GFT &= \Exu{p \sim F, q \sim G}{\GFT(p, q)}\\
    &= \frac{1}{4}\int_0^\infty G(x)^2(1 - F(x))^2 \, dx + \frac{1}{2}\int_0^\infty G(q)^2 \int_q^{\infty}(1-F(x)) \, dx \, dF(q)\\
    &+ \frac{1}{2}\int_0^\infty (1 - F(p))^2 \int_0^p G(x) \, dx \, dG(p).
\end{align*}
Now, since $F$ stochastically dominates $G,$ we have that
\begin{align*}
    \int_0^\infty G(q)^2 \int_q^{\infty}(1-F(x)) \, dx \, dF(q) &\ge \int_0^\infty F(q)^2 \int_q^{\infty}(1-F(x)) \, dx \, dF(q)\\
    &= \frac{1}{3}\int_0^\infty F(x)^3(1 - F(x)) \, dx,
\end{align*}
where the last equality holds as in the proof of Lemma~\ref{lem:sample-sym-gft}.
Similarly, we have that
\begin{align*}
    \int_0^\infty (1 - F(p))^2 \int_0^p G(x) \, dx \, dG(p) &\ge \int_0^\infty (1 - G(p))^2 \int_0^p G(x) \, dx \, dG(p)\\
    &= \frac{1}{3}\int_0^\infty (1 - G(x))^3G(x) \, dx.
\end{align*}
Thus,
\begin{align*}
    \GFT &\ge \frac{1}{12}\int_0^\infty (3G(x)^2(1 - F(x))^2 + 2F(x)^3(1 - F(x)) + 2(1 - G(x))^3G(x)) \, dx.
\end{align*}
\end{proof}
Analogous to the symmetric setting, we next prove the following lower bound on the approximability of the first-best gains-from-trade achieved by our mechanism:
\begin{theorem}
\label{thm:sample-asym-const-gft}
    Suppose $F$ stochastically dominates $G.$ Then, the mechanism that draws samples $p \sim F, q \sim G$ and offers price $p$ to the buyer and price $q$ to the seller achieves $\GFT \ge C \cdot \FB,$ where $C \approx 0.1254.$
    
\end{theorem}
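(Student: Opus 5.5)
The plan is to reduce the theorem to a pointwise inequality between integrands. The first-best gains-from-trade has the layer-cake form
\begin{align*}
    \FB = \Exu{v \sim F, c \sim G}{(v - c)\mathbf{1}\{v \ge c\}} = \int_0^\infty \Pr[c \le x \le v] \, dx = \int_0^\infty G(x)(1 - F(x)) \, dx,
\end{align*}
which holds by independence of $v$ and $c$. Lemma~\ref{lem:sample-stoc-dom-gft} already bounds $\GFT$ from below by an integral over $x$ of a polynomial in $a = G(x)$ and $b = F(x)$. Stochastic dominance says exactly that $1 \ge a \ge b \ge 0$ at every $x$. So it suffices to show that for all such $(a, b)$,
\begin{align*}
    3a^2(1 - b)^2 + 2b^3(1 - b) + 2(1 - a)^3 a \ge 12 C \cdot a(1 - b).
\end{align*}
Integrating this over $x$ and combining with Lemma~\ref{lem:sample-stoc-dom-gft} gives $\GFT \ge C \cdot \FB$. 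Wherever $a(1-b) = 0$, the inequality is trivial because the left side is nonnegative.

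Accordingly, I define $C$ as the infimum over the region $R = \{0 \le b \le a \le 1,\ a(1 - b) > 0\}$ of
\begin{align*}
    \Phi(a, b) = \frac{3a^2(1 - b)^2 + 2b^3(1 - b) + 2(1 - a)^3 a}{12\, a(1 - b)}.
\end{align*}
The main work is to show this infimum is about $0.1254$ and positive. I would first examine the boundary of $R$.
\begin{itemize}
    \item On the diagonal $a = b$ (the symmetric case), $\Phi$ is bounded below by the symmetric analysis. For example, at $a = b = 1/2$ it equals $7/48$.
    \item On the edge $b = 0$, $\Phi = (3a + 2(1 - a)^3)/12$. This is minimized at $1 - a = 1/\sqrt{2}$, giving $(3 - \sqrt{2})/12 \approx 0.132$.
    \item On the edge $a = 1$, $\Phi = (3(1 - b)^2 + 2b^3(1 - b))/(12(1 - b))$, which is handled in the same way.
    \item Near the degenerate corners where $a \to 0$ or $b \to 1$, the terms $2(1-a)^3a/(12a) \to 1/6$ or $2b^3(1-b)/(12(1-b)) \to 1/6$ dominate, so $\Phi$ stays bounded away from $0$ there.
\end{itemize}

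Since the boundary values are all above $0.1254$, the minimum should be at an interior critical point. There I would solve $\partial_a \Phi = \partial_b \Phi = 0$, that is, a system of two polynomial equations after clearing the denominator $a(1 - b)$. The substitution $u = a$, $w = 1 - b$ makes the expression more symmetric: the numerator becomes $3u^2w^2 + 2(1 - w)^3 w + 2(1 - u)^3 u$ over $12uw$, and it is symmetric under swapping $u$ and $w$. This suggests first checking whether the minimizer lies on the line $u = w$, i.e.\ $a + b = 1$, which reduces the problem to a one-variable quartic minimization. If the minimizer instead lies off that line, I would rely on a numerical minimization together with a certified lower bound.

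The main obstacle is certifying the two-variable minimum rigorously rather than just numerically. My first attempt would be to use the symmetry to restrict to $u = w$ and then verify, via a second-order or sum-of-squares style argument, that moving off that line only increases $\Phi$. Failing that, I would fall back on a grid argument with Lipschitz bounds on $\Phi$ over the compact part of the region, using the corner analysis above to handle neighborhoods of $a = 0$ and $b = 1$.
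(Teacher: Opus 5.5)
Your proposal is correct and is essentially the paper's proof: the paper also combines Lemma~\ref{lem:sample-stoc-dom-gft} with $\FB = \int_0^\infty G(x)(1-F(x))\,dx$ to reduce to the same pointwise inequality, and it takes $C$ to be the minimum of the same ratio over $0 \le f \le g \le 1$. The only difference is that the paper solves this minimization numerically, finding the minimizer $(f,g) \approx (0.3909, 0.6091)$. That point lies on your symmetry line $a+b=1$, where stationarity reduces to $t^3 - 6t^2 + 2 = 0$ with $t = a$, so your boundary analysis and planned certification go beyond the paper's level of rigor rather than falling short of it.
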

\begin{proof}
Consider the following optimization problem:
\begin{equation*}
    \begin{aligned}
        & \underset{f, g}{\text{minimize}}
        & & \frac{3g^2(1 - f)^2 + 2f^3(1 - f) + 2(1 - g)^3g}{12g(1 - f)} \\
        & \text{subject to}
        & & 0 \le f \le g \le 1.
    \end{aligned}
\end{equation*}
Solving this numerically,\footnote{\label{ft:wolfram}via WolframAlpha} we obtain that the optimum $C \approx 0.1254$ occurs at $(f, g) \approx (0.3909, 0.6091).$ Thus,
\begin{align*}
    \frac{1}{12}(3G(x)^2(1 - F(x))^2 + 2F(x)^3(1 - F(x)) + 2(1 - G(x))^3G(x)) \ge C \cdot G(x)(1 - F(x))
\end{align*}
for all $x \ge 0,$ so by Lemma~\ref{lem:sample-stoc-dom-gft},
\begin{align*}
    \GFT \ge C\int_0^\infty G(x)(1 - F(x)) \, dx = C \cdot \FB.
\end{align*}
\end{proof}

We similarly prove the following lower bound on the approximability of the first-best social welfare achieved by our mechanism:
\begin{theorem}
\label{thm:sample-asym-const-sw}
    Suppose $F$ stochastically dominates $G.$ Then, the mechanism that draws samples $p \sim F, q \sim G$ and offers price $p$ to the buyer and price $q$ to the seller achieves $\SW \ge \parans{\nicefrac{3 - \sqrt{2}}{12}} \cdot \FBW.$
\end{theorem}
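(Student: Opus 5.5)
The plan is to follow the template of Theorem~\ref{thm:sample-asym-const-gft}: write both $\SW$ and $\FBW$ as integrals over $x$, and then compare the integrands pointwise.

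\textbf{Writing $\FBW$ as an integral.} We have
\[
\Exu{c\sim G}{c}=\int_0^\infty (1-G(x))\,dx
\qquad\text{and}\qquad
\FB=\int_0^\infty G(x)(1-F(x))\,dx .
\]
The second identity is the one used at the end of the proof of Theorem~\ref{thm:sample-asym-const-gft}. Adding them gives
\[
\FBW=\int_0^\infty \bigl(1-F(x)G(x)\bigr)\,dx .
\]

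\textbf{Lower-bounding $\SW$.} Since $\SW=\Exu{c\sim G}{c}+\GFT$, Lemma~\ref{lem:sample-stoc-dom-gft} gives
\[
\SW \ge \int_0^\infty \Bigl[(1-G) + \tfrac{1}{12}\bigl(3G^2(1-F)^2+2F^3(1-F)+2(1-G)^3G\bigr)\Bigr]dx .
\]

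\textbf{Reduction to two variables.} Stochastic dominance says exactly that $0\le F(x)\le G(x)\le 1$ at every $x$. So it suffices to prove the inequality
\[
\Phi(f,g):=(1-g)+\tfrac{1}{12}\bigl(3g^2(1-f)^2+2f^3(1-f)+2(1-g)^3g\bigr)\;\ge\; \tfrac{3-\sqrt2}{12}\,(1-fg)
\]
for all $0\le f\le g\le 1$. Integrating it over $x$ then yields the theorem.

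\textbf{Locating the extremal point.} At $g=1$ the inequality becomes
\[
3(1-f)^2+2f^3(1-f)\ge (3-\sqrt2)(1-f).
\]
Dividing by $1-f$, the left side is $3(1-f)+2f^3$. This is minimized at $f=1/\sqrt2$, where it equals $3-\sqrt2$. Hence $g=1$, $f=1/\sqrt2$ is the tight point and the constant $\tfrac{3-\sqrt2}{12}$ is exactly what is needed. The remaining job is to show that moving $g$ below $1$ cannot make things worse.

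\textbf{Main obstacle: the case $g<1$.} Set $h=1-g\ge 0$ and $C=\tfrac{3-\sqrt2}{12}\approx 0.132$. I would compare $\Phi(f,1-h)$ against $\Phi(f,1)$. The right-hand side increases by only $Cfh\le Ch$. The left-hand side gains $h$ from the linear term, gains the nonnegative term $\tfrac{1}{6}h^3(1-h)$, and loses at most
\[
\tfrac14(1-f)^2\bigl(1-(1-h)^2\bigr)\le \tfrac12 h(1-f)^2 \le \tfrac12 h .
\]
The net change in $\Phi$ is therefore at least $h/2$, which exceeds $Ch$. So the slack only grows as $h$ increases, and the $g=1$ case already proven finishes the argument.

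The constraint $f\le g$ is never used in this step, so it does not obstruct. If the estimate turns out too loose in some corner (for example when $f$ is near $0$ and $h$ is large), I would fall back to a direct numerical or WolframAlpha minimization of $\Phi/(1-fg)$ over the triangle, as in Theorem~\ref{thm:sample-asym-const-gft}. I would then confirm that the minimum is attained at $(f,g)=(1/\sqrt2,1)$.
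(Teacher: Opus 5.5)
Your proof is correct and has the same skeleton as the paper's. Both lower-bound $\SW$ by $\Exu{c \sim G}{c}$ plus the bound of Lemma~\ref{lem:sample-stoc-dom-gft}, write $\FBW=\int_0^\infty (1-F(x)G(x))\,dx$, and reduce to the pointwise inequality $\Phi(f,g)\ge \frac{3-\sqrt2}{12}(1-fg)$.

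The difference is in how that two-variable inequality is certified. The paper minimizes $\Phi(f,g)/(1-fg)$ over $0\le f\le g\le 1$ numerically (via WolframAlpha) and reports the minimizer $(1/\sqrt2,1)$. You actually prove it.

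Your reduction holds on the whole square. For all $f,h\in[0,1]$ you have $\Phi(f,1-h)-\Phi(f,1)=h-\frac14(1-f)^2(2h-h^2)+\frac16h^3(1-h)\ge h/2$, while the right-hand side grows only by $Cfh\le Ch$ with $C<1/2$. So there is no problematic corner, and the numerical fallback you mention is unnecessary. The $g=1$ case is then the one-variable minimization of $3(1-f)+2f^3$, with $f=1$ trivial.

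What your route buys is a self-contained, rigorous verification. It also shows the inequality holds on all of $[0,1]^2$, so stochastic dominance is needed only inside Lemma~\ref{lem:sample-stoc-dom-gft}. The paper's route is shorter to state but rests on a solver's output. Its ratio form is also $0/0$ at $f=g=1$, which your product form sidesteps.
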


Finally, we prove the following lower bound on the approximability of the optimal profit achieved by our mechanism, adapting a technique from~\citet{hajiaghayi2025bilateral} to the stochastic dominance setting and exploiting the MHR property to obtain our result.

\begin{theorem}
    Suppose $F$ stochastically dominates $G,$ where $F$ and $G$ both have monotone hazard rates. Then, the mechanism that draws samples $p \sim F, q \sim G$ and offers price $p$ to the buyer and price $q$ to the seller achieves $\Pro \ge \nicefrac{1}{180}\cdot \FB \ge \nicefrac{1}{180}\cdot \Pro^*.$
\end{theorem}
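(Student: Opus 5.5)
By Lemma~\ref{lem:sample-pft}, the profit of the mechanism is
\begin{align*}
\Pro = \frac{1}{4}\int_0^\infty G(x)^2(1 - F(x))^2 \, dx,
\end{align*}
and under stochastic dominance we have $\FB \le \int_0^\infty G(x)(1 - F(x)) \, dx$, because $\FB = \int_0^\infty G(x)(1-F(x))\,dx$ in general. So it suffices to show
\begin{align*}
\int_0^\infty G(x)^2(1-F(x))^2\,dx \ge \frac{1}{45}\int_0^\infty G(x)(1-F(x))\,dx.
\end{align*}
The second inequality $\FB \ge \Pro^*$ is immediate: any IR mechanism has profit at most its $\GFT$, since the buyer's and seller's expected utilities are nonnegative, and its $\GFT$ is at most $\FB$.

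I will mimic Lemma~\ref{lem:hajia} and the proof of Theorem~\ref{thm:sym-pro-const}. Fix a window $[a,b]$ with $a = G^{-1}(\beta)$ and $b = F^{-1}(\alpha)$, where $\beta$ is small and $\alpha$ is close to $1$.
\begin{itemize}
\item[(i)] On $[a,b]$ we have $G(x) \ge \beta$ and $1 - F(x) \ge 1 - \alpha$. Hence $G^2(1-F)^2 \ge \beta(1-\alpha)\,G(1-F)$ there, and it remains to show that the window captures a constant fraction of $\FB$:
\begin{align*}
\int_a^b G(1-F)\,dx \ge \frac{1}{C}\int_0^\infty G(1-F)\,dx.
\end{align*}
\item[(ii)] The left tail $\int_0^a G(1-F)\,dx \le \int_0^a G\,dx$ is controlled using the MHR property of the seller. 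Since $G/G'$ is nondecreasing, $\log G$ is concave, so $G$ decays at least exponentially to the left of $a$ at the rate set by $G'(a)/G(a)$. This gives $\int_0^a G \le c_1 \cdot(\text{something like } \int_a^{G^{-1}(\beta')} G)$ for a somewhat larger quantile $\beta'$.
\item[(iii)] Symmetrically, the right tail $\int_b^\infty (1-F)\,dx$ is controlled by the log-concavity of $1-F$, which is the buyer's MHR property. It is compared with the mass of $1-F$ on an interval just left of $b$.
\end{itemize}

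The main obstacle is step (i)--(iii) together, because here the asymmetry matters. In the symmetric case the interval $[F^{-1}(\beta),F^{-1}(\alpha)]$ is nonempty and $F$ controls both factors. Now we must make sure $a<b$ and that on the comparison intervals both $G$ and $1-F$ stay bounded below. Stochastic dominance ($G \ge F$) gives $G^{-1}(\beta) \le F^{-1}(\beta)$, so $a \le F^{-1}(\beta) < F^{-1}(\alpha) = b$ whenever $\beta<\alpha$. On $[a, F^{-1}(\beta)]$ we still have $1-F \ge 1-\beta$, which makes the integrand comparable to $G$. Likewise on $[G^{-1}(\alpha), b]$ we have $G \ge \alpha$, so the integrand is comparable to $1-F$.

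Using these, I would bound the left tail by a multiple of $\int_a^{F^{-1}(\beta)} G(1-F)$ via log-concavity of $G$, handling the degenerate case where this interval is short by falling back to $\int_a^b$. I would bound the right tail similarly. Finally, I would optimize the constants to get an overall factor of the form
\begin{align*}
\frac{\beta(1-\alpha)}{4C} = \frac{1}{180},
\end{align*}
for example $\beta = 1-\alpha = 1/3$ and $C = 5$ (or numerically tuned values). The exponential-tail comparison in the unbalanced case, where $G^{-1}(\beta)$ and $F^{-1}(\beta)$ are far apart, is the step I expect to require the most care.
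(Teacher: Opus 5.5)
Your skeleton matches the paper's proof. You use the same window $[a,b]=[G^{-1}(\beta),F^{-1}(\alpha)]$ and the same reduction $\Pro\ge\frac{\beta(1-\alpha)}{4C}\FB$. Your MHR tail bounds give the same quantities, $\int_0^a G(1-F)\,dx\le\beta\cdot\frac{\beta}{G'(a)}$ and $\int_b^\infty G(1-F)\,dx\le(1-\alpha)\cdot\frac{1-\alpha}{F'(b)}$, and you pick the same parameters $\alpha=2/3$, $\beta=1/3$, $C=5$.

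The gap is the step that actually fixes $C$: lower-bounding the window integral by multiples of those same scale factors $\frac{\beta}{G'(a)}$ and $\frac{1-\alpha}{F'(b)}$. You leave this unspecified, and the route you sketch does not deliver it. Your comparison interval $[a,F^{-1}(\beta)]$, and likewise $[G^{-1}(\alpha),b]$, is empty when $F=G$. So the hard case for your approach is the near-symmetric one, not the unbalanced one, and that case rests entirely on the unspecified ``fallback to $\int_a^b$''.

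Suppose the fallback lower-bounds one factor by a constant on the window: $1-F\ge 1-\alpha$ and $\int_a^b G\,dx\ge\frac{\beta}{G'(a)}(\alpha-\beta)$, and symmetrically for the other side. Then at $\alpha=2/3$, $\beta=1/3$ each tail is up to $3$ times the window. That gives $C=7$ and only $1/252$, and re-tuning $\alpha,\beta$ tops out near $1/250$. With the paper's bounds, $C=5$ is exactly tight at these parameters, so there is no slack for case splits or constant-factor losses.

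The missing idea is to keep the other factor inside the integral after the MHR substitution and apply stochastic dominance pointwise. Use $G/G'$ nondecreasing, $1-F\ge 1-G$, and $G^{-1}(\alpha)\le F^{-1}(\alpha)=b$:
\[
\int_a^b G(1-F)\,dx \;\ge\; \frac{\beta}{G'(a)}\int_a^b (1-F)\,dG \;\ge\; \frac{\beta}{G'(a)}\int_a^{G^{-1}(\alpha)} (1-G)\,dG \;=\; \frac{\beta}{G'(a)}\Bigl(\alpha-\beta-\frac{\alpha^2-\beta^2}{2}\Bigr).
\]
Symmetrically, using $(1-F)/F'$ nonincreasing and $G\ge F$:
\[
\int_a^b G(1-F)\,dx\;\ge\;\frac{1-\alpha}{F'(b)}\int_{F^{-1}(\beta)}^{b}F\,dF\;=\;\frac{1-\alpha}{F'(b)}\cdot\frac{\alpha^2-\beta^2}{2}.
\]
At $\alpha=2/3$, $\beta=1/3$ both brackets equal $1/6$, while the tails are at most $\frac13\cdot\frac{\beta}{G'(a)}$ and $\frac13\cdot\frac{1-\alpha}{F'(b)}$. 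So each tail is at most twice the window, giving $C-1=4$ with no case analysis.

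Once you have these two inequalities, your plan of bounding each tail separately against the window and adding is fine. In general it is slightly sharper than the paper's min/max averaging, and here it gives the same $C$.
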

\begin{proof}
Suppose there exist constants $0 < \beta < \alpha < 1$ and $C \ge 1$ such that
\begin{align*}
    \int_{G^{-1}(\beta)}^{F^{-1}(\alpha)} G(x)(1 - F(x)) \, dx \ge \frac{1}{C}\FB.
\end{align*}
Then, by Lemma~\ref{lem:sample-pft},
\begin{align*}
    \Pro &= \Exu{p \sim F, q \sim G}{\Pro(p, q)}\\
    &\ge \frac{1}{4}\int_{G^{-1}(\beta)}^{F^{-1}(\alpha)} G(x)^2(1 - F(x))^2 \, dx\\ &\ge \frac{\beta(1 - \alpha)}{4}\int_{G^{-1}(\beta)}^{F^{-1}(\alpha)} G(x)(1 - F(x)) \, dx\\
    &\ge \frac{\beta(1 - \alpha)}{4C}\FB.
\end{align*}
(Note that $F^{-1}(\alpha) \ge F^{-1}(\beta) \ge G^{-1}(\beta),$ since $F$ stochastically dominates $G.$) 

Now, it suffices to find constants $\alpha, \beta,$ and $C$ such that
\begin{align}
    (C - 1)\int_{G^{-1}(\beta)}^{F^{-1}(\alpha)} G(x)(1-F(x)) \, dx &\ge \int_0^{G^{-1}(\beta)} G(x)(1-F(x)) \, dx + \int_{F^{-1}(\alpha)}^\infty G(x)(1-F(x)) \, dx,\label{eq:01190835}
\end{align}
since
\begin{align*}
    \FB 
    &= \int_{0}^\infty G(x)(1 - F(x)) \, dx
    \\ 
    &=
    \int_{0}^{G^{-1}(\beta)} G(x)(1 - F(x)) \, dx + \int_{G^{-1}(\beta)}^{F^{-1}(\alpha)} G(x)(1 - F(x)) \, dx + \int_{F^{-1}(\alpha)}^\infty G(x)(1 - F(x)) \, dx.
\end{align*}
For the integral on the left-hand side of~\eqref{eq:01190835}, we have that
\begin{align*}
    \int_{G^{-1}(\beta)}^{F^{-1}(\alpha)} G(x)(1 - F(x)) \, dx 
    &=
    \int_{G^{-1}(\beta)}^{F^{-1}(\alpha)}  G(x)\frac{1-F(x)}{F'(x)} F'(x) \, dx\\
    &\ge \frac{1- \alpha}{F'(F^{-1}(\alpha))}\int_{G^{-1}(\beta)}^{F^{-1}(\alpha)} G(x) \, dF(x)\\
    &\ge \frac{1- \alpha}{F'(F^{-1}(\alpha))}\int_{F^{-1}(\beta)}^{F^{-1}(\alpha)} F(x) \, dF(x)\\
    &= \frac{1- \alpha}{F'(F^{-1}(\alpha))}\left(\frac{\alpha^2 - \beta^2}{2}\right),
\end{align*}
where the first inequality holds since $F$ is MHR and the second inequality holds since $F$ stochastically dominates $G.$ Similarly, we have that
\begin{align*}
    \int_{G^{-1}(\beta)}^{F^{-1}(\alpha)}  G(x)(1 - F(x)) \, dx &\ge
    \int_{G^{-1}(\beta)}^{F^{-1}(\alpha)} \frac{G(x)}{G'(x)}(1-F(x))G'(x) \, dx\\
    &\ge \frac{\beta}{G'(G^{-1}(\beta))}\int_{G^{-1}(\beta)}^{F^{-1}(\alpha)} (1-F(x)) \, dG(x)\\
    &\ge \frac{\beta}{G'(G^{-1}(\beta))}\int_{G^{-1}(\beta)}^{G^{-1}(\alpha)} (1-G(x)) \, dG(x)\\
    &= \frac{\beta}{G'(G^{-1}(\beta))}\left(\alpha - \beta - \frac{\alpha^2 - \beta^2}{2}\right).
\end{align*}
Thus, we can lower-bound the left-hand side of~\eqref{eq:01190835} as
\begin{align}
    &(C - 1)\int_{G^{-1}(\beta)}^{F^{-1}(\alpha)} G(x)(1 - F(x)) \, dx \nonumber\\
    &\ge
    \frac{C - 1}{2}\parans{\frac{1-\alpha}{F'(F^{-1}(\alpha))}\left(\frac{\alpha^2 - \beta^2}{2}\right)
    +
    \frac{\beta}{G'(G^{-1}(\beta))}\left(\alpha-\beta - \frac{\alpha^2-\beta^2}{2}\right)
    } \nonumber\\
    &= \frac{C - 1}{2}\min\left(\frac{\alpha^2 - \beta^2}{2}, \alpha-\beta - \frac{\alpha^2-\beta^2}{2}\right)\cdot\left(\frac{1-\alpha}{F'(F^{-1}(\alpha))} + \frac{\beta}{G'(G^{-1}(\beta))}\right)
    \label{eq:01190933}
\end{align}
Now, consider the right-hand side of~\eqref{eq:01190835}. Upper-bounding the first integral, we obtain that
\begin{align*}
    \int_0^{G^{-1}(\beta)} G(x)(1 - F(x)) &\, dx  
    =
    \int_0^{G^{-1}(\beta)} \frac{G(x)}{G'(x)}(1-F(x)) G'(x) \, dx \nonumber
    \\
    &\le \frac{\beta}{G'(G^{-1}(\beta))}\int_0^{G^{-1}(\beta)} (1-F(x)) \, dG(x) \nonumber\\
    &\le \frac{\beta}{G'(G^{-1}(\beta))} \cdot \beta,
\end{align*}
and for the second integral, we similarly obtain that
\begin{align*}
    \int_{F^{-1}(\alpha)}^\infty G(x)(1 - F(x)) \, dx
    &= 
    \int_{F^{-1}(\alpha)}^\infty G(x)\frac{1-F(x)}{F'(x)} F'(x) \, dx\nonumber
    \\
    &\le
    \frac{1-\alpha}{F'(F^{-1}(\alpha))}\int_{F^{-1}(\alpha)}^\infty G(x) \, dF(x)\nonumber\\
    &\le \frac{1-\alpha}{F'(F^{-1}(\alpha))} \cdot (1 - \alpha).
\end{align*}
Hence, we can upper-bound the right-hand side in~\eqref{eq:01190835} as
\begin{align}
    &\int_0^{G^{-1}(\beta)} G(x)(1 - F(x)) \, dx + \int_{F^{-1}(\alpha)}^\infty G(x)(1 - F(x)) \, dx \nonumber\\
    &\quad\le \max\left(\beta, 1 - \alpha\right)\left(\frac{1 - \alpha}{F'(F^{-1}(\alpha))} + \frac{\beta}{G'(G^{-1}(\beta))}\right)\label{eq:01190953}.
\end{align}
Therefore, from~\eqref{eq:01190835}, we see that it suffices by~\eqref{eq:01190933} and~\eqref{eq:01190953} to find constants $\alpha, \beta, C$ such that
\begin{align*}
    \frac{C - 1}{2}\min\left(\frac{\alpha^2 - \beta^2}{2}, \alpha - \beta - \frac{\alpha^2 - \beta^2}{2}\right) \ge \max\left(\beta, 1 - \alpha\right).
\end{align*}
By (roughly) numerically optimizing the parameters $\alpha, \beta$ and $C,$ we take $\alpha = 2 / 3, \beta = 1 / 3,$ and $C = 5$ to obtain that
\begin{align*}
    \Pro \ge \frac{\beta(1 - \alpha)}{4C}\FB = \frac{1}{180}\FB \ge \frac{1}{180}\Pro^*,
\end{align*}
as desired.
\end{proof}


As in the symmetric setting, our proofs of Theorems~\ref{thm:sample-asym-const-gft} and~\ref{thm:sample-asym-const-sw} suggest that we can prove matching upper bounds  by taking $F(x) \approx \alpha_F$ and $G(x) \approx \alpha_G$ for suitably chosen constants $\alpha_F$ and $\alpha_G$ in each case. However, since the lower bound of Lemma~\ref{lem:sample-stoc-dom-gft} may not be tight, the optimal constants will not necessarily correspond to the solutions of the optimization problems in the proofs of Theorems~\ref{thm:sample-asym-const-gft} and~\ref{thm:sample-asym-const-sw}. Nevertheless, using the class of distributions from Definition~\ref{def:dist}, we show the following upper bounds on the approximability of the first-best gains-from-trade and first-best social welfare achieved by our mechanism:
\begin{theorem}
\label{thm:asym-ub}
    For any $\epsilon > 0$, there exist $F$ and $G,$ where $F$ stochastically dominates $G,$ such that the mechanism that draws samples $p \sim F, q \sim G$ and offers price $p$ to the buyer and price $q$ to the seller achieves 
    \begin{align*}
        \GFT < \left(\frac{7}{48} + \epsilon\right)\FB.
    \end{align*}
    
    Further, there exists $F$ and $G,$ where $F$ stochastically dominates $G,$ such that this mechanism achieves 
    \begin{align*}
        \SW < \left(\frac{1}{6} + \epsilon\right)\FBW.
    \end{align*}
\end{theorem}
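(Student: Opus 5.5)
We will prove both bounds with the family from Definition~\ref{def:dist}. Take $F = H_{\alpha_F, \delta}$ and $G = H_{\alpha_G, \delta}$ with $0 < \alpha_F \le \alpha_G < 1$ and $\delta$ small. For stochastic dominance we need $G(x) \ge F(x)$ on $[0,1]$. On the middle interval $[\delta, 1-\delta]$ both pieces are linear with the same slope $\frac{2\delta}{1-2\delta}$ and intercepts $\alpha_G \ge \alpha_F$, so the inequality holds there. On $[0,\delta]$ and $[1-\delta,1]$ it must be checked from the explicit quadratics; if they happen to cross, we will modify the end pieces, for instance by interpolating $G$ more steeply near $0$. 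This verification is routine but must not be skipped.

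The main step is to compute $\lim_{\delta \to 0} \GFT$ and $\lim_{\delta\to 0}\FB$ in closed form as functions of $(\alpha_F, \alpha_G)$. For $\FB = \int_0^1 G(x)(1-F(x))\,dx$ this is immediate: the boundary pieces have length $\delta$, so $\FB \to \alpha_G(1-\alpha_F)$, which is bounded away from $0$. For $\GFT$ we use the exact formula of Lemma~\ref{lem:sample-gft}.
\begin{itemize}
\item The profit term $\frac14\int_0^1 G^2(1-F)^2\,dx$ tends to $\frac14\alpha_G^2(1-\alpha_F)^2$.
\item The term $\frac12\int G(q)^2\int_q^\infty(1-F)\,dx\,dF(q)$ is not negligible. $F$ places mass $\approx \alpha_F$ on $[0,\delta]$, and there $\int_q^1(1-F)\,dx \approx 1-\alpha_F$. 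So the term becomes $\frac12(1-\alpha_F)\int_0^\delta G(q)^2\,dF(q)$. Here $G$ and $F$ both rise on $[0,\delta]$, and the integral depends only on the joint shape of the two quadratic pieces. With the quadratic coefficients of $H_{\alpha,\delta,\QUAD}$ it scales out in $\delta$, since substituting $x = \delta t$ leaves a $\delta$-free integral over $t \in [0,1]$. The same holds for the mass $1-\alpha_F$ near $1$, where $G(q)^2 \approx 1$ but $\int_q^1 (1-F)\,dx = O(\delta)$, so that part vanishes.
\item The third term is handled symmetrically on $[0,\delta]$ and $[1-\delta,1]$.
\end{itemize}
Carrying this out gives a limit $R(\alpha_F,\alpha_G) = \lim \GFT/\FB$ as an explicit rational function.

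We then minimize $R$ over $\alpha_F \le \alpha_G$ and exhibit the minimizing pair, and check that it yields $\frac{7}{48}$. Heuristically, the $7/24$ symmetric extremum is halved because $F$ and $G$ separate the two sample "atoms". If the exact $H$-interpolation does not give exactly $\frac{7}{48}$, we will adjust the family's boundary shape: the relative placement of the $F$ and $G$ ramps near $0$ and $1$, for example making one ramp sit entirely before the other, is a free design parameter in the scaled integrals. We then optimize that as well.

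For $\SW$ we use $\SW = \Exu{c\sim G}{c} + \GFT$ and $\FBW = \Exu{c\sim G}{c} + \FB$. To make the ratio small, we will choose parameters that send $\Exu{c \sim G}{c} \to 0$, namely $\alpha_G \to 1$, while keeping the GFT ratio small. We take $\alpha_G = 1 - \eta$ and let $\eta, \delta \to 0$ in a suitable order, analogous to the $x^\delta$ construction in Theorem~\ref{thm:sym-sw-ub}, and compute the limiting ratio, aiming for $\frac16$.

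The main obstacle is the boundary-layer computation: evaluating the $\delta$-scaled integrals of products of the two quadratic ramps exactly, and then performing the two-variable optimization. I would first do these with the $t=x/\delta$ substitution and symbolic integration, then minimize the resulting rational function.
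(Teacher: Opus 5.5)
Your GFT plan follows the paper's route: the family $H_{\alpha,\delta}$, Lemma~\ref{lem:sample-gft}, limits as $\delta\to0$, then optimizing over $(\alpha_F,\alpha_G)$. But you stop before the computation that \emph{is} the proof, and neither of your fallbacks is needed.

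Dominance holds with no repair. $H_{\alpha,\delta,\QUAD}$ is affine in $\alpha$, with $\partial_\alpha H_{\alpha,\delta,\QUAD}(x)=x(2\delta-x)/\delta^2\ge0$ on $[0,\delta]$, and the top piece behaves the same way (Lemma~\ref{lem:stoc-dom}).

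For the limit, note that $H_{\alpha,\delta,\QUAD}(\delta t)\to\alpha\,t(2-t)$. So $F$ and $G$ share one limiting profile on each boundary layer, which gives $\int_0^\delta G^2\,dF\to\alpha_F\alpha_G^2/3$ and $\int_{1-\delta}^1(1-F)^2\,dG\to(1-\alpha_F)^2(1-\alpha_G)/3$. These carry weights $\int_q^1(1-F)\,dx\to1-\alpha_F$ and $\int_0^pG\,dx\to\alpha_G$ respectively. Hence, as in Lemma~\ref{lem:gft-limit},
\[
\lim_{\delta\to0}\GFT=\frac{\alpha_G(1-\alpha_F)}{12}\bigl(2+\alpha_G-2\alpha_F+\alpha_F\alpha_G\bigr),\qquad \lim_{\delta\to0}\FB=\alpha_G(1-\alpha_F).
\]
The limiting ratio $\frac{1}{12}(2+\alpha_G-2\alpha_F+\alpha_F\alpha_G)$ has $\alpha_F$-derivative $(\alpha_G-2)/12<0$. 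Over $\alpha_F\le\alpha_G$ it is therefore minimized at $\alpha_F=\alpha_G=\frac12$, where it equals $\frac{7}{48}$; no reshaping of the ramps is required.

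The minimizer is $F=G$, so your ``separated atoms'' heuristic is wrong. The factor $\frac12$ arises because this mechanism trades only on $\{p\ge q\}$, and so collects exactly half the GFT of the $\max/\min$ mechanism. In fact the first claim follows at once from Theorem~\ref{thm:sym-gft-ub} with $F=G=H_{1/2,\delta}$.

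The genuine gap is the SW construction: sending $\alpha_G\to1$ alone does not reach $\frac16$. For fixed $(\alpha_F,\alpha_G)$,
\[
\lim_{\delta\to0}\frac{\SW}{\FBW}=\frac{12(1-\alpha_G)+\alpha_G(1-\alpha_F)(2+\alpha_G-2\alpha_F+\alpha_F\alpha_G)}{12(1-\alpha_F\alpha_G)},
\]
which tends to $(3-\alpha_F)/12$ as $\alpha_G\to1$. With the GFT-optimal choice $\alpha_F=\frac12$ this gives $\frac{5}{24}$.

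You must also send $\alpha_F\to1$, and more slowly, so that $1-\alpha_G=o(1-\alpha_F)$. The reason is that both $\Exu{c\sim G}{c}\approx1-\alpha_G$ and $\FB\approx\alpha_G(1-\alpha_F)$ vanish, and it is their ratio that must vanish, not $\Exu{c\sim G}{c}$ itself. If $1-\alpha_G=c(1-\alpha_F)$ with $c>0$ fixed, the limit is $\frac{12c+2}{12(1+c)}>\frac16$.

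The paper's order of limits is: $\delta\to0$ at fixed parameters (forced anyway by $\delta<(1-\alpha_G)/(3-2\alpha_G)$), then $\alpha_G\to1$, then $\alpha_F\to1$.

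The symmetric $x^\delta$ instance is no template here. With $F=G$ one has $\Exu{c\sim G}{c}=\int(1-F)\ge\int F(1-F)=\FB$, so $\SW/\FBW\ge\frac12$. Asymmetry, with $G$ much closer to $1$ than $F$, is essential.
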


\section{Concluding Remarks}
We leave several interesting directions for future work. First, it remains open whether our mechanism achieves a constant approximation to the optimal profit for general problem instances.
In addition, since increasing the number of samples does not necessarily improve the approximation ratio to the first-best gains-from-trade and first-best social welfare (as we give more power to the strategic broker), it would be interesting to characterize the exact trade-off between the gains-from-trade/social welfare and the number of samples available to the broker.

\section*{Acknowledgments}
This work is partially supported by DARPA QuICC, ONR MURI 2024 award on Algorithms, Learning, and Game Theory, Army-Research Laboratory (ARL) grant W911NF2410052, NSF AF:Small grants 2218678, 2114269, 2347322, and Royal Society grant IES\textbackslash R2\textbackslash 222170.

\bibliographystyle{ACM-Reference-Format}
\bibliography{ref}

\appendix

\section{Preliminaries for Upper-Bound Proofs}
When proving upper bounds for our mechanisms, we will often want problem instances consisting of distributions $H$ such that $H(x) \approx \alpha$ for some constant $0 < \alpha < 1.$ To that end, we repeat our definition of $H_{\alpha, \delta}$ here:
\begin{definition}
     Let $0 < \alpha < 1$ and $\displaystyle 0 < \delta < \min\left(\frac{\alpha}{2\alpha + 1}, \frac{1 - \alpha}{3 - 2\alpha}\right).$ We define $H_{\alpha, \delta}: [0, 1] \rightarrow \mathbb{R}$ as
    \begin{align*}
        H_{\alpha, \delta}(x) &=
        \begin{cases}
            H_{\alpha, \delta, \QUAD}(x) & \text{ if $0 \le x \le \delta$}\\
            H_{\alpha, \delta, \LIN}(x) & \text{ if $\delta \le x \le 1 - \delta$}\\
            1 - H_{1 - \alpha, \delta, \QUAD}(1 - x) & \text{ if $1 - \delta \le x \le 1$}.
        \end{cases}
    \end{align*}
\end{definition}
\label{apx:dist}
We first show that $H_{\alpha, \delta}$ is well-defined and differentiable:
\begin{lemma}
\label{lem:welldef}
    Let $0 < \alpha < 1$ and $\displaystyle 0 < \delta < \min\left(\frac{\alpha}{2\alpha + 1}, \frac{1 - \alpha}{3 - 2\alpha}\right).$ Then, $H_{\alpha, \delta}$ is well-defined and differentiable. 
\end{lemma}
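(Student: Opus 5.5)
The plan is to show that $H_{\alpha,\delta}$ is a valid CDF on $[0,1]$ (so that it defines an absolutely continuous distribution) and that it is differentiable everywhere on $[0,1]$. Concretely, I will show five things: the three pieces agree at the breakpoints $x=\delta$ and $x=1-\delta$; $H_{\alpha,\delta}(0)=0$ and $H_{\alpha,\delta}(1)=1$; the one-sided derivatives agree at both breakpoints; and $H_{\alpha,\delta}$ is nondecreasing, in fact with positive derivative. Monotonicity together with the endpoint values will then give $0\le H_{\alpha,\delta}\le 1$. Each piece is a polynomial, so differentiability only needs to be checked at the two breakpoints.

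\emph{Step 1: continuity.} First, $H_{\alpha,\delta,\QUAD}(0)=0$, and evaluating at $\delta$ kills the quadratic term, giving $H_{\alpha,\delta,\QUAD}(\delta)=\alpha-\delta$. Next, $H_{\alpha,\delta,\LIN}(\delta)=\alpha+\frac{2\delta}{1-2\delta}(\delta-\tfrac12)=\alpha-\delta$, so the left junction is continuous. By the same formulas with $\alpha$ replaced by $1-\alpha$, the third piece equals $1-0=1$ at $x=1$ and $1-(1-\alpha-\delta)=\alpha+\delta$ at $x=1-\delta$. This matches $H_{\alpha,\delta,\LIN}(1-\delta)=\alpha+\delta$.

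\emph{Step 2: matching derivatives.} Write $k=\frac{(2\alpha+1)\delta-\alpha}{\delta^2(1-2\delta)}$. Then $H_{\alpha,\delta,\QUAD}'(x)=\frac{\alpha-\delta}{\delta}+k(2x-\delta)$. At $x=\delta$, putting everything over the denominator $\delta(1-2\delta)$, the numerator is $(\alpha-\delta)(1-2\delta)+(2\alpha+1)\delta-\alpha=2\delta^2$. Hence $H_{\alpha,\delta,\QUAD}'(\delta)=\frac{2\delta}{1-2\delta}$, which is exactly the slope of $H_{\alpha,\delta,\LIN}$. The derivative of $1-H_{1-\alpha,\delta,\QUAD}(1-x)$ at $x=1-\delta$ is $H_{1-\alpha,\delta,\QUAD}'(\delta)$. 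By the identical computation with $1-\alpha$ in place of $\alpha$, this is again $\frac{2\delta}{1-2\delta}$. So $H_{\alpha,\delta}$ is $C^1$ on $[0,1]$.

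\emph{Step 3: monotonicity.} This is the only place the constraints on $\delta$ enter, and it is the one step needing care. The hypothesis $\delta<\frac{\alpha}{2\alpha+1}$ is exactly $(2\alpha+1)\delta-\alpha<0$, i.e.\ $k<0$. Thus $H_{\alpha,\delta,\QUAD}$ is concave and its derivative is decreasing on $[0,\delta]$. The derivative is therefore minimized at $x=\delta$, where by Step 2 it equals $\frac{2\delta}{1-2\delta}>0$; here $1-2\delta>0$ because $\frac{\alpha}{2\alpha+1}<\frac12$. The linear piece has positive slope $\frac{2\delta}{1-2\delta}$. For the right piece, the condition $\delta<\frac{1-\alpha}{3-2\alpha}=\frac{1-\alpha}{2(1-\alpha)+1}$ is precisely the same concavity condition for $H_{1-\alpha,\delta,\QUAD}$. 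So that function has positive derivative on $[0,\delta]$, and hence $1-H_{1-\alpha,\delta,\QUAD}(1-x)$ is increasing on $[1-\delta,1]$.

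Combining the three steps, $H_{\alpha,\delta}$ is continuous, differentiable, strictly increasing, and runs from $0$ to $1$. It is therefore a well-defined, differentiable CDF supported on $[0,1]$.
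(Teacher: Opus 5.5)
Your Steps 1 and 2 match the paper's proof: it also checks that the quadratic, linear, and reflected quadratic pieces agree in value ($0$ at $0$, $\alpha-\delta$ at $\delta$, $\alpha+\delta$ at $1-\delta$, $1$ at $1$) and in slope $2\delta/(1-2\delta)$ at both breakpoints, and it uses $\delta<\frac{1-\alpha}{3-2\alpha}$ so that the reflected quadratic piece is admissible. Your Step 3 goes beyond this lemma and corresponds to the paper's separate lemmas on strict monotonicity and validity as a CDF; there the paper solves the inequality $H'(x)>0$ for the quadratic piece explicitly, whereas you use concavity plus the endpoint slope, which is cleaner, and both arguments are correct.
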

\begin{proof}
    Note that $H_{1 - \alpha, \delta, \QUAD}(1 - x)$ is well-defined on $[1 - \delta, 1],$ since $\displaystyle\delta < \frac{1 - \alpha}{3 - 2\alpha} = \frac{1 - \alpha}{2(1 - \alpha) + 1}.$ In addition, note the following:
    \begin{enumerate}
        \item $H_{\alpha, \delta, \QUAD}(x)$ is the unique quadratic satisfying $H_{\alpha, \delta, \QUAD}(0) = 0, H_{\alpha, \delta, \QUAD}(\delta) = \alpha - \delta,$ and $H'_{\alpha, \delta, \QUAD}(\delta) = 2\delta / (1 - 2\delta).$
        \item $H_{\alpha, \delta, \LIN}(x)$ is the unique linear function satisfying $H_{\alpha, \delta, \LIN}(\delta) = \alpha - \delta, H_{\alpha, \delta, \LIN}(1 - \delta) = \alpha + \delta,$ and $H'_{\alpha, \delta, \LIN}(\delta) = 2\delta / (1 - 2\delta) = H'_{\alpha, \delta, \LIN}(1 - \delta).$
        \item $Q(x) = 1 - H_{1 - \alpha, \delta, \QUAD}(1 - x)$ is the unique quadratic satisfying $Q(1) = 1, Q(1 - \delta) = \alpha + \delta,$ and $Q'(1 - \delta) = 2\delta / (1 - 2\delta).$
    \end{enumerate}
    Thus, since the different components and their derivatives match at the boundary points $\delta$ and $1-\delta,$ we have that $H_{\alpha, \delta}$ is well-defined and differentiable.
\end{proof}

Now, we want to show that $H_{\alpha, \delta}$ is a valid CDF. However, it is unclear whether $H_{\alpha, \delta, \QUAD}$ is strictly increasing. The following lemma verifies this:
\begin{lemma}
\label{lem:quad-inc}
    Let $0 < \alpha < 1$ and $\displaystyle 0 < \delta < \frac{\alpha}{2\alpha + 1}.$ Then, $H_{\alpha, \delta, \QUAD}$ is strictly increasing.
\end{lemma}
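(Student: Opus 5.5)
Since $H_{\alpha, \delta, \QUAD}$ is a quadratic on $[0,\delta]$, its derivative is affine in $x$. An affine function is strictly positive on an interval if and only if it is strictly positive at both endpoints, so it suffices to show $H'_{\alpha, \delta, \QUAD}(0) > 0$ and $H'_{\alpha, \delta, \QUAD}(\delta) > 0$. Strict positivity of $H'$ on $[0,\delta]$ then gives that $H_{\alpha, \delta, \QUAD}$ is strictly increasing there.

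First, I differentiate directly. Writing $a = (\alpha - \delta)/\delta$ and $b = \frac{(2\alpha+1)\delta - \alpha}{\delta^2(1-2\delta)}$, we have $H'_{\alpha,\delta,\QUAD}(x) = a + b(2x - \delta)$. At $x = \delta$ this equals $a + b\delta$, which by item 1 in the proof of Lemma~\ref{lem:welldef} (or by direct computation) is $2\delta/(1-2\delta)$. This is positive because $\delta < \alpha/(2\alpha+1) < 1/2$.

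At $x = 0$ we get $a - b\delta$, so
\begin{align*}
    H'_{\alpha,\delta,\QUAD}(0) = \frac{\alpha - \delta}{\delta} - \frac{(2\alpha+1)\delta - \alpha}{\delta(1-2\delta)}.
\end{align*}
Putting both terms over the common denominator $\delta(1-2\delta)$, which is positive, the numerator is
\begin{align*}
    (\alpha-\delta)(1-2\delta) - (2\alpha+1)\delta + \alpha = 2\alpha - 4\alpha\delta - 2\delta + 2\delta^2 = 2(1-\delta)(\alpha - \delta) - 2\alpha\delta .
\end{align*}
So it suffices to show $(1-\delta)(\alpha-\delta) > \alpha\delta$, i.e.\ $\alpha - 2\alpha\delta - \delta + \delta^2 > 0$. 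Since $\delta^2 > 0$, this follows from $\alpha - (2\alpha+1)\delta > 0$, which is exactly the hypothesis $\delta < \alpha/(2\alpha+1)$.

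The only real risk in this plan is an algebra slip when expanding the numerator of $H'(0)$. Otherwise the argument is elementary: the sign of $b$ is irrelevant, because we check both endpoints of an affine function rather than arguing about monotonicity of $H'$.
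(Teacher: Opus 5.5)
Your proof is correct, and it takes a slightly different route from the paper's. I rechecked the step you flagged as risky: $(\alpha-\delta)(1-2\delta)-(2\alpha+1)\delta+\alpha = 2\bigl(\alpha-(2\alpha+1)\delta+\delta^2\bigr)$, which is positive under the hypothesis. Also $H'(\delta)=2\delta/(1-2\delta)>0$, so the affine $H'$ is positive on all of $[0,\delta]$.

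The two arguments differ only in where the hypothesis is used:
\begin{itemize}
\item \textbf{The paper.} It uses $\delta<\alpha/(2\alpha+1)$ to make the quadratic coefficient $(2\alpha+1)\delta-\alpha$ negative, so $H'$ is decreasing. It then solves $H'(x)>0$ and finds that the zero of $H'$ lies at $\delta\bigl(1-\delta^2/((2\alpha+1)\delta-\alpha)\bigr)>\delta$. This is in effect a check of the single endpoint $x=\delta$.
\item \textbf{Your proof.} You check both endpoints of the affine function $H'$ and use the hypothesis only to get $H'(0)>0$. You therefore never need the sign of the quadratic coefficient.
\end{itemize}
Your version is slightly more robust. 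It goes through under the weaker condition $\alpha-(2\alpha+1)\delta+\delta^2>0$ together with $\delta<1/2$. The paper's reduction to the single endpoint $x=\delta$ instead relies on $H'$ being decreasing.
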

\begin{proof}
    Taking the derivative of $H_{\alpha, \delta, \QUAD},$ we obtain that
    \begin{align*}
        H_{\alpha, \delta, \QUAD}'(x) &= \frac{\alpha - \delta}{\delta} + \frac{(2\alpha + 1)\delta - \alpha}{\delta^2(1 - 2\delta)}(2x - \delta).
    \end{align*}
    Now, $H_{\alpha, \delta, \QUAD}'(x) > 0$ is equivalent to
    \begin{align*}
        2x - \delta < \frac{\delta(1 - 2\delta)(\delta - \alpha)}{(2\alpha + 1)\delta - \alpha}.
    \end{align*}
    (Note that $(2\alpha + 1)\delta - \alpha < 0$ and $1 - 2\delta > 0,$ since $\delta < \alpha / (2\alpha + 1) < 1 / 2.$). Rewriting, we obtain that
    \begin{align*}
        x < \delta\left(1 - \frac{\delta^2}{(2\alpha + 1)\delta - \alpha}\right),
    \end{align*}
    so it suffices if $x \le \delta$, which is true on the domain of $H_{\alpha, \delta, \QUAD}.$ Thus, $H_{\alpha, \delta, \QUAD}$ is strictly increasing.
\end{proof}

Finally, we combine our results above to show that $H_{\alpha, \delta}$ is a valid CDF:
\begin{lemma}
    Let $0 < \alpha < 1$ and $\displaystyle 0 < \delta < \min\left(\frac{\alpha}{2\alpha + 1}, \frac{1 - \alpha}{3 - 2\alpha}\right).$ Then, $H_{\alpha, \delta}$ is a CDF that is differentiable everywhere on its domain.
\end{lemma}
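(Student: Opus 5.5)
To show that $H_{\alpha,\delta}$ is a CDF on $[0,1]$, we check three properties: $H_{\alpha,\delta}(0)=0$, $H_{\alpha,\delta}(1)=1$, and $H_{\alpha,\delta}$ is nondecreasing (in fact strictly increasing) on $[0,1]$. Then $H_{\alpha,\delta}$, extended by $0$ on $(-\infty,0)$ and by $1$ on $(1,\infty)$, is a valid CDF supported on $[0,1]$. Differentiability everywhere on $[0,1]$ is already given by Lemma~\ref{lem:welldef}, which also supplies well-definedness. So only the boundary values and monotonicity remain.

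For the boundary values, evaluate the pieces directly. First, $H_{\alpha,\delta,\QUAD}(0)=0$ by inspection of the formula. Second, $H_{\alpha,\delta}(1) = 1 - H_{1-\alpha,\delta,\QUAD}(0) = 1$.

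For monotonicity, treat the three pieces separately.
\begin{enumerate}
    \item On $[0,\delta]$: the hypothesis $\delta<\alpha/(2\alpha+1)$ is exactly the hypothesis of Lemma~\ref{lem:quad-inc}, so $H_{\alpha,\delta,\QUAD}$ is strictly increasing there.
    \item On $[\delta,1-\delta]$: the linear piece has slope $2\delta/(1-2\delta)>0$, since $\delta<1/2$.
    \item On $[1-\delta,1]$: the hypothesis $\delta<(1-\alpha)/(3-2\alpha)=(1-\alpha)/(2(1-\alpha)+1)$ lets us apply Lemma~\ref{lem:quad-inc} with $1-\alpha$ in place of $\alpha$. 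Hence $y\mapsto H_{1-\alpha,\delta,\QUAD}(y)$ is strictly increasing on $[0,\delta]$. Since $x\mapsto 1-x$ is decreasing and maps $[1-\delta,1]$ onto $[0,\delta]$, the composition $x\mapsto 1-H_{1-\alpha,\delta,\QUAD}(1-x)$ is strictly increasing on $[1-\delta,1]$.
\end{enumerate}
The pieces agree at the junctions, with values $\alpha-\delta$ at $x=\delta$ and $\alpha+\delta$ at $x=1-\delta$ by Lemma~\ref{lem:welldef}. Therefore the whole function is continuous and strictly increasing on $[0,1]$.

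Because $H_{\alpha,\delta}$ is strictly increasing with $H_{\alpha,\delta}(0)=0$ and $H_{\alpha,\delta}(1)=1$, its values lie in $[0,1]$. It is also absolutely continuous, being piecewise polynomial and continuous, which matches the model's assumption of a bounded, absolutely continuous distribution.

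The only point needing care is the reflected piece: we must check that the hypothesis on $\delta$ matches the condition of Lemma~\ref{lem:quad-inc} after the substitution $\alpha\mapsto 1-\alpha$, and that the double sign flip preserves monotonicity. Everything else is immediate from the earlier lemmas.
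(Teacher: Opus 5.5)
Your proof is correct and follows the paper's argument. You invoke Lemma~\ref{lem:welldef} for well-definedness and differentiability, check $H_{\alpha,\delta}(0)=0$ and $H_{\alpha,\delta}(1)=1$, and get strict monotonicity piece by piece from the positive slope of the linear part and Lemma~\ref{lem:quad-inc} (with $1-\alpha$ in place of $\alpha$ for the reflected piece), while making explicit the junction values and the condition $\delta<(1-\alpha)/(2(1-\alpha)+1)$ that the paper leaves implicit.
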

\begin{proof}
     By Lemma~\ref{lem:welldef}, $H_{\alpha, \delta}$ is well-defined and differentiable. In addition, $H_{\alpha, \delta}(0) = 0$ and $H_{\alpha, \delta}(1) = 1.$ Thus, to show that $H_{\alpha, \delta}$ is a valid CDF, it suffices to show that $H_{\alpha, \delta}$ is strictly increasing. Clearly, $H_{\alpha, \delta, \LIN}$ is strictly increasing, and by Lemma~\ref{lem:quad-inc}, $H_{\alpha, \delta, \QUAD}$ is strictly increasing on $[0, \delta]$ and $1 - H_{1 - \alpha, \delta, \QUAD}$ is strictly increasing on $[1 - \delta, 1],$ which completes the proof.
\end{proof}

Now, since we will use the class of distributions above to construct problem instances that show upper bounds for our mechanism in the stochastic dominance setting, the following lemma will be helpful:
\begin{lemma}
\label{lem:stoc-dom}
    Let $0 < \alpha_F \le \alpha_G < 1,$ and let $$0 < \delta <\min\left(\frac{\alpha_F}{2\alpha_F + 1}, \frac{1 - \alpha_F}{3 - 2\alpha_F}, \frac{\alpha_G}{2\alpha_G + 1}, \frac{1 - \alpha_G}{3 - 2\alpha_G}\right).$$ Then, $F = H_{\alpha_F, \delta}$ stochastically dominates $G = H_{\alpha_G, \delta}.$
\end{lemma}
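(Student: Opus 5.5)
We need to show $H_{\alpha_G,\delta}(x) \ge H_{\alpha_F,\delta}(x)$ for all $x\in[0,1]$ whenever $\alpha_F\le\alpha_G$ (and trivially outside $[0,1]$, where both are $0$ or both are $1$). The plan is to compare the two CDFs piece by piece, using the fact that both distributions share the same breakpoints $\delta$ and $1-\delta$. So it suffices to show that, on each of the three intervals $[0,\delta]$, $[\delta,1-\delta]$, $[1-\delta,1]$, the defining expression is nondecreasing in the parameter $\alpha$ for fixed $\delta$ and $x$. The hypothesis on $\delta$ guarantees that both $H_{\alpha_F,\delta}$ and $H_{\alpha_G,\delta}$ are valid CDFs by the preceding lemmas, and it also guarantees that every $\alpha\in[\alpha_F,\alpha_G]$ is admissible for the pieces we differentiate.

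On the middle interval this is immediate: $H_{\alpha,\delta,\LIN}(x)=\alpha+\frac{2\delta}{1-2\delta}(x-\tfrac12)$ has $\partial/\partial\alpha = 1>0$.

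On $[0,\delta]$ we write
\[
H_{\alpha,\delta,\QUAD}(x)=\frac{\alpha-\delta}{\delta}x+\frac{(2\alpha+1)\delta-\alpha}{\delta^2(1-2\delta)}x(x-\delta),
\]
which is affine in $\alpha$ with coefficient
\[
\frac{x}{\delta}+\frac{(2\delta-1)\,x(x-\delta)}{\delta^2(1-2\delta)}=\frac{x}{\delta}-\frac{x(x-\delta)}{\delta^2}=\frac{x(2\delta-x)}{\delta^2}\ \ge 0
\]
for $0\le x\le\delta$. Hence the quadratic piece is nondecreasing in $\alpha$.

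On $[1-\delta,1]$ the CDF equals $1-H_{1-\alpha,\delta,\QUAD}(1-x)$ with $y=1-x\in[0,\delta]$. By the previous computation, $H_{\beta,\delta,\QUAD}(y)$ is nondecreasing in $\beta=1-\alpha$, so it is nonincreasing in $\alpha$, and therefore $1-H_{1-\alpha,\delta,\QUAD}(y)$ is nondecreasing in $\alpha$.

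Combining the three pieces gives $G(x)\ge F(x)$ for all $x$. The only step requiring care is the coefficient computation in the quadratic piece, namely checking the sign of $x(2\delta-x)$ on $[0,\delta]$. Because the expression is linear in $\alpha$, no differentiation subtleties arise, and the argument reduces to simple algebra.
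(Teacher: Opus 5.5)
Your proof is correct and follows essentially the same route as the paper: compare the two CDFs piecewise on $[0,\delta]$, $[\delta,1-\delta]$, $[1-\delta,1]$. On the middle interval the two pieces differ by the constant $\alpha_G-\alpha_F$, and the last interval is handled by the substitution $\alpha\mapsto 1-\alpha$, $x\mapsto 1-x$. The only cosmetic difference is on $[0,\delta]$: the paper argues term by term (the linear term increases in $\alpha$, and the coefficient $\frac{(2\delta-1)\alpha+\delta}{\delta^2(1-2\delta)}$ decreases in $\alpha$ while $x(x-\delta)\le 0$), whereas you combine both into the single nonnegative coefficient $x(2\delta-x)/\delta^2$.
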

\begin{proof}
    We have that
    \begin{align*}
        H_{\alpha_G, \delta}(x) &= \frac{\alpha_G - \delta}{\delta}x + \frac{(2\delta - 1)\alpha_G + \delta}{\delta^2(1 - 2\delta)}x(x - \delta) \nonumber\\
        &\ge \frac{\alpha_F - \delta}{\delta}x + \frac{(2\delta - 1)\alpha_F + \delta}{\delta^2(1 - 2\delta)}x(x - \delta) \nonumber\\
        &= H_{\alpha_F, \delta}(x)
    \end{align*}
    for all $0 \le x \le \delta,$ where the inequality holds since $2\delta - 1 < 0$ and $x - \delta \le 0.$ Thus, $G(x) \ge F(x)$ on $[0, \delta],$ and letting $1 - \alpha_G$ and $1 - \alpha_F$ take the roles of $\alpha_F$ and $\alpha_G,$ respectively, in the above inequality, we obtain that 
    \begin{align*}
        G(x) = 1 - H_{1 - \alpha_G, \delta}(1 - x) \ge 1 - H_{1 - \alpha_F, \delta}(1 - x) = F(x)
    \end{align*}
    on $[1 - \delta, 1].$ Finally, we have that 
    \begin{align*}
        G(x) = F(x) + (\alpha_G - \alpha_F) \ge F(x)
    \end{align*}
    on $[\delta, 1 - \delta].$
\end{proof}


Next, we seek to upper-bound the gains-from-trade of our mechanism when $F = H_{\alpha_F, \delta}$ and $G = H_{\alpha_G, \delta},$ as $\delta \rightarrow 0.$ The following lemma will be helpful in this regard:
\begin{lemma}
\label{lem:err-bnd}
    Let $0 < \alpha_F \le \alpha_G < 1$ and $\displaystyle 0 < \delta < \min\left(\frac{\alpha_F}{2\alpha_F + 1}, \frac{\alpha_G}{2\alpha_G + 1}\right).$ Then,
    \begin{align*}
        \int_0^\delta (H_{\alpha_G, \delta, \QUAD}(x)^2 - H_{\alpha_F, \delta, \QUAD}(x)^2)H_{\alpha_F, \delta, \QUAD}'(x) \, dx &= \frac{(\alpha_G - \alpha_F)(\alpha_G + \alpha_F)\alpha_F}{3} + O(\delta),
    \end{align*}
    where the big-O notation hides dominated terms with respect to $\delta$ as $\delta \rightarrow 0.$
\end{lemma}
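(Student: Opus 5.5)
The plan is to treat $H_{\alpha,\delta,\QUAD}$ as a rescaled function and change variables $x = \delta t$ with $t \in [0,1]$. Write $a = \alpha_F$ and $b = \alpha_G$. The definition gives
\begin{align*}
    H_{\alpha, \delta, \QUAD}(\delta t) = (\alpha - \delta)t + \frac{(2\alpha + 1)\delta - \alpha}{1 - 2\delta}\, t(t - 1),
\end{align*}
which is a polynomial in $t$ whose coefficients are smooth in $\delta$ near $0$. Setting $\delta = 0$ gives the limiting profile
\begin{align*}
    \phi_\alpha(t) = \alpha t - \alpha t(t - 1) = \alpha(2t - t^2) = \alpha\bigl(1 - (1-t)^2\bigr).
\end{align*}
Uniformly on $t \in [0,1]$ this yields $H_{\alpha,\delta,\QUAD}(\delta t) = \phi_\alpha(t) + O(\delta)$. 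Differentiating in $x$ gives $\delta H'_{\alpha,\delta,\QUAD}(\delta t) = \frac{d}{dt}H_{\alpha,\delta,\QUAD}(\delta t) = \phi_\alpha'(t) + O(\delta)$, again uniformly. The substitution $dx = \delta\, dt$ absorbs the $1/\delta$ coming from the derivative, so the integral becomes
\begin{align*}
    \int_0^1 \Bigl(H_{b,\delta,\QUAD}(\delta t)^2 - H_{a,\delta,\QUAD}(\delta t)^2\Bigr)\,\frac{d}{dt}H_{a,\delta,\QUAD}(\delta t)\, dt.
\end{align*}

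Next, the integrand is a polynomial in $t$ whose coefficients are polynomial in $\delta$ divided by a power of $(1-2\delta)$. Since $\delta < 1/3$, every factor is bounded uniformly, so replacing each factor by its $\delta = 0$ limit changes the integral by $O(\delta)$. What remains is to evaluate
\begin{align*}
    \int_0^1 \bigl(\phi_b^2 - \phi_a^2\bigr)\phi_a'\, dt = (b^2 - a^2)\, a \int_0^1 u(t)^2 u'(t)\, dt,
\end{align*}
where $u(t) = 2t - t^2$. This works because $\phi_\alpha = \alpha u$. Since $u(0) = 0$ and $u(1) = 1$, the last integral equals $u^3/3\big|_0^1 = 1/3$. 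This gives $\frac{(b-a)(b+a)a}{3}$, matching the claim.

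The only delicate point is justifying the uniform $O(\delta)$ error. I would handle it by observing that, after the change of variables, the integrand is an explicit rational function of $(t,\delta)$ that is continuously differentiable in $\delta$ on $[0,1]\times[0,\delta_0]$ for some $\delta_0 < 1/2$. The mean value theorem in $\delta$ then bounds the difference by a constant times $\delta$. There is no real obstacle beyond keeping track of the $1/\delta$ in the derivative, which the substitution cancels exactly.
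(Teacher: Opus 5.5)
Your proof is correct and uses the same idea as the paper's proof: replace each quadratic piece by its leading-order profile $\frac{2\alpha}{\delta}x - \frac{\alpha}{\delta^2}x^2 = \alpha u(x/\delta)$ and integrate. The paper carries this out by expanding the product into the monomials $x^2,\dots,x^5$ and summing $\frac{8}{3}-\frac{16}{4}+\frac{10}{5}-\frac{2}{6}=\frac13$, whereas your rescaling $x=\delta t$ with $\phi_\alpha=\alpha u$ reduces it to $\int_0^1 u^2u'\,dt=\frac13$. Unlike the paper's informal use of $O(\cdot)$ for leading coefficients, your route also gives a rigorous uniform $O(\delta)$ error, since exactly $H_{\alpha,\delta,\QUAD}(\delta t)-\phi_\alpha(t) = -\delta t+\frac{\delta}{1-2\delta}t(t-1)$.
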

\begin{proof}
    For $0 < \alpha < 1,$ we have that
    \begin{align*}
        H_{\alpha, \delta, \QUAD}(x) &= \frac{\alpha - \delta}{\delta}x + \frac{(2\alpha + 1)\delta - \alpha}{\delta^2(1 - \delta)}x(x - \delta)\\
        &= O\left(\frac{\alpha}{\delta}\right)x + \left(O\left(\frac{\alpha}{\delta}\right)x - O\left(\frac{\alpha}{\delta^2}\right)x^2\right)\\
        &= O\left(\frac{2\alpha}{\delta}\right)x - O\left(\frac{\alpha}{\delta^2}\right)x^2
    \end{align*}
    and that
    \begin{align*}
        H_{\alpha, \delta, \QUAD}'(x) &= \frac{\alpha - \delta}{\delta} + \frac{(2\alpha + 1)\delta - \alpha}{\delta^2(1 - \delta)}(2x - \delta)\\
        &= O\left(\frac{\alpha}{\delta}\right) + \left(O\left(\frac{\alpha}{\delta}\right) - O\left(\frac{2\alpha}{\delta^2}\right)x\right)\\
        &= O\left(\frac{2\alpha}{\delta}\right) - O\left(\frac{2\alpha}{\delta^2}\right)x.
    \end{align*}
    Thus,
    \begin{align*}
        &\int_0^\delta (H_{\alpha_G, \delta, \QUAD}(x)^2 - H_{\alpha_F, \delta, \QUAD}(x)^2)H_{\alpha_F, \delta, \QUAD}'(x) \, dx\\
        &\quad= \int_0^\delta (H_{\alpha_G, \delta, \QUAD}(x) - H_{\alpha_F, \delta, \QUAD}(x))(H_{\alpha_G, \delta, \QUAD}(x) + H_{\alpha_F, \delta, \QUAD}(x))H_{\alpha_F, \delta, \QUAD}'(x) \, dx\\
        &\quad= \int_0^\delta \left(\left(O\left(\frac{2(\alpha_G - \alpha_F)}{\delta}\right)x - O\left(\frac{\alpha_G - \alpha_F}{\delta^2}\right)x^2\right)\left(O\left(\frac{2(\alpha_G + \alpha_F)}{\delta}\right)x - O\left(\frac{\alpha_G + \alpha_F}{\delta^2}\right)x^2\right)\right.\\
        &\quad\quad\left.\left(O\left(\frac{2\alpha_F}{\delta}\right) - O\left(\frac{2\alpha_F}{\delta^2}\right)x\right) \, dx\right).
    \end{align*}
    Finally, expanding the integrand and collecting like terms, we obtain that
    \begin{align*}
        &\int_0^\delta (H_{\alpha_G, \delta, \QUAD}(x)^2 - H_{\alpha_F, \delta, \QUAD}(x)^2)H_{\alpha_F, \delta, \QUAD}'(x) \, dx\\
        &\quad= \int_0^\delta \left(O\left(\frac{8(\alpha_G - \alpha_F)(\alpha_G + \alpha_F)\alpha_F}{\delta^3}\right)x^2 
        - O\left(\frac{16(\alpha_G - \alpha_F)(\alpha_G + \alpha_F)\alpha_F}{\delta^4}\right)x^3\right.\\
        &\quad\quad+\left. O\left(\frac{10(\alpha_G - \alpha_F)(\alpha_G + \alpha_F)\alpha_F}{\delta^5}\right)x^4 - O\left(\frac{2(\alpha_G - \alpha_F)(\alpha_G + \alpha_F)\alpha_F}{\delta^6}\right)x^5\right) \, dx\\
        &\quad= \left[O\left(\frac{8(\alpha_G - \alpha_F)(\alpha_G + \alpha_F)\alpha_F}{\delta^3}\right)\frac{x^3}{3} 
        - O\left(\frac{16(\alpha_G - \alpha_F)(\alpha_G + \alpha_F)\alpha_F}{\delta^4}\right)\frac{x^4}{4}\right.\\
        &\quad\quad+\left. O\left(\frac{10(\alpha_G - \alpha_F)(\alpha_G + \alpha_F)\alpha_F}{\delta^5}\right)\frac{x^5}{5} - O\left(\frac{2(\alpha_G - \alpha_F)(\alpha_G + \alpha_F)\alpha_F}{\delta^6}\right)\frac{x^6}{6}\right]_0^\delta\\
        &\quad= \frac{8(\alpha_G - \alpha_F)(\alpha_G + \alpha_F)\alpha_F}{3} 
        - \frac{16(\alpha_G - \alpha_F)(\alpha_G + \alpha_F)\alpha_F}{4} + \frac{10(\alpha_G - \alpha_F)(\alpha_G + \alpha_F)\alpha_F}{5}\\
        &\quad\quad- \frac{2(\alpha_G - \alpha_F)(\alpha_G + \alpha_F)\alpha_F}{6} + O(\delta)\\
        &\quad= \frac{(\alpha_G - \alpha_F)(\alpha_G + \alpha_F)\alpha_F}{3} + O(\delta).
    \end{align*}
\end{proof}

We now upper-bound the gains-from-trade of our mechanism when $F = H_{\alpha_F, \delta}$ and $G = H_{\alpha_G, \delta},$ as $\delta \rightarrow 0,$ in the following lemma:
\begin{lemma}
\label{lem:gft-limit}
    Let $0 < \alpha_F \le \alpha_G < 1,$ Then, for any $\epsilon > 0,$ there exists $$0 < \delta < \min\left(\frac{\alpha_F}{2\alpha_F + 1}, \frac{1 - \alpha_F}{3 - 2\alpha_F}, \frac{\alpha_G}{2\alpha_G + 1}, \frac{1 - \alpha_G}{3 - 2\alpha_G}\right)$$ such that if $F = H_{\alpha_F, \delta}$ and $G = H_{\alpha_G, \delta},$ the gains-from-trade of the mechanism that draws samples $p \sim F, q \sim G,$ and offers price $p$ to the buyer and price $q$ to the seller achieves
    \begin{align*}
        \GFT < \frac{1}{12}\alpha_G(1 - \alpha_F)(\alpha_G - 2\alpha_F + \alpha_F\alpha_G + 2) + \epsilon.
    \end{align*}
\end{lemma}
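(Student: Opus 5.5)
The plan is to substitute $F = H_{\alpha_F, \delta}$ and $G = H_{\alpha_G, \delta}$ into the exact three-term formula of Lemma~\ref{lem:sample-gft}, which is legitimate since Lemma~\ref{lem:stoc-dom} ensures these are valid, stochastically ordered CDFs on $[0,1]$. We then compute the limit of each term as $\delta \to 0$, showing each equals its limiting value plus $O(\delta)$, and finally choose $\delta$ small enough that the total error is below $\epsilon$.

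The bookkeeping relies on the shape of $H_{\alpha,\delta}$. The middle piece $[\delta, 1-\delta]$ has length $1 - 2\delta$ and carries only $2\delta$ of probability mass, and there $H_{\alpha,\delta} \in [\alpha - \delta, \alpha + \delta]$. The piece $[0,\delta]$ carries mass $\alpha - \delta$, and the piece $[1-\delta, 1]$ carries mass $1 - \alpha - \delta$. All integrands lie in $[0,1]$, which gives the following estimates.

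\emph{First term.} $\frac14\int_0^1 G^2(1-F)^2\,dx = \frac14\alpha_G^2(1-\alpha_F)^2 + O(\delta)$, since the integrand is within $O(\delta)$ of $\alpha_G^2(1-\alpha_F)^2$ on the middle piece and the end pieces have length $\delta$.

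\emph{Second term.} In $\frac12\int G(q)^2\int_q^1(1-F)\,dx\,dF(q)$, split the outer integral by the three pieces. The middle piece has $dF$-mass $2\delta$ and contributes $O(\delta)$. On $[1-\delta,1]$ the inner integral is at most $\delta$, so this piece also contributes $O(\delta)$. On $[0,\delta]$ the inner integral equals $(1-\alpha_F) + O(\delta)$, so this piece contributes $\frac12(1-\alpha_F)\int_0^\delta G^2\,dF + O(\delta)$. Write
\[
\int_0^\delta G^2\,dF = \int_0^\delta F^2\,dF + \int_0^\delta (G^2 - F^2)F'\,dx .
\]
The first integral is exactly $(\alpha_F - \delta)^3/3$. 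The second is $\frac13(\alpha_G^2 - \alpha_F^2)\alpha_F + O(\delta)$ by Lemma~\ref{lem:err-bnd}. Their sum is $\alpha_F\alpha_G^2/3 + O(\delta)$, so the second term tends to $\frac16\alpha_F\alpha_G^2(1-\alpha_F)$.

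\emph{Third term.} Argue symmetrically. The pieces $[0,\delta]$ (inner integral at most $\delta$) and the middle (mass $2\delta$) contribute $O(\delta)$. On $[1-\delta,1]$ the inner integral $\int_0^p G\,dx$ equals $\alpha_G + O(\delta)$. Substituting $y = 1 - p$ gives $1 - F(1-y) = H_{1-\alpha_F,\delta,\QUAD}(y)$ and $1 - G(1-y) = H_{1-\alpha_G,\delta,\QUAD}(y)$, so
\[
\int_{1-\delta}^1 (1-F)^2\,dG = \int_0^\delta H_{1-\alpha_F,\delta,\QUAD}^2\,H_{1-\alpha_G,\delta,\QUAD}'\,dy .
\]
Since $1-\alpha_G \le 1-\alpha_F$, Lemma~\ref{lem:err-bnd} applies with $1-\alpha_G$ in the role of $\alpha_F$ and $1-\alpha_F$ in the role of $\alpha_G$. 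The result is $(1-\alpha_G)^3/3 + \frac13\big((1-\alpha_F)^2 - (1-\alpha_G)^2\big)(1-\alpha_G) + O(\delta) = \frac13(1-\alpha_G)(1-\alpha_F)^2 + O(\delta)$. Hence the third term tends to $\frac16\alpha_G(1-\alpha_G)(1-\alpha_F)^2$.

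\emph{Combining.} Factoring out $\frac1{12}\alpha_G(1-\alpha_F)$, the three limits sum to
\[
\tfrac1{12}\alpha_G(1-\alpha_F)\big(3\alpha_G(1-\alpha_F) + 2\alpha_F\alpha_G + 2(1-\alpha_G)(1-\alpha_F)\big),
\]
and the bracket simplifies to $\alpha_G - 2\alpha_F + \alpha_F\alpha_G + 2$, which is the claimed bound. Since every error term is $O(\delta)$ with constants depending only on $\alpha_F$ and $\alpha_G$, taking $\delta$ sufficiently small (and below the stated thresholds) gives $\GFT$ below the limit plus $\epsilon$.

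The main obstacle is the two boundary quadratic pieces. There the densities are of order $1/\delta$, so these pieces contribute $\Theta(1)$ amounts that cannot be discarded and must be computed exactly. That is precisely what Lemma~\ref{lem:err-bnd} provides, both directly and via the reflection $x \mapsto 1-x$. The care needed is in checking that the reflected instance satisfies the lemma's hypotheses ($1-\alpha_G \le 1-\alpha_F$ and the constraint on $\delta$), and that the inner integrals are uniformly $(1-\alpha_F) + O(\delta)$ and $\alpha_G + O(\delta)$ on the relevant pieces.
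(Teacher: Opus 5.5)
Your proposal is correct and follows essentially the same route as the paper. Both substitute into Lemma~\ref{lem:sample-gft}, split each integral over $[0,\delta]$, $[\delta,1-\delta]$ and $[1-\delta,1]$, and use Lemma~\ref{lem:err-bnd}, directly and after the reflection $x\mapsto 1-x$, for the $\Theta(1)$ contributions of the quadratic end pieces. The only difference is cosmetic: you evaluate $\int_0^\delta F^2\,dF=(\alpha_F-\delta)^3/3$ on the end piece, while the paper peels off the $F^2$ (resp.\ $(1-G)^2$) part globally via the identity $\frac13\int F^3(1-F)\,dx$ from Lemma~\ref{lem:sample-sym-gft}; both give the same limits.
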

\begin{proof}
    By Lemma~\ref{lem:sample-gft}, we have that
    \begin{align}
        \GFT &= \frac{1}{4}\int_0^1 G(x)^2(1 - F(x))^2 \, dx + \frac{1}{2}\int_0^1 G(q)^2 \int_q^1 (1-F(x)) \, dx \, dF(q)\nonumber\\
        \quad&+ \frac{1}{2}\int_0^1 (1 - F(p))^2 \int_0^p G(x) \, dx \, dG(p). \label{eq:01301001}
    \end{align}

    We now analyze each of the integrals in~\eqref{eq:01301001}.
    
    \paragraph{First Integral in~\eqref{eq:01301001}} We have that
    \begin{align*}
        \int_0^1 G(x)^2(1 - F(x))^2 \, dx &= \left(\int_0^\delta G(x)^2(1 - F(x))^2 \, dx + \int_\delta^{1 - \delta} G(x)^2(1 - F(x))^2 \, dx\right.\\
        &\qquad\qquad\left.+ \int_0^{1 - \delta} G(x)^2(1 - F(x))^2 \, dx\right)\\
        &\le \left(\delta + \int_\delta^{1 - \delta} G(x)^2(1 - F(x))^2 \, dx + \delta\right)\\
        &\le \left(\delta + \int_\delta^{1 - \delta} (\alpha_G + \delta)^2(1 - (\alpha_F - \delta))^2 \, dx + \delta\right)\\
        &= 2\delta + (1 - 2\delta)(\alpha_G + \delta)^2(1 - (\alpha_F - \delta))^2,
    \end{align*}
    where the last inequality holds since $F(x) \ge \alpha_F - \delta$ and $G(x) \le \alpha_G + \delta$ on $[\delta, 1 - \delta].$ 
    
    \paragraph{Second Integral in~\eqref{eq:01301001}} We have that
    \begin{align*}
        &\int_0^1 G(q)^2 \int_q^1 (1-F(x)) \, dx \, dF(q)\\ 
        &\quad= \int_0^1 F(q)^2 \int_q^1 (1-F(x)) \, dx \, dF(q) + \int_0^1 (G(q)^2 - F(q)^2) \int_q^1 (1-F(x)) \, dx \, dF(q)\\
        &\quad= \underbrace{\int_0^1 F(q)^2 \int_q^1 (1-F(x)) \, dx \, dF(q)}_{(a)} + \underbrace{\int_0^\delta (G(q)^2 - F(q)^2) \int_q^1 (1-F(x)) \, dx \, dF(q)}_{(b)}\\
        &\quad\quad+ \underbrace{\int_\delta^{1 - \delta} (G(q)^2 - F(q)^2) \int_q^1 (1-F(x)) \, dx \, dF(q)}_{(c)} + \underbrace{\int_{1 - \delta}^1 (G(q)^2 - F(q)^2) \int_q^1 (1-F(x)) \, dx \, dF(q)}_{(d)}.
    \end{align*}
    Now, integrating by parts as in the proof of Lemma~\ref{lem:sample-sym-gft}, we obtain that
    \begin{align*}
        (a) &= \frac{1}{3}\int_0^1 F(x)^3(1 - F(x)) \, dx\\
        &= \frac{1}{3}\left(\int_0^\delta F(x)^3(1 - F(x)) \, dx + \int_\delta^{1 - \delta} F(x)^3(1 - F(x)) \, dx + \int_{1 - \delta}^1 F(x)^3(1 - F(x)) \, dx\right)\\
        &\le \frac{1}{3}\left(\delta + \int_\delta^{1 - \delta} (\alpha_F + \delta)^3(1 - (\alpha_F - \delta)) \, dx + \delta\right)\\
        &= \frac{1}{3}(2\delta + (1 - 2\delta)(\alpha_F + \delta)^3(1 - (\alpha_F - \delta))).
    \end{align*}
    In addition,
    \begin{align*}
        (b) &= \int_0^\delta (G(q)^2 - F(q)^2)\left(\int_q^\delta (1 - F(x)) \, dx + \int_\delta^1 (1 - F(x)) \, dx\right) \, dF(q)\\
        &\le \int_0^\delta (G(q)^2 - F(q)^2)\left(\delta + \int_\delta^1 (1 - (\alpha_F - \delta)) \, dx\right) \, dF(q)\\
        &= \int_0^\delta (G(q)^2 - F(q)^2)(\delta + (1 - \delta)(1 - (\alpha_F - \delta))) \, dF(q)\\
        &= (\delta + (1 - \delta)(1 - (\alpha_F - \delta)))\int_0^\delta (G(q)^2 - F(q)^2) \, dF(q)\\
        &= (\delta + (1 - \delta)(1 - (\alpha_F - \delta)))\int_0^\delta (H_{\alpha_G, \delta, \QUAD}(q)^2 - H_{\alpha_F, \delta, \QUAD}(q)^2)H_{\alpha_F, \delta, \QUAD}'(q) \, dq\\
        &= (\delta + (1 - \delta)(1 - (\alpha_F - \delta)))\left(\frac{(\alpha_G - \alpha_F)(\alpha_G + \alpha_F)\alpha_F}{3} + O(\delta)\right),
    \end{align*}
    where the last equality holds by Lemma~\ref{lem:err-bnd}. Finally,
    \begin{align*}
        (c) \le \int_\delta^{1 - \delta} (G(q)^2 - F(q)^2) \, dF(q) \le \int_\delta^{1 - \delta} dF(q) = 2\delta
    \end{align*}
    and 
    \begin{align*}
        (d) &\le \int_{1 - \delta}^1 (G(q)^2 - F(q)^2) \int_{1 - \delta}^1 (1-F(x)) \, dx \, dF(q)\\
        &\le \delta\int_{1 - \delta}^1 (G(q)^2 - F(q)^2)\, dF(q)\\
        &\le \delta\int_{1 - \delta}^1 \, dF(q)\\
        &= \delta(1 - (\alpha_F + \delta)).
    \end{align*}
    Combining our results from above, we obtain that
    \begin{align*}
        \int_0^1 G(q)^2 \int_q^1 (1-F(x)) \, dx \, dF(q) &\le \frac{1}{3}(2\delta + (1 - 2\delta)(\alpha_F + \delta)^3(1 - (\alpha_F - \delta)))\\
        &\quad+ (\delta + (1 - \delta)(1 - (\alpha_F - \delta)))\left(\frac{(\alpha_G - \alpha_F)(\alpha_G + \alpha_F)\alpha_F}{3} + O(\delta)\right)\\
        &\quad+ 2\delta + \delta(1 - (\alpha_F + \delta)).
    \end{align*}

    \paragraph{Third Integral in~\eqref{eq:01301001}} We have that
    \begin{align*}
        &\int_0^1 (1 - F(p))^2 \int_0^p G(x) \, dx \, dG(p)\\
        &\quad= \int_0^1 (1 - G(p))^2 \int_0^p G(x) \, dx \, dG(p) + \int_0^1 ((1 - F(p))^2 - (1 - G(p))^2) \int_0^p G(x) \, dx \, dG(p)\\
        &\quad= \underbrace{\int_0^1 (1 - G(p))^2 \int_0^p G(x) \, dx \, dG(p)}_{(a)}\\
        &\quad\quad+ \underbrace{\int_0^\delta ((1 - F(p))^2 - (1 - G(p))^2) \int_0^p G(x) \, dx \, dG(p)}_{(b)}\\
        &\quad\quad+ \underbrace{\int_\delta^{1 - \delta} ((1 - F(p))^2 - (1 - G(p))^2) \int_0^p G(x) \, dx \, dG(p)}_{(c)}\\
        &\quad\quad+ \underbrace{\int_{1 - \delta}^1 ((1 - F(p))^2 - (1 - G(p))^2) \int_0^p G(x) \, dx \, dG(p)}_{(d)}.
    \end{align*}
    Now, integrating by parts as in the proof of Lemma~\ref{lem:sample-sym-gft}, we obtain that
    \begin{align*}
        (a) &= \frac{1}{3}\int_0^1 G(x)(1 - G(x))^3 \, dx\\
        &= \frac{1}{3}\left(\int_0^\delta G(x)(1 - G(x))^3 \, dx +\int_\delta^{1 - \delta} G(x)(1 - G(x))^3 \, dx + \int_{1 - \delta}^1 G(x)(1 - G(x))^3 \, dx\right)\\
        &\le \frac{1}{3}\left(\delta + \int_\delta^{1 - \delta} (\alpha_G + \delta)(1 - (\alpha_G - \delta))^3 \, dx + \delta\right)\\
        &= \frac{1}{3}(2\delta + (1 - 2\delta)(\alpha_G + \delta)(1 - (\alpha_G - \delta))^3).
    \end{align*}
    In addition, we have that
    \begin{align*}
        (d) &= \int_{1 - \delta}^1 ((1 - F(p))^2 - (1 - G(p))^2)\left(\int_0^{1 - \delta} G(x) \, dx + \int_{1 - \delta}^p G(x) \, dx\right) \, dG(p)\\
        &\le \int_{1 - \delta}^1 ((1 - F(p))^2 - (1 - G(p))^2)\left(\int_0^{1 - \delta} (\alpha_G + \delta) \, dx + \delta\right) \, dG(p)\\
        &= \int_{1 - \delta}^1 ((1 - F(p))^2 - (1 - G(p))^2)((1 - \delta)(\alpha_G + \delta) + \delta) \, dG(p)\\
        &= ((1 - \delta)(\alpha_G + \delta) + \delta)\int_{1 - \delta}^1 ((1 - F(p))^2 - (1 - G(p))^2) \, dG(p)\\
        &= ((1 - \delta)(\alpha_G + \delta) + \delta)\int_{1 - \delta}^1 (H_{1 - \alpha_F, \delta, \QUAD}(1 - p)^2 - H_{1 - \alpha_G, \delta, \QUAD}(1 - p)^2)H_{1 - \alpha_G, \delta, \QUAD}'(1 - p) \, dp\\
        &= ((1 - \delta)(\alpha_G + \delta) + \delta)\int_0^\delta (H_{1 - \alpha_F, \delta, \QUAD}(p)^2 - H_{1 - \alpha_G, \delta, \QUAD}(p)^2)H_{1 - \alpha_G, \delta, \QUAD}'(p) \, dp\\
        &= ((1 - \delta)(\alpha_G + \delta) + \delta)\left(\frac{(\alpha_G - \alpha_F)(2 - \alpha_F - \alpha_G)(1 - \alpha_G)}{3} + O(\delta)\right),
    \end{align*}
    where we perform a change of variables in the second-to-last equality and where the last equality holds by Lemma~\ref{lem:err-bnd}, with $1 - \alpha_G$ and $1 - \alpha_F$ taking the roles of $\alpha_F$ and $\alpha_G,$ respectively. Finally, we have that
    \begin{align*}
        (b) &\le \int_0^\delta ((1 - F(p))^2 - (1 - G(p))^2) \int_0^\delta G(x) \, dx \, dG(p)\\
        &\le \delta\int_0^\delta ((1 - F(p))^2 - (1 - G(p))^2) \, dG(p)\\
        &\le \delta\int_0^\delta dG(p)\\
        &= \delta(\alpha_G - \delta)
    \end{align*}
    and
    \begin{align*}
        (c) \le \int_\delta^{1 - \delta} ((1 - F(p))^2 - (1 - G(p))^2) \, dG(p) \le \int_\delta^{1 - \delta} dG(p) = 2\delta.
    \end{align*}
    Combining our results from above, we obtain that
    \begin{align*}
        \int_0^1 (1 - F(p))^2 \int_0^p G(x) \, dx \, dG(p) &\le \frac{1}{3}(2\delta + (1 - 2\delta)(\alpha_G + \delta)(1 - (\alpha_G - \delta))^3)\\
        &\quad+ ((1 - \delta)(\alpha_G + \delta) + \delta)\left(\frac{(\alpha_G - \alpha_F)(2 - \alpha_F - \alpha_G)(1 - \alpha_G)}{3} + O(\delta)\right)\\
        &\quad+ \delta(\alpha_G - \delta) + 2\delta.
    \end{align*}
    
    Finally, returning to~\eqref{eq:01301001}, we obtain that
    \begin{align*}
        \GFT &< \frac{\alpha_G^2(1 - \alpha_F)^2}{4} + \frac{\alpha_F^3(1 - \alpha_F)}{6} + \frac{(1 - \alpha_F)(\alpha_G - \alpha_F)(\alpha_G + \alpha_F)\alpha_F}{6}\\
        &\quad+ \frac{\alpha_G(1 - \alpha_G)^3}{6} + \frac{\alpha_G(\alpha_G - \alpha_F)(2 - \alpha_F - \alpha_G)(1 - \alpha_G)}{6} + \epsilon\\
        &= \frac{1}{12}\alpha_G(1 - \alpha_F)(\alpha_G - 2\alpha_F + \alpha_F\alpha_G + 2) + \epsilon
    \end{align*}
    for sufficiently small $\delta.$
\end{proof}

\section{Omitted Proofs}
\subsection{Proof of Lemma~\ref{lem:sample-gft}}
\begin{customlem}{\ref{lem:sample-gft}}
    The mechanism that draws samples $p \sim F, q \sim G$ and offers price $p$ to the buyer and price $q$ to the seller achieves
    \begin{align*}
         \Exu{p \sim F, q \sim G}{\GFT(p, q)} &= \frac{1}{4}\int_0^\infty G(x)^2(1 - F(x))^2 \, dx + \frac{1}{2}\int_0^\infty G(q)^2 \int_q^{\infty}(1-F(x)) \, dx \, dF(q)\\
        \quad&+ \frac{1}{2}\int_0^\infty (1 - F(p))^2 \int_0^p G(x) \, dx \, dG(p).
    \end{align*}
\end{customlem}
To prove Lemma~\ref{lem:sample-gft}, we will use the following result from the proof of Theorem 1.1 in~\citet{hajiaghayi2025bilateral}:
\begin{lemma}{\cite{hajiaghayi2025bilateral}}\label{lm:hajiaghayi}
    The gains-from-trade of the mechanism that offers price $p$ to the buyer and price $q$ to the seller, where $p \ge q,$ is given by
    \begin{align*}
        \GFT(p,q) &= \Pro(p, q) +  G(q)\int_{p}^\infty (1-F(x)) \, dx + (1-F(p))\int_{0}^q G(x) \, dx.
    \end{align*}
\end{lemma}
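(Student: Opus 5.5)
We will prove the identity by decomposing the surplus of each trade according to the posted prices. Fix $p \ge q$. Under the posted-pricing mechanism, trade occurs exactly when $v \ge p$ and $c \le q$. By definition,
\begin{align*}
    \GFT(p, q) = \Exu{v \sim F, c \sim G}{(v - c)\mathbf{1}\{v \ge p\}\mathbf{1}\{c \le q\}}.
\end{align*}
The key step is to split the surplus of each trade along the prices:
\begin{align*}
    v - c = (v - p) + (p - q) + (q - c).
\end{align*}
The middle term $(p - q)$ is exactly the broker's per-trade margin. The outer two terms are the buyer's and the seller's utilities, respectively.

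Next we substitute this split into the expectation and use the independence of $v$ and $c$. This factors the three resulting terms as follows:
\begin{align*}
    G(q)\,\Exu{v \sim F}{(v - p)\mathbf{1}\{v \ge p\}} + (p - q)(1 - F(p))G(q) + (1 - F(p))\,\Exu{c \sim G}{(q - c)\mathbf{1}\{c \le q\}}.
\end{align*}
The middle term is precisely $\Pro(p, q)$, since the broker collects $p$ from the buyer and pays $q$ to the seller whenever trade occurs.

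It remains to rewrite the two truncated expectations as integrals of the CDFs. We will use the standard tail-integral identities
\begin{align*}
    \Exu{v \sim F}{(v - p)^+} = \int_p^\infty (1 - F(x)) \, dx \quad\text{and}\quad \Exu{c \sim G}{(q - c)^+} = \int_0^q G(x) \, dx.
\end{align*}
To prove them, write $(v-p)^+ = \int_p^\infty \mathbf{1}\{v > x\}\,dx$ and $(q-c)^+ = \int_0^q \mathbf{1}\{c \le x\}\,dx$ (the latter using $c \ge 0$), then apply Fubini. Fubini is justified because the distributions are bounded and nonnegative. Alternatively, one can integrate by parts as in the proof of Lemma~\ref{lem:sample-pft}, where the boundary terms vanish by boundedness of the support. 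Substituting these identities yields the claimed formula.

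There is no real obstacle here. The only care needed is in the tail-integral step: the indicator conventions ($\ge$ versus $>$) are immaterial because $F$ and $G$ are absolutely continuous, and the boundary terms vanish by boundedness.
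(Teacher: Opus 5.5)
Your proof is correct and complete. Restricting to the trade event $\{v\ge p,\ c\le q\}$ and splitting $v-c=(v-p)+(p-q)+(q-c)$ into buyer surplus, broker margin, and seller surplus gives the stated formula. Factoring each term by independence and applying the two tail-integral identities then yields exactly the three summands.

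Two small remarks on the last step. The interchange of integrals needs only Tonelli, since the integrands are nonnegative, so boundedness is not required. And $c\ge 0$ is indeed what makes $\int_0^q G(x)\,dx$ the correct expression.

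The paper itself gives no proof of this lemma; it imports it from the proof of Theorem 1.1 in \citet{hajiaghayi2025bilateral}. So there is no in-paper argument to compare against. Your derivation supplies a self-contained verification, consistent with how the lemma is used in the proof of Lemma~\ref{lem:sample-gft}.
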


\begin{proof}[Proof of Lemma~\ref{lem:sample-gft}]
By Lemma~\ref{lm:hajiaghayi}, we have that
\begin{align*}
    \GFT(p,q) &= \Pro(p, q) +  G(q)\int_{p}^\infty (1-F(x)) \, dx + (1-F(p))\int_{0}^q G(x) \, dx
\end{align*}
for $p \ge q.$
Thus, together with Lemma~\ref{lem:sample-pft}, we have that
\begin{align*}
    \Exu{p \sim F, q \sim G}{\GFT(p, q)} &= \frac{1}{4}\int_0^\infty G(x)^2(1 - F(x))^2 \, dx + \lambda_1 + \lambda_2,
\end{align*}
where we define
\begin{align*}
    \lambda_1 &= \int_{0}^\infty \int_q^\infty G(q)\int_p^\infty (1-F(x)) \, dx \, dF(p) \, dG(q)\\
    \lambda_2 &= \int_0^\infty \int_q^\infty (1-F(p))\int_0^q G(x) \, dx \, dF(p) \, dG(q).
\end{align*}
Now, let $\displaystyle\lambda = \int_0^\infty (1-F(x)) \, dx.$ Then, we have that
\begin{align*}
    \lambda_1 &= \int_{0}^\infty G(q)\parans{\int_q^\infty \int_p^\infty (1-F(x)) \, dx \, dF(p)} \, dG(q)\\
    &= \int_{0}^\infty G(q)\parans{\int_q^\infty \left[\lambda - \int_0^p (1-F(x)) \, dx\right] dF(p)} \, dG(q)\\
    &= \int_{0}^\infty G(q)\parans{\lambda(1-F(q)) - \int_q^\infty \int_0^p (1-F(x)) \, dx \, dF(p)} \, dG(q).
\end{align*}
For the inner integral, notice that
\begin{align*}
    \int_q^\infty \int_0^p (1-F(x)) \, dx \, dF(p) &= F(p)\int_0^p (1-F(x)) \, dx \Bigg|_q^\infty - \int_q^\infty F(p)(1-F(p)) \, dp \tag{integrating by parts over $F'(p)$ and $\displaystyle\int_0^p(1-F(x)) \, dx$}
    \\
    &= \lambda - F(q)\int_0^q (1-F(x)) \, dx - \int_q^\infty F(p)(1-F(p)) \, dp.
\end{align*}
Hence, we can rewrite $\lambda_1$ as
\begin{align*}
    \lambda_1 &= \int_0^\infty G(q)\left(F(q)\int_0^q (1-F(x)) \, dx\right. + \left.\int_q^\infty F(p)(1 - F(p)) \, dp - \lambda F(q)\right) \, dG(q)
    \\
    &= 
    \int_0^\infty G(q) \parans{\int_q^{\infty}(1-F(x))(F(x) - F(q)) \, dx} \, dG(q)
    \\
    &= 
    \int_0^\infty G(q) \parans{\int_q^{\infty}(1-F(x))F(x) \, dx} \, dG(q) - \int_0^\infty F(q)G(q) \parans{\int_q^{\infty}(1-F(x)) \, dx} \, dG(q)
    \\
    &= 
    \int_0^\infty G(q) \parans{\int_q^{\infty}(1-F(x))F(x) \, dx} G'(q) \, dq - \int_0^\infty F(q)G(q) \parans{\int_q^{\infty}(1-F(x)) \, dx} G'(q) \, dq.
\end{align*}
Integrating by parts again, we obtain that
\begin{align*}
    \lambda_1 &= \left(\frac{G(q)^2}{2}\int_q^{\infty}(1-F(x))F(x) \, dx \Bigg|_0^\infty + \frac{1}{2}\int_0^\infty G(q)^2F(q)(1 - F(q)) \, dq\right)\\
    &- \left(\frac{G(q)^2}{2}F(q)\int_q^{\infty}(1-F(x)) \, dx \Bigg|_0^\infty\right. + \left.\frac{1}{2}\int_0^\infty G(q)^2\left[F(q)(1 - F(q)) - F'(q)\int_q^{\infty}(1-F(x)) \, dx\right] \, dq\right)\\
    &= \frac{1}{2}\int_0^\infty G(q)^2 \int_q^{\infty}(1-F(x)) \, dx \, dF(q).
\end{align*}
Finally, we have that
\begin{align*}
    \lambda_2 &= 
    \int_0^\infty \int_q^\infty \parans{(1-F(p))\int_0^q G(x) \, dx} \, dF(p) \, dG(q)
    \\
    &= \int_0^\infty \int_0^p \parans{(1-F(p)) \int_0^q G(x) \, dx} \, dG(q) \, dF(p)
    \\
    &= 
    \int_0^\infty (1-F(p)) \parans{\int_0^p \int_0^q G(x) \, dx \, dG(q)} \, dF(p)
    \\
    &= 
    \int_0^\infty (1-F(p)) \parans{\int_0^p \int_0^q G(x) \, dx \, G'(q) \, dq} \, dF(p).
\end{align*}
Integrating by parts over $G'(q)$ and $\displaystyle\int_0^q G(x) \, dx,$ we obtain that
\begin{align*}
    \lambda_2 
    &= \int_0^\infty (1-F(p))  
    \parans{G(q)\int_0^q G(x) \, dx \Bigg|_0^p - \int_0^p G(q)^2 \, dq} \, dF(p)
    \\
    &=
    \int_0^\infty (1-F(p)) \parans{\int_0^p G(x)(G(p) - G(x)) \, dx} \, dF(p)\\
    &=
    \int_0^\infty (1-F(p)) \parans{\int_0^p G(x)(1 - G(x)) \, dx} \, dF(p) - \int_0^\infty (1-F(p))(1 - G(p)) \parans{\int_0^p G(x) \, dx} \, dF(p)\\
    &= \int_0^\infty (1-F(p)) \parans{\int_0^p G(x)(1 - G(x)) \, dx} F'(p) \, dp - \int_0^\infty (1-F(p))(1 - G(p)) \parans{\int_0^p G(x) \, dx} F'(p) \, dp.
\end{align*}
Integrating by parts again, we obtain that
\begin{align*}
    \lambda_2 &= \left(-\frac{(1 - F(p))^2}{2}\int_0^p G(x)(1 - G(x)) \, dx \Bigg|_0^\infty\right. + \left.\frac{1}{2}\int_0^\infty (1 - F(p))^2(1 - G(p))G(p) \, dp\right)\\
    &-\left(-\frac{(1 - F(p))^2}{2}(1 - G(p))\int_0^p G(x) \, dx \Bigg|_0^\infty + \frac{1}{2}\int_0^\infty (1 - F(p))^2\left[(1 - G(p))G(p) - G'(p)\int_0^p G(x) \, dx\right] \, dp\right)\\
    &= \frac{1}{2}\int_0^\infty (1 - F(p))^2 \int_0^p G(x) \, dx \, dG(p).
\end{align*}
Finally, combining all our results from above, we conclude that
\begin{align*}
    \Exu{p \sim F, q \sim G}{\GFT(p, q)}
    &= \frac{1}{4}\int_0^\infty G(x)^2(1 - F(x))^2 \, dx + \frac{1}{2}\int_0^\infty G(q)^2 \int_q^{\infty}(1-F(x)) \, dx \, dF(q)\\
    \quad&+ \frac{1}{2}\int_0^\infty (1 - F(p))^2 \int_0^p G(x) \, dx \, dG(p).
\end{align*}
\end{proof}

\subsection{Proof of Theorem~\ref{thm:sample-const-apx-sw}}
\begin{customthm}{\ref{thm:sample-const-apx-sw}}
    In the symmetric setting, the mechanism that draws samples $p, q \sim F$ and offers price $\max(p, q)$ to the buyer and price $\min(p, q)$ to the seller achieves
    \begin{align*}
         \SW \ge \frac{2}{3}\FBW.
    \end{align*}
\end{customthm}
\begin{proof}
We have that
\begin{align*}
    \SW &= \Exu{c \sim F}{c} + \GFT
    \\
    &= 
    \int_0^\infty (1 - F(x)) \, dx + \left(\frac{1}{3}\int_0^\infty F(x)(1 - F(x)) \, dx\right. - \left.\frac{1}{6}\int_0^\infty F(x)^2(1 - F(x))^2 \, dx\right)
    \\
    &= \frac{1}{6}\left(\int_0^\infty (1 - F(x))(1 + F(x))(F(x)^2 - 2F(x) + 4) \, dx\right. +\left. 2\int_0^\infty (1 - F(x)) \, dx\right),
\end{align*}
where the second equality holds by Lemma~\ref{lem:sample-sym-gft}. 
Thus, since $F(x)^2 - 2F(x) + 4 = (F(x) - 1)^2 + 3 \ge 3$ and $1 + F(x) \le 2$ for all $x \ge 0,$
we have that
\begin{align*}
    \SW &= \frac{1}{6}\left(\int_0^\infty (1 - F(x))(1 + F(x))(F(x)^2 - 2F(x) + 4) \, dx\right. + \left.2\int_0^\infty (1 - F(x)) \, dx\right)\\
    &\ge \frac{1}{6}\left(3\int_0^\infty (1 - F(x))(1 + F(x)) \, dx\right. + \left.\int_0^\infty (1 - F(x))(1 + F(x)) \, dx\right)\\
    &= \frac{2}{3}\int_0^\infty (1 - F(x))(1 + F(x)) \, dx\\
    &= \frac{2}{3}\FBW.
\end{align*}
\end{proof}

\subsection{Proof of Theorem~\ref{thm:sample-asym-const-sw}}
\begin{customthm}{\ref{thm:sample-asym-const-sw}}
    Suppose $F$ stochastically dominates $G.$ Then, the mechanism that draws samples $p \sim F, q \sim G$ and offers price $p$ to the buyer and price $q$ to the seller achieves
    \begin{align*}
         \SW \ge \frac{3 - \sqrt{2}}{12}\FBW,
    \end{align*}
    where $(3 - \sqrt{2}) / 12 \approx 0.1321.$
\end{customthm}
\begin{proof}
By Lemma~\ref{lem:sample-stoc-dom-gft}, 
\begin{align*}
    \SW &= \Exu{c \sim G}{c} + \GFT\\ 
    &\ge \int_0^\infty (1 - G(x)) \, dx + \frac{1}{12}\int_0^\infty (3G(x)^2(1 - F(x))^2 + 2F(x)^3(1 - F(x)) + 2(1 - G(x))^3G(x)) \, dx.
\end{align*}
Now, consider the following optimization problem:
\begin{equation*}
    \begin{aligned}
        & \underset{f, g}{\text{minimize}}
        & & \frac{12(1 - g) + 3g^2(1 - f)^2 + 2f^3(1 - f) + 2(1 - g)^3g}{12(1 - fg)} \\
        & \text{subject to}
        & & 0 \le f \le g \le 1.
    \end{aligned}
\end{equation*}
Solving this numerically,\footref{ft:wolfram} we obtain that the optimum $\alpha = (3 - \sqrt{2}) / 12$ occurs at $(f, g) = (1 / \sqrt{2}, 1).$ Thus,
\begin{align*}
    &(1 - G(x)) + \frac{1}{12}(3G(x)^2(1 - F(x))^2 + 2F(x)^3(1 - F(x)) + 2(1 - G(x))^3G(x))\\
    &\quad\ge \frac{3 - \sqrt{2}}{12}((1 - G(x)) + G(x)(1 - F(x)))
\end{align*}
for all $x \ge 0,$ so 
\begin{align*}
    \SW &\ge \frac{3 - \sqrt{2}}{12}\parans{\int_0^\infty (1 - G(x)) \, dx + \int_0^\infty G(x)(1 - F(x)) \, dx} = \frac{3 - \sqrt{2}}{12}\FBW.
\end{align*}
\end{proof}

\subsection{Proof of Theorem~\ref{thm:asym-ub}}
Similar to the symmetric setting, our proof of the lower bound in Theorem~\ref{thm:sample-asym-const-gft} suggests that we can prove a matching upper bound for our mechanism by taking $F(x) \approx \alpha_F$ and $G(x) \approx \alpha_G,$ where $(\alpha_F, \alpha_G)$ corresponds to the solution of the optimization problem in the proof of Theorem~\ref{thm:sample-asym-const-gft}. Unfortunately, since the proof of Theorem~\ref{thm:sample-asym-const-gft} uses the lower bound of Lemma~\ref{lem:sample-stoc-dom-gft}, which may not be tight, choosing these values for $\alpha_F$ and $\alpha_G$ will not necessarily give us the best upper bound. Thus, we instead consider $F = H_{\alpha_F, \delta}$ and $G = H_{\alpha_G, \delta}$ for $0<\alpha_F \le \alpha_G < 1$ and $\delta > 0$ \textit{to be determined} and optimize over $\alpha_F, \alpha_G,$ and $\delta$ to obtain the following upper bound on the approximation ratio of our mechanism to the first-best gains-from-trade:
\begin{theorem}
\label{thm:ub-gft-asym}
    For any $\epsilon > 0,$ there exists $F$ and $G,$ where $F$ stochastically dominates $G,$ such that the mechanism that draws samples $p \sim F, q \sim G$ and offers price $p$ to the buyer and price $q$ to the seller achieves
    \begin{align*}
        \GFT < \left(\frac{7}{48} + \epsilon\right)\FB.
    \end{align*}
\end{theorem}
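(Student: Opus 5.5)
The plan is to instantiate Lemma~\ref{lem:gft-limit} with well-chosen constants $\alpha_F \le \alpha_G$, and to pair it with a lower bound on $\FB$ for the same instance. Lemma~\ref{lem:gft-limit} bounds $\GFT$ by $\frac{1}{12}\alpha_G(1-\alpha_F)(\alpha_G - 2\alpha_F + \alpha_F\alpha_G + 2)+\epsilon$. For $F = H_{\alpha_F,\delta}$ and $G = H_{\alpha_G,\delta}$ we expect $\FB = \int_0^1 G(x)(1-F(x))\,dx \approx \alpha_G(1-\alpha_F)$. The limiting ratio is therefore
\[
\frac{\alpha_G - 2\alpha_F + \alpha_F\alpha_G + 2}{12},
\]
and we minimize it over $0<\alpha_F\le\alpha_G<1$.

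\textbf{Step 1: choose the constants.} The expression is increasing in $\alpha_G$, since its derivative is $(1+\alpha_F)/12>0$. So we set $\alpha_G=\alpha_F=a$, which gives $(a^2-a+2)/12$. This is minimized at $a=1/2$, with value $(7/4)/12 = 7/48$. We therefore take $F=G=H_{1/2,\delta}$ for small $\delta$ (for instance $\delta<1/4$, which satisfies the constraints of Definition~\ref{def:dist}). Stochastic dominance is then trivial, or follows from Lemma~\ref{lem:stoc-dom} with $\alpha_F=\alpha_G$. Intuitively, this is the tight symmetric instance of Theorem~\ref{thm:sym-gft-ub}, except that the mechanism now loses all trades with $p<q$, which halves the gains-from-trade ($7/24 \to 7/48$).

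\textbf{Step 2: upper-bound $\GFT$.} By Lemma~\ref{lem:gft-limit}, for any $\epsilon'>0$ and sufficiently small $\delta$,
\[
\GFT < \tfrac{1}{12}\cdot\tfrac12\cdot\tfrac12\cdot\tfrac74+\epsilon' = \tfrac{7}{192}+\epsilon'.
\]

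\textbf{Step 3: lower-bound $\FB$.} Since $F = G = H_{1/2,\delta}$ satisfies $1/2-\delta \le F \le 1/2+\delta$ on $[\delta,1-\delta]$,
\[
\FB = \int_0^1 G(x)(1-F(x))\,dx \ge \int_\delta^{1-\delta}\left(\tfrac12-\delta\right)^2 dx = (1-2\delta)\left(\tfrac12-\delta\right)^2 \xrightarrow{\delta\to 0} \tfrac14 .
\]
In particular, $\FB$ is bounded away from $0$, just as in~\eqref{eq:12200809}.

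\textbf{Step 4: conclude.} Dividing the two bounds gives
\[
\frac{\GFT}{\FB} < \frac{7/192+\epsilon'}{(1-2\delta)(1/2-\delta)^2},
\]
which tends to $7/48 + 4\epsilon'$ as $\delta\to0$. Choosing $\epsilon'$ and $\delta$ small enough makes this less than $7/48+\epsilon$. The substantive work (the error terms involving the quadratic pieces near $0$ and $1$) is already done in Lemmas~\ref{lem:err-bnd} and~\ref{lem:gft-limit}. The only remaining care is the order of limits: fix $\epsilon'$ first, then take $\delta$ small enough for both Lemma~\ref{lem:gft-limit} and the $\FB$ bound.
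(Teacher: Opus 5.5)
Your proposal is correct and follows the paper's proof essentially verbatim: take $F = H_{\alpha_F,\delta}$ and $G = H_{\alpha_G,\delta}$, bound $\GFT$ via Lemma~\ref{lem:gft-limit}, lower-bound $\FB$ by $(1-2\delta)(\alpha_G-\delta)(1-\alpha_F-\delta)$ on $[\delta,1-\delta]$, and minimize $(\alpha_G - 2\alpha_F + \alpha_F\alpha_G + 2)/12$ at $\alpha_F=\alpha_G=1/2$. You handle the $\epsilon'$ versus $\epsilon$ bookkeeping (the factor $4\epsilon'$ from dividing by $\FB\approx 1/4$) and the order of limits more carefully than the paper, which folds both into ``for sufficiently small $\delta$.''
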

\begin{proof}
    Let $0 < \alpha_F \le \alpha_G < 1$ and $$0 < \delta <\min\left(\frac{\alpha_F}{2\alpha_F + 1}, \frac{1 - \alpha_F}{3 - 2\alpha_F}, \frac{\alpha_G}{2\alpha_G + 1}, \frac{1 - \alpha_G}{3 - 2\alpha_G}\right)$$ be constants to be determined later. Now, let $F = H_{\alpha_F, \delta}$ and $G = H_{\alpha_G, \delta}.$ Note that $F$ stochastically dominates $G,$ by Lemma~\ref{lem:stoc-dom}. 
    
    Now, let $\epsilon > 0.$ We have that
    \begin{align*}
        \FB &= \int_0^1 G(x)(1 - F(x)) \, dx\\
        &\ge \int_\delta^{1 - \delta} G(x)(1 - F(x)) \, dx\\
        &\ge \int_\delta^{1 - \delta} (\alpha_G - \delta)(1 - (\alpha_F + \delta)) \, dx\\
        &= (1 - 2\delta)(\alpha_G - \delta)(1 - (\alpha_F + \delta)),
    \end{align*}
    so by Lemma~\ref{lem:gft-limit},
    \begin{align*}
        \frac{\GFT}{\FB} < \frac{1}{12}(\alpha_G - 2\alpha_F + \alpha_F\alpha_G + 2) + \epsilon 
    \end{align*}
    for sufficiently small $\delta.$ Finally, since the ratio above is minimized when $\alpha_F = \alpha_G$ and $\alpha_G = 1 / 2,$ we obtain that
    \begin{align*}
        \frac{\GFT}{\FB} < \frac{7}{48} + \epsilon
    \end{align*}
    for sufficiently small $\delta.$
\end{proof}

Likewise, we obtain the following upper bound on the approximation ratio of our mechanism to the first-best social welfare: 
\begin{theorem}
\label{thm:ub-sw-asym}
    For any $\epsilon > 0,$ there exists $F$ and $G,$ where $F$ stochastically dominates $G,$ such that the mechanism that draws samples $p \sim F, q \sim G$ and offers price $p$ to the buyer and price $q$ to the seller achieves
    \begin{align*}
        \SW < \left(\frac{1}{6} + \epsilon\right)\FBW.
    \end{align*}
\end{theorem}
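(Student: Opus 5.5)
We would reuse the construction behind Theorem~\ref{thm:ub-gft-asym}. Take $F = H_{\alpha_F, \delta}$ and $G = H_{\alpha_G, \delta}$ with $0 < \alpha_F \le \alpha_G < 1$ and $\delta$ small enough to satisfy the hypotheses of Lemma~\ref{lem:stoc-dom}, so that $F$ stochastically dominates $G$. The difference from the gains-from-trade case is that $\SW$ and $\FBW$ both contain the additive term $\Exu{c \sim G}{c} = \int_0^1 (1 - G(x))\,dx$. This term must be made negligible relative to the gains-from-trade part, so we will not choose $\alpha_F = \alpha_G$. Indeed, with $\alpha_F = \alpha_G = \alpha \to 1$ a quick computation gives a ratio tending to $7/12$. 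Instead, $\alpha_G$ must be pushed to $1$ much faster than $\alpha_F$.

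\textbf{Step 1 (limits as $\delta \to 0$).} For fixed $\alpha_F, \alpha_G$, the bounds $\alpha - \delta \le H_{\alpha,\delta} \le \alpha + \delta$ on $[\delta, 1-\delta]$, together with the trivial bounds on the two end intervals of length $\delta$, give
\begin{align*}
\Exu{c \sim G}{c} \le 1 - \alpha_G + 3\delta, \qquad \FB \ge (1 - 2\delta)(\alpha_G - \delta)(1 - \alpha_F - \delta).
\end{align*}
Lemma~\ref{lem:gft-limit} gives, for sufficiently small $\delta$,
\begin{align*}
\GFT < \tfrac{1}{12}\alpha_G(1 - \alpha_F)(\alpha_G - 2\alpha_F + \alpha_F\alpha_G + 2) + \epsilon'.
\end{align*}
Hence, for small $\delta$,
\begin{align*}
\frac{\SW}{\FBW} \le \frac{(1 - \alpha_G) + \tfrac{1}{12}\alpha_G(1 - \alpha_F)(\alpha_G - 2\alpha_F + \alpha_F\alpha_G + 2) + \epsilon''}{(1 - \alpha_G) + \alpha_G(1 - \alpha_F) - \epsilon''},
\end{align*}
where $\epsilon''$ is an error that vanishes as $\delta \to 0$ and $\epsilon' \to 0$. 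The denominator is bounded away from $0$ because $\alpha_F < 1$ is fixed, so the error terms can be absorbed.

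\textbf{Step 2 (choice of parameters).} Write $\alpha_F = 1 - \kappa$ and $\alpha_G = 1 - \eta$ with $0 < \eta \le \kappa$. The numerator's main term is $\eta + \tfrac{1}{12}(1-\eta)\kappa\,(\alpha_G - 2\alpha_F + \alpha_F\alpha_G + 2)$. The last factor tends to $1 - 2 + 1 + 2 = 2$ as $\kappa, \eta \to 0$. So the numerator is $\eta + \kappa/6 + o(\kappa)$ and the denominator is $\eta + \kappa(1 - \eta)$. Taking, say, $\eta = \kappa^2$ and $\kappa \to 0$ makes the ratio tend to $1/6$. Given $\epsilon$, we first fix $\kappa$ (and hence $\eta$) so that the $\delta$-free ratio is below $1/6 + \epsilon/2$. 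We then pick $\delta$ via Step~1 so that the remaining errors contribute less than $\epsilon/2$. This order of quantifiers is legitimate because, for fixed $\alpha_F, \alpha_G$, the errors in Step~1 vanish as $\delta \to 0$.

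The main obstacle is making the error control in Step~1 uniform enough. Lemma~\ref{lem:gft-limit} is stated only as an upper bound with an additive $\epsilon$, so we need $\FBW$ bounded below independently of $\delta$; the lower bound $\FB \ge (1-2\delta)(\alpha_G-\delta)(1-\alpha_F-\delta)$ provides exactly this. We also need the upper bound on $\Exu{c\sim G}{c}$ to be tight up to $O(\delta)$, which the same sandwich on $[\delta, 1-\delta]$ gives. Once these two estimates are in place, the rest is the elementary asymptotic computation in Step~2.
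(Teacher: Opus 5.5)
Your proposal is correct and follows the paper's proof. It uses the same instance $F = H_{\alpha_F,\delta}$, $G = H_{\alpha_G,\delta}$, the same sandwich bounds on $\mathbb{E}_{c\sim G}[c]$ and $\FB$ over $[\delta,1-\delta]$, Lemma~\ref{lem:gft-limit} for $\GFT$, and the same regime with $\alpha_G \to 1$ faster than $\alpha_F \to 1$. The only difference is in the final limit: the paper shows the ratio is decreasing in $\alpha_G$, sends $\alpha_G \to 1$ to get $(3-\alpha_F)/12$, and then lets $\alpha_F \to 1$. You instead move along the curve $1-\alpha_G = (1-\alpha_F)^2$, which skips the derivative computation and makes the order of quantifiers explicit (fix $\alpha_F,\alpha_G$, then shrink $\delta$).
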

\begin{proof}
    Our proof follows a similar structure as the proof of Theorem~\ref{thm:ub-gft-asym}. Let $0 < \alpha_F \le \alpha_G < 1$ and $$0 < \delta <\min\left(\frac{\alpha_F}{2\alpha_F + 1}, \frac{1 - \alpha_F}{3 - 2\alpha_F}, \frac{\alpha_G}{2\alpha_G + 1}, \frac{1 - \alpha_G}{3 - 2\alpha_G}\right)$$ be constants to be determined later. Now, let $F = H_{\alpha_F, \delta}$ and $G = H_{\alpha_G, \delta}.$ Note that $F$ stochastically dominates $G,$ by Lemma~\ref{lem:stoc-dom}. 
    
    Now, let $\epsilon' > 0.$ We have that
    \begin{align*}
        \Exu{c \sim G}{c} &= \int_0^1 (1 - G(x)) \, dx\\
        &\le 2\delta + \int_\delta^{1 - \delta} (1 - G(x)) \, dx\\
        &\le 2\delta + \int_\delta^{1 - \delta} (1 - (\alpha_G - \delta)) \, dx\\
        &= 2\delta + (1 - 2\delta)(1 - (\alpha_G - \delta))
    \end{align*}
    and
    \begin{align*}
        \FBW &= \Exu{c \sim G}{c} + \FB\\
        &= \int_0^1 (1 - G(x)) \, dx + \int_0^1 G(x)(1 - F(x)) \, dx\\
        &\ge \int_\delta^{1 - \delta} (1 - G(x)) \, dx + \int_\delta^{1 - \delta} G(x)(1 - F(x)) \, dx\\
        &\ge \int_\delta^{1 - \delta} (1 - (\alpha_G + \delta)) \, dx + \int_\delta^{1 - \delta} (\alpha_G - \delta)(1 - (\alpha_F + \delta)) \, dx\\
        &= (1 - 2\delta)(1 - (\alpha_G - \delta)) + (1 - 2\delta)(\alpha_G - \delta)(1 - (\alpha_F + \delta)).
     \end{align*}
    Thus, since $\SW = \GFT + \Exu{c \sim G}{c},$ we have that
    \begin{align*}
        \frac{\SW}{\FBW} &< \frac{12(1 - \alpha_G) + \alpha_G(1 - \alpha_F)(\alpha_G - 2\alpha_F + \alpha_F\alpha_G + 2)}{12(1 - \alpha_F\alpha_G)} + \epsilon'\\
        &= \frac{12 - \alpha_F^2\alpha_G^2 + 2\alpha_F^2\alpha_G - 4\alpha_F\alpha_G + \alpha_G^2 - 10\alpha_G}{12(1 - \alpha_F\alpha_G)} + \epsilon'
    \end{align*}
    for sufficiently small $\delta,$ by Lemma~\ref{lem:gft-limit}. 

    It remains to minimize the ratio above subject to $0 < \alpha_F \le \alpha_G < 1.$ Taking the derivative with respect to $\alpha_G,$ we obtain that
    \begin{align*}
        &\frac{\partial}{\partial \alpha_G}\left[\frac{12 - \alpha_F^2\alpha_G^2 + 2\alpha_F^2\alpha_G - 4\alpha_F\alpha_G + \alpha_G^2 - 10\alpha_G}{12(1 - \alpha_F\alpha_G)}\right]\\
        &\quad= \frac{(1 - \alpha_F\alpha_G)(-2\alpha_F^2\alpha_G + 2\alpha_F^2 - 4\alpha_F + 2\alpha_G - 10)}{12(1 - \alpha_F\alpha_G)^2}\\
        &\quad+ \frac{\alpha_F(12 - \alpha_F^2\alpha_G^2 + 2\alpha_F^2\alpha_G - 4\alpha_F\alpha_G + \alpha_G^2 - 10\alpha_G)}{12(1 - \alpha_F\alpha_G)^2}\\
        &\quad= \frac{(1 - \alpha_F)(-\alpha_F^2\alpha_G^2 - \alpha_F\alpha_G^2 + 2\alpha_F\alpha_G - 2\alpha_F + 2\alpha_G - 10)}{12(1 - \alpha_F\alpha_G)^2}\\
        &\quad\le \frac{(1 - \alpha_F)(-\alpha_F^2\alpha_G^2 - \alpha_F\alpha_G^2 - 2\alpha_F - 6)}{12(1 - \alpha_F\alpha_G)^2}\le 0,
    \end{align*}
    so for any fixed $\alpha_F,$ the ratio above is minimized as $\alpha_G \rightarrow 1.$ Hence, choosing $\alpha_F$ that minimizes
    \begin{align*}
        \lim_{\alpha_G \rightarrow 1} \frac{12 - \alpha_F^2\alpha_G^2 + 2\alpha_F^2\alpha_G - 4\alpha_F\alpha_G + \alpha_G^2 - 10\alpha_G}{12(1 - \alpha_F\alpha_G)} &= \frac{3 + \alpha_F^2 - 4\alpha_F}{12(1 - \alpha_F)} = \frac{3 - \alpha_F}{12},
    \end{align*}
    we take $\alpha_F \rightarrow 1.$ 
    Therefore, $\forall \epsilon > 0,$ we have that $\displaystyle \frac{\SW}{\FBW} < \frac{1}{6} + \epsilon$
    for sufficiently small $\delta$ and suitable choices of $0 < \alpha_F \le \alpha_G < 1.$ 
\end{proof}
\end{document}